\newcommand{\sgn}{\mathop{\mathrm{sgn}}}
\newcommand{\e}{\mathbb{E}}
\newcommand{\nn}{\nonumber}
\newcommand{\sn}{\sum_{i=1}^n}
\newcommand{\bdelta}{\bm{\delta}}
\newcommand{\FDP}{{\rm FDP}}
\newcommand{\AFDP}{{\rm AFDP}}
\def\##1\#{\begin{align}#1\end{align}}
\def\$#1\${\begin{align*}#1\end{align*}}
\newcommand {\cov}{\textnormal {cov}}
\newcommand {\var}{\textnormal {var}}
\def \T{\mathrm{\scriptstyle T}} 
\def\sn {\sum_{i=1}^n}
\def \wt {\widetilde}
\newcommand{\Rom}[1]{\text{\uppercase\expandafter{\romannumeral #1\relax}}}
\begin{document}


\title{FarmTest: Factor-adjusted robust multiple testing \\ with {approximate} false discovery control\thanks{Jianqing Fan is Honorary Professor, School of Data Science, Fudan University, Shanghai, China and Frederick L. Moore '18 Professor of Finance, Department of Operations Research and Financial Engineering, Princeton University, NJ 08544 (E-mail: jqfan@princeton.edu). Yuan Ke is Assistant Professor, Department of Statistics, University of Georgia, Athens, GA 30602 (E-mail: yuan.ke@uga.edu).  Qiang Sun is Assistant Professor, Department of Statistical Sciences, University of Toronto, Toronto, ON M5S 3G3, Canada (E-mail: qsun@utstat.toronto.edu).  Wen-Xin Zhou is Assistant Professor, Department of Mathematics, University of California, San Diego, La Jolla, CA 92093 (E-mail: wez243@ucsd.edu).  The bulk of the research were conducted while Yuan Ke, Qiang Sun and Wen-Xin Zhou were postdoctoral fellows at Department of Operations Research and Financial Engineering, Princeton University. This work is supported by NSERC Grant RGPIN-2018-06484, a Connaught Award, NSF Grants DMS-1662139, DMS-1712591, DMS-1811376, NIH Grant R01-GM072611, and NSFC Grant 11690014.}}

\author{Jianqing Fan, Yuan Ke, Qiang Sun, and Wen-Xin Zhou}


\date{}

\maketitle

\vspace{-0.5in}

\begin{abstract}
Large-scale multiple testing with correlated and heavy-tailed data arises in a wide range of research areas from genomics, medical imaging to finance. Conventional methods for estimating the false discovery proportion (FDP) often ignore the effect of heavy-tailedness and the dependence structure among test statistics, and thus may lead to inefficient or even inconsistent estimation.
Also, the commonly imposed  joint normality assumption is arguably too stringent for many applications.  To address these challenges, in this paper we propose a Factor-Adjusted Robust Multiple Testing ({\sl FarmTest}) procedure for large-scale simultaneous inference with control of the false discovery proportion. We demonstrate that robust factor adjustments are extremely important in both controlling the FDP and improving the power.  We identify  general conditions under which the proposed method produces consistent estimate of the FDP. As a byproduct that is of independent interest, we establish an exponential-type deviation inequality for a robust $U$-type covariance estimator under the spectral norm. Extensive numerical experiments demonstrate the advantage of the proposed method over several state-of-the-art methods especially when the data are generated from heavy-tailed distributions. The proposed procedures are implemented in the {\sf R}-package {\sf FarmTest}.
\end{abstract}

\noindent {\bf Keywords}:
Factor adjustment; False discovery proportion; Huber loss; Large-scale multiple testing;  Robustness.

\section{Introduction}
\label{sec1}

Large-scale multiple testing problems with independent test statistics have been extensively explored and is now well understood  in both practice and theory \citep{BH1995, S2002, GW2004, LR2005}.
Yet, in practice, correlation effects often exist across many observed test statistics. For instance, in neuroscience studies, although the neuroimaging data may appear very high dimensional (with millions of voxels), the effect degrees of freedom are generally much smaller, due to spatial correlation and spatial continuity \citep{medland2014whole}.  In genomic studies, genes are usually correlated regulatorily or functionally: multiple genes may belong to the same regulatory pathway or there may exist gene-gene interactions. Ignoring these dependence structures will cause loss of statistical power or lead to inconsistent estimates.

To understand the effect of dependencies on multiple testing problems, validity of standard multiple testing procedures have been  studied under weak dependencies, see  \cite{BY2001}, \cite{S2003}, \cite{STS2004}, \cite{FZ2006},  \cite{C2007}, \cite{W2008}, \cite{CH2009}, \cite{BR2009} and \cite{LS2014}, among others. For example, it has been shown that, the Benjamini-Hochberg procedure or Storey's procedure, is still able to control the false discovery rate (FDR) or false discovery proportion, when only weak dependencies are present.
Nevertheless, multiple testing under general and strong dependence structures remains a challenge.
Directly applying standard FDR controlling procedures developed for independent test statistics in this case can lead to inaccurate false discovery control and spurious outcomes.
Therefore, correlations must be accounted for in the inference procedure; see, for example, \cite{O2005}, \cite{E2007, E2010}, \cite{LS2008}, \cite{SC2009}, \cite{FKC2009}, \cite{SL2011}, \cite{FHG2012}, \cite{DS2012}, \cite{WZHO2015} and \cite{FH2017} for an unavoidably incomplete overview.

In this paper, we  focus on the case where the dependence structure can be characterized by latent factors, that is, there exist a few unobserved variables that correlate with the outcome. A multi-factor model is an effective tool for modeling  dependence, with wide applications
in genomics \citep{KSZ2006}, neuroscience \citep{PW2007} and financial economics \citep{B2003}. It relies on the identification of a linear space of random vectors capturing the dependence structure of the data. In \cite{FKC2009} and \cite{DS2012}, the authors assumed a strict factor model with independent idiosyncratic errors, and used the EM algorithm to estimate the factor loadings as well as  the realized factors. The FDP is then estimated  by subtracting out the realized common factors. \cite{FHG2012} considered a general setting for estimating the FDP, where the test statistics follow a multivariate normal distribution with an arbitrary but known covariance structure. Later,  \cite{FH2017}  used the POET estimator \citep{FLM2013} to estimate the unknown covariance matrix, and then proposed a fully data-driven estimate of the FDP. Recently, \cite{WZHO2015} considered a more complex model with both observed primary variables and unobserved latent factors.

All the methods above  assume joint normality of factors and noise, and thus methods based on least squares regression, or likelihood generally, can be applied.
However, normality is really an idealization of the complex random world. For example, the distribution of the normalized gene expressions is often far from normal, regardless of the normalization methods used \citep{PH2005}.
Heavy-tailed data also frequently appear in many other scientific fields, such as financial engineering \citep{C2001} and biomedical imaging \citep{Eklund2016}.
In finance, the seminal papers by \cite{M1963} and \cite{F1963} discussed the power law behavior of asset returns, and \cite{C2001} provided extensive evidence of heavy-tailedness in financial returns. More recently, in functional MRI studies, it has been observed by \cite{Eklund2016} that the parametric statistical methods failed to produce valid  clusterwise inference, where the principal cause is  that the spatial autocorrelation functions do not follow the assumed Gaussian shape.  The heavy-tailedness issue may further be amplified by high dimensionality in large-scale inference.
 In the context of multiple testing, as the dimension gets larger, more outliers are likely to appear, and this may lead to significant false discoveries. It is therefore imperative to develop  inferential procedures that adjust dependence and are  robust to heavy-tailedness at the same time.

In this paper, we investigate the problem of large-scale multiple testing under dependence via an approximate factor model, where the outcome variables are correlated with each other through latent factors. To simultaneously incorporate the dependencies and tackle with heavy-tailed data, we propose a factor-adjusted robust multiple  testing (FarmTest) procedure. As we proceed, we gradually unveil the whole procedure in four steps.  First, we consider an oracle factor-adjusted procedure given the knowledge of the factors and loadings, which provides the key insights into the problem. Next, using the idea of adaptive Huber regression \citep{ZBFL2017, sun2016adaptive}, we consider estimating the realized factors when the loadings were known and provide a robust control of the FDP. In the third part, we propose two robust covariance matrix estimators, a $U$-statistic based estimator and another one  based on elementwise robustification. We then apply spectral decomposition to these estimators and use principal factors to recover the factor loadings. The final part, which is provided in Appendix~A, gives  a fully data-driven testing procedure based on sample splitting: use part of the data for loading construction and the other part for simultaneous inference.


 \begin{figure}[t!]
 \centering
 \includegraphics[width=5.2 in]{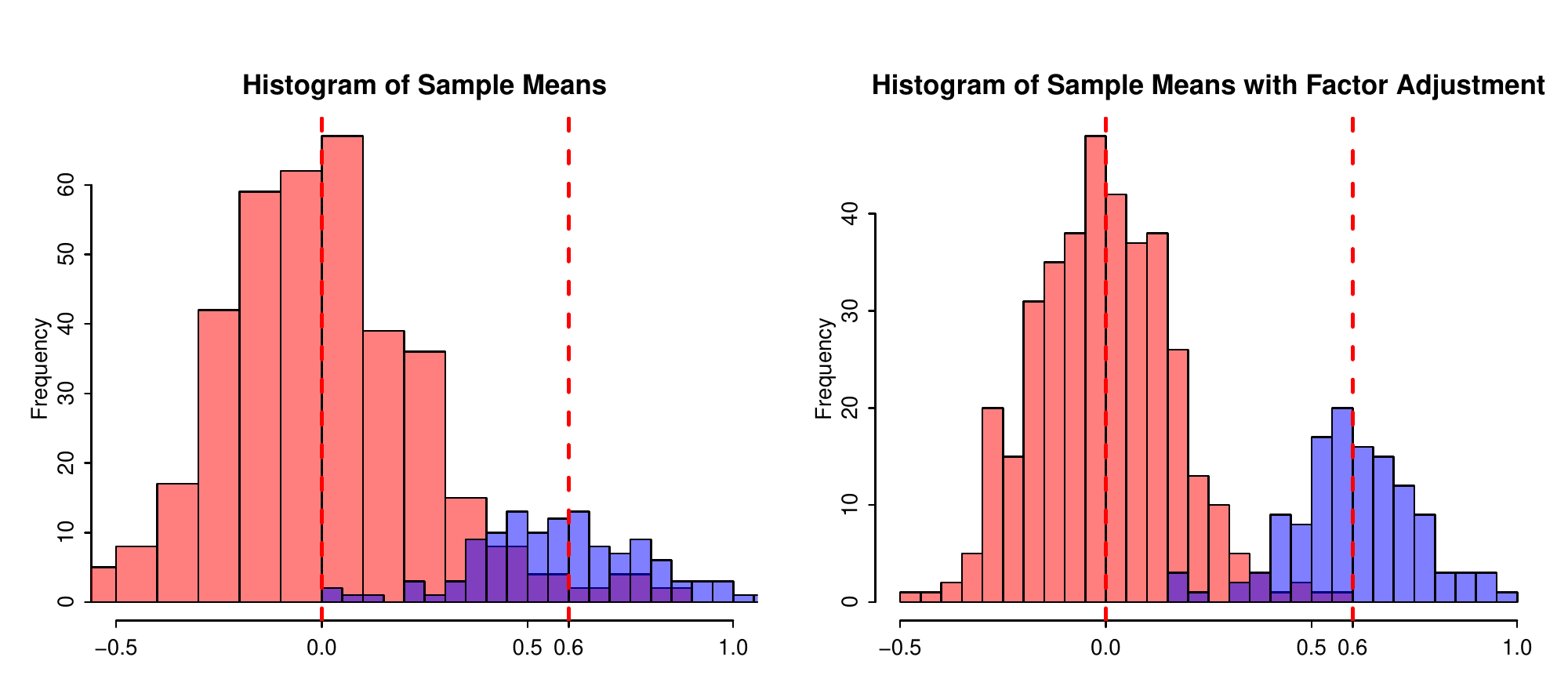}
 \includegraphics[width=5.2 in]{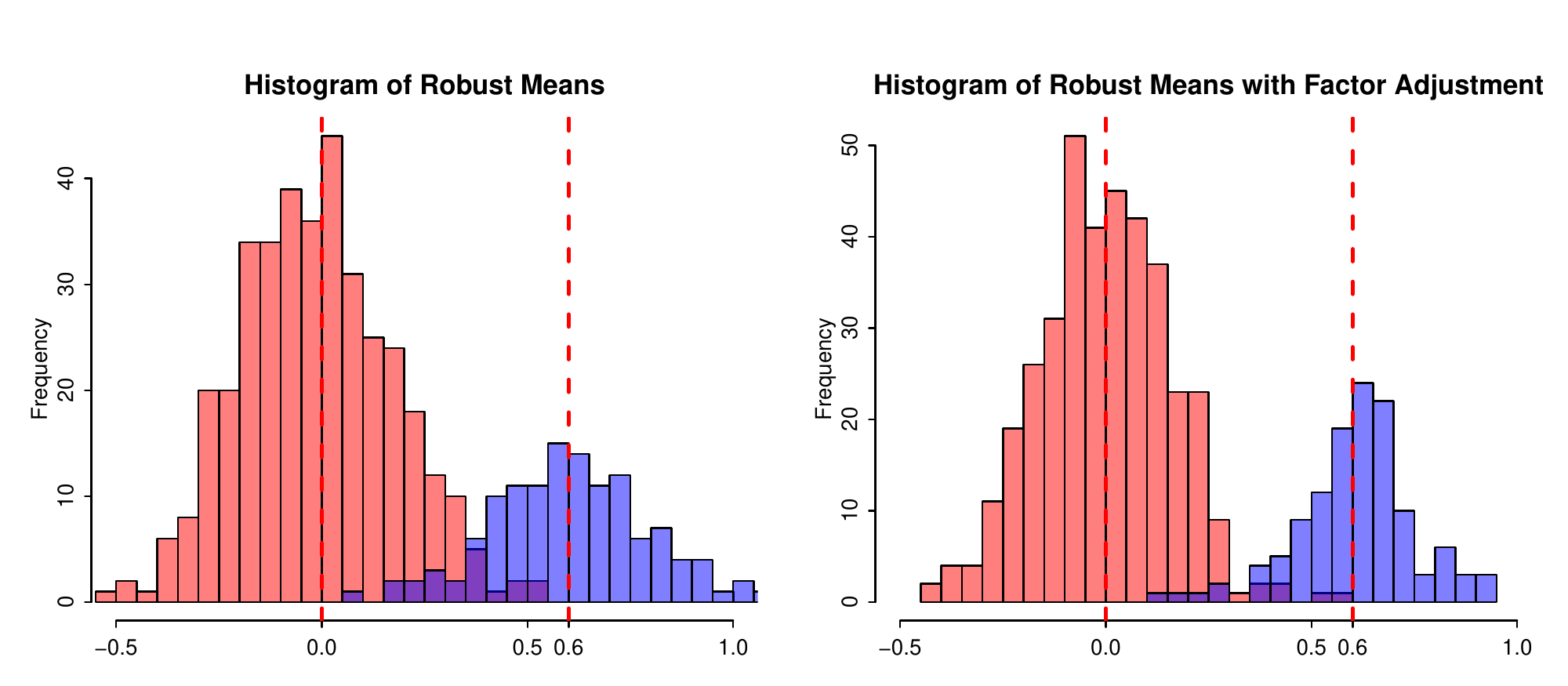}
  \caption{Histograms of four different mean estimators for simultaneous inference.}
 \label{Fig_mean}
\end{figure}

First we illustrate our methodology with a numerical example that consists of observations $\bX_i$'s generated from a three-factor model:
$$	\bX_i  = \bmu + \Bb \bbf_i + \bvarepsilon_i, ~~~ i=1, \ldots , n,  $$
where $\bbf_i \sim \cN( \textbf{0} , \Ib_3)$ and the entries of $\Bb$ are independent and identically distributed (IID) from a uniform distribution, $\mathcal{U}(-1, 1)$. The idiosyncratic errors, $\bvarepsilon_i$'s, are independently generated from the $t_3$-distribution with 3 degrees of freedom. The sample size $n$ and dimension $p$ are set to be $100$ and $500$, respectively. We take the true means to be $\mu_j=0.6$ for $1\leq j \leq 0.25\times p$ and $0$ otherwise. In Figure \ref{Fig_mean}, we plot the histograms of sample means, robust mean estimators, and their counterparts with factor-adjustment. Details of robust mean estimation and the related factor-adjusted procedure are specified in Sections~\ref{sec:method} and \ref{sec:3}. Due to the existence of latent factors and heavy-tailed errors, there is a large overlap between sample means from the null and alternative, which makes it difficult to distinguish them from each other. With the help of either robustification or factor-adjustment, the null and alternative are better separated as shown in the figure. Further, with both factor-adjustment and robustification, the resulting estimators are tightly concentrated around the true means so that the signals are evidently differentiated from the noise. This example demonstrates the effectiveness of the factor-adjusted robust multiple testing procedure.

The rest of the paper proceeds as follows. In Section~\ref{sec:method}, we describe a generic factor-adjusted robust multiple testing procedure under the approximate factor model. In Section \ref{sec:3}, we gradually unfold the proposed method, while we establish its theoretical properties along the way. Section \ref{sec:4} is devoted to simulated numerical studies. Section \ref{sec:realdata} analyzes an empirical  dataset. We conclude the paper in Section \ref{sec:discuss}. Proofs of the main theorems and technical lemmas are provided in the online supplement.

\noindent
{\sc Notation}. We  adopt the following notations throughout the paper. 
For any $d\times d$ matrix $\Ab= (A_{k\ell})_{1\leq k,\ell\leq d}$, we write $\| \Ab \|_{\max} = \max_{1\leq k,\ell\leq d} |A_{k \ell}|$, $ \| \Ab \|_{1} =  \max_{1\leq \ell \leq d} \sum_{k=1}^d |A_{k\ell}|$ and $\| \Ab \|_{\infty} =  \max_{1\leq k \leq d} \sum_{\ell=1}^d |A_{k\ell}|$. Moreover, we use $\| \Ab \|$ and ${\rm tr}(\Ab)= \sum_{k=1}^d A_{kk}$ to denote the spectral norm and the trace of $\Ab$.
When $\Ab$ is symmetric, we have $\| \Ab \|  =  \max_{1\leq k\leq d} | \lambda_k(\Ab)|$, where $\lambda_1(\Ab)\geq \lambda_2(\Ab)\geq \cdots \geq\lambda_d(\Ab)$ are the eigenvalues of $\Ab$, and it holds
$\| \Ab \|  \leq  \| \Ab \|_{1}^{1/2} \| \Ab \|_{\infty }^{1/2}  \leq  \max\{\| \Ab \|_{1} , \| \Ab \|_{\infty }\} \leq  d^{1/2} \| \Ab \|$.
We use $\lambda_{\max}(\Ab)$ and $\lambda_{\min}(\Ab)$ to denote the maximum and minimum eigenvalues of $\Ab$, respectively.
\section{FarmTest}
\label{sec:method}

In this section, we describe a generic factor-adjusted robust multiple testing procedure under the approximate factor model.

\subsection{Problem setup}
\label{sec:2.1}

Let $\bX= (X_1, \ldots, X_p)^\T$ be a $p$-dimensional random vector with mean $\bmu = (\mu_1, \ldots, \mu_p)^\T$ and covariance matrix $\bSigma = (\sigma_{jk})_{1\leq  j, k\leq p}$.  We assume the dependence structure in $\bX$ is captured by a few latent factors such that $\bX = \bmu + \Bb \bbf + \bvarepsilon$, where $\Bb = (\bb_1, \ldots, \bb_p)^\T \in \RR^{p\times K}$ is the  deterministic factor loading matrix, $\bbf=(f_{i1},\ldots, f_{iK})^\T  \in \RR^K$ is  the zero-mean latent random factor, and $\bvarepsilon = (\varepsilon_{1} , \ldots, \varepsilon_{ p})^\T \in \RR^p$  consists of  idiosyncratic errors that are uncorrelated with $\bbf$.  Suppose we observe $n$  random samples $\bX_{1}, \ldots, \bX_n$ from $\bX$, satisfying
\#\label{obs}
	\bX_i   = \bmu + \Bb \bbf_i + \bvarepsilon_i, ~~~ i=1, \ldots , n,
\#
where $\bbf_i$'s and $\bvarepsilon_i$'s are IID samples of $\bbf$ and $\bvarepsilon$, respectively. Assume that $\bbf$ and $\bvarepsilon$ have covariance matrices $\bSigma_f$ and $\bSigma_\varepsilon=(\sigma_{\varepsilon, jk})_{1\leq j, k\leq p}$.
{In addition, note that $\Bb$ and $\bbf_i$ are not separately identifiable as they both are unobserved. For an arbitrary $K \times K$ invertible matrix $\Hb$, one can choose $\Bb^*=\Bb \Hb$ and $\bbf_i^*=\Hb^{-1}\bbf_i$ such that $\Bb^* \bbf_i^*=\Bb \bbf_i$. Since $\Hb$ contains $K^2$ free parameters, we impose the following conditions to make $\Bb$ and $\bbf$ identifiable:
\#  \label{id.ass}
	\bSigma_f = \Ib_K ~~~ \textnormal{ and } ~~~  \Bb^\T \Bb  ~\textnormal{is diagonal},
\#
where the two conditions provide $K(K+1)/2$ and $K(K-1)/2$ restrictions, respectively. The choice of identification conditions is not unique. We refer to \cite{LM1971} and \cite{BL2012} for details of more identification strategies. Model \eqref{obs} with observable factors has no identification issue and is studied elsewhere \citep{ZBFL2017}.}

In this paper, we are interested in simultaneously testing the following hypotheses
\#
	H_{0j} : \mu_j =  0 \ \ \mbox{ versus } \ \  H_{1j} : \mu_j \neq 0 , \ \ \mbox{ for } 1\leq j\leq p, \label{mht}
\#
based on the observed data $\{ \bX_i \}_{i=1}^n$. Many existing works \citep[e.g.][]{FKC2009, FHG2012, FH2017} in the literature  assume multivariate normality of the idiosyncratic errors.     However,  
 the Gaussian assumption on the sampling distribution is often unrealistic in many practical applications. For each feature, the measurements across different subjects consist of samples from potentially different distributions with quite different scales, and thus can be highly skewed and heavy-tailed. In the big data regime, we are often dealing with thousands or tens of thousands of features simultaneously. Simply by chance, some variables exhibit heavy and/or asymmetric tails. As a consequence, with the number of variables grows, some outliers may turn out to be so dominant that they can be mistakenly regarded as discoveries. Therefore, it is imperative to develop robust alternatives that are insensitive to outliers and data contaminations. 

For each $1\leq j\leq p$, let $T_j$ be a generic test statistic for testing the individual hypothesis $H_{0j}$. For a prespecified thresholding level $z>0$, we reject the $j$-th null hypothesis whenever $|T_j| \geq z$. The number of total discoveries $R(z)$ and the number of false discoveries $V(z)$ can be written as
\#
	R(z) = \sum_{j=1}^p  I( |T_j | \geq z )  \ \ \mbox{ and } \ \ V(z) = \sum_{j\in \mathcal{H}_0}  I( |T_j| \geq z ), \label{R.V.def}
\#
respectively, where $\mathcal{H}_0 := \{ j : 1\leq  j\leq p, \mu_j = 0\}$ is the set of the true nulls with cardinality $p_0=| \mathcal{H}_0| =\sum_{j=1}^p I(\mu_j=0)$. We are mainly interested in controlling the false discovery proportion, $\FDP(z) = V(z)/ R(z)$ with the convention $0/0 = 0$. We remark here that $R(z)$ is observable given the data, while $V(z)$, which depends on the set of true nulls, is an unobserved random quantity that needs to be estimated. Comparing with FDR control, controlling FDP is arguably more relevant as it is directly related to the current experiment.

\subsection{A generic procedure}
\label{sec:2.2}

We now propose a \underline{F}actor-\underline{A}djusted \underline{R}obust \underline{M}ultiple \underline{Test}ing procedure, which we call FarmTest. As the name suggests, this procedure utilizes the dependence structure in $\bX$ and is robust against heavy tailedness of the error distributions.
Recent studies in \cite{FLW2017} and \cite{ZBFL2017} show that the Huber estimator \citep{Huber1964} with a properly diverging robustification parameter admits a sub-Gaussian-type deviation bound  for heavy-tailed data under mild moment conditions. This new perspective motivates new methods, as described below. To begin with, we revisit the Huber loss and  the robustification parameter. 
\begin{definition} \label{Huber.def} 
The Huber loss $\ell_\tau(\cdot)$ \citep{Huber1964} is defined as
$$
	\ell_\tau(u) =
	\begin{cases}
	 u^2 /2 ,    & \mbox{if } | u | \leq \tau ,  \\
	\tau | u | -   \tau^2 /2 ,   &  \mbox{if }  | u | > \tau ,
	\end{cases}
$$
where $\tau>0$ is refereed to as the {\it robustification parameter}  that trades bias for robustness.
\end{definition}

\begin{figure}        \centering
                \includegraphics[scale=0.4]{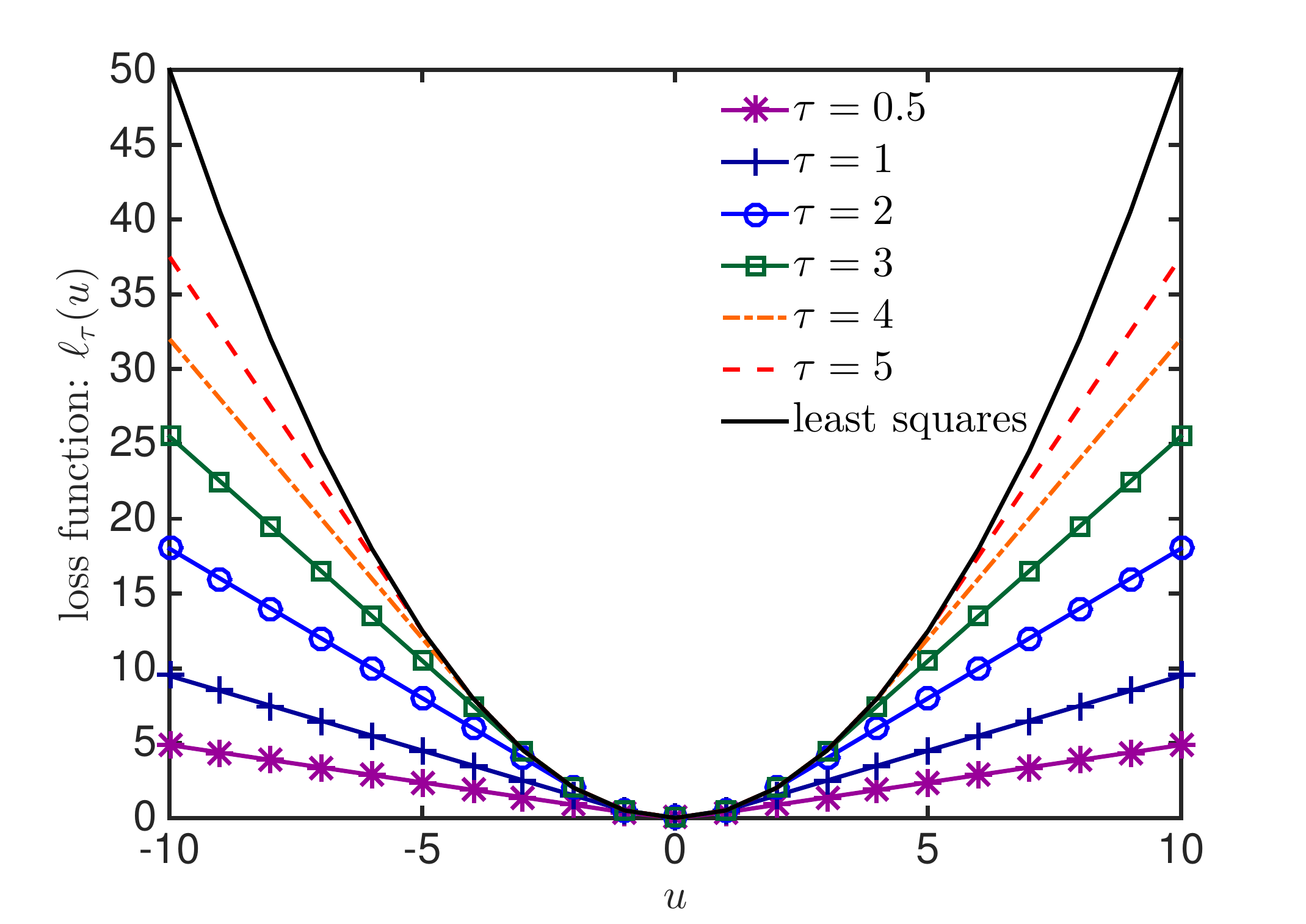}
                \vspace{-0.5cm}
                \caption{The Huber loss function $\ell_\tau(\cdot)$ with varying robustification parameters and the quadratic loss function.}
                \label{fig:histogram}
\end{figure}

We refer to the Huber loss in Definition \ref{Huber.def} above as the adaptive Huber loss to recognize the adaptivity of the robustification parameter $\tau$. For any $1\leq j\leq p$, with a  robustification parameter $\tau_j>0$, we consider the following $M$-estimator of $\mu_j$:
\#\label{rest}
\hat{\mu}_j  = \arg\min_{\theta \in \RR } \sn \ell_{\tau_j} (X_{ij} - \theta ),
\#
where we suppress the dependence of $\widehat\mu_j$ on $\tau_j$ for simplicity. As shown in our theoretical results, the parameter $\tau$ plays an important role in controlling the bias-robustness tradeoff. To guarantee the asymptotic normality of $\hat{\mu}_j$ uniformly over $j=1,\ldots p$, and to achieve optimal bias-robustness tradeoff, we choose $\tau=\tau(n,p)$ of the form $C\sqrt{{n}/{\log(n p)}}$, where the constant $C>0$ can be selected via cross-validation. We refer to Section \ref{sec:4.1} for details. Specifically, we show that  $\sqrt{n}\, (\widehat \mu_j-\bb_j^\T\bar\bbf)$ is asymptotically normal with mean $\mu_j$ and variance $\sigma_{\varepsilon, jj}$ (with details given in Appendix~B):
\begin{align}
	\sqrt{n} \, (\hat{\mu}_j - \mu_j - \bb^\T_j\bar\bbf   ) & = \cN ( 0 , \sigma_{\varepsilon, jj}  )+ o_\PP(1) ~ \mbox{ uniformly over } j =1,\ldots, p. \label{stat.dec1}
\end{align}
Here, $\hat{\mu}_j$'s can be regarded as robust versions of the sample averages $\bar{X}_j = \mu_j + \bb_j^\T \bar{\bbf} + \bar{\varepsilon}_j$, where $\bar{X}_j = n^{-1} \sn X_{ij}$ and $\bar{\varepsilon}_j = n^{-1} \sn \varepsilon_{ij}$.

Given a prespecified level $\alpha \in (0,1)$, our testing procedure consists of three steps: (i) robust estimation of the loading vectors and factors; (ii) construction of factor-adjusted marginal test statistics and their $P$-values; and (iii) computing the critical value or threshold level with the estimated FDP controlled at $\alpha$. The detailed procedure is stated below.

\begin{algorithm}
{\sc FarmTest Procedure}.

\noindent
{\bf Input}: Observed data $\bX_i = (X_{i1} ,\ldots, X_{ip})^\T \in \RR^p$ for $i=1,\ldots, n$, a prespecified level $\alpha \in (0,1)$ and an integer $K\geq 1$.

\noindent
{\bf Procedure}:

\noindent
{\sc Step~1}: Construct a robust covariance matrix estimator $\hat{\bSigma}
\in \RR^{p\times p}$  based on observed data. Let $ \hat{\lambda}_1 \geq  \hat{\lambda}_2  \geq \cdots \geq   \hat{\lambda}_K$ be the top $K$ eigenvalues of $\hat{\bSigma}$, and $ \hat{\bv}_{ 1},  \hat{\bv}_{ 2}, \ldots, \hat{\bv}_K$ be the corresponding eigenvectors. Define $\hat{\Bb} =   (  \wt{\lambda}_1^{1/2} \hat{\bv}_1 , \ldots, \wt{\lambda}_K^{1/2} \hat{\bv}_K  )  \in \RR^{p\times K}$, where $\wt{\lambda}_k = \max(\hat{\lambda}_k, 0)$. Let $\hat{\bb}_1, \ldots, \hat{\bb}_p \in \RR^K$ be the $p$ rows of $\hat{\Bb}$, and define
\#
	\hat{\bbf} \in \arg\min_{\bbf \in \RR^K } \sum_{j=1}^p \ell_\gamma( \bar{X}_j - \hat{\bb}_j^\T \bbf ), \label{factor.est}
\#
where $\gamma = \gamma(n,p)>0$ is a robustification parameter.

\noindent
{\sc Step~2}:  Construct factor-adjusted test statistics
\#
	 {T}_j = \sqrt{  \frac{n}{\hat{\sigma}_{\varepsilon, jj} } } \, (  \hat{\mu}_j - \hat{\bb}_j^\T \hat{\bbf} \, ), \ \ j=1,\ldots, p, \label{hat.Tj}
\#
where $\hat{\sigma}_{\varepsilon, jj} =  \hat{\theta}_j - \hat{\mu}_j^2 - \| \hat{\bb}_j \|_2^2$, $ \hat{\theta}_j = \arg\min_{\theta \geq \hat{\mu}_j^2 + \| \hat{\bb}_j \|_2^2 } \sn \ell_{ \tau_{jj} } (X_{ij}^2 - \theta )$, $\tau_{jj} $'s are robustification parameters and $\widehat\mu_j$'s are defined in \eqref{rest}.  Here, we use the fact that $\EE( X_j^2 ) = \mu_j^2 + \| \bb_j \|_2^2 + \var(\varepsilon_j)$, according to the identification condition.

\noindent
{\sc Step~3}: Calculate the critical value $z_\alpha$ as
\#
	z_\alpha = \inf \{ z\geq 0 : {\FDP}^{{\rm A} }(z  ) \leq \alpha \} , \label{def.zalpha}
\#
where ${\FDP}^{{\rm A} }(z  ) =    2 p \Phi(-z) /  {R}(z )$ denotes the approximate FDP and $R(z)$ is as in \eqref{R.V.def}. Finally, for $j=1,\ldots, p$, reject $H_{0j}$ whenever $|T_j | \geq z_\alpha$.

\end{algorithm}

We expect that the factor-adjusted test statistic $T_j$ given in \eqref{hat.Tj} is close in distribution to standard normal for all $j=1,\ldots, p$. Hence, according to the law of large numbers, the number of false discoveries $V(z) = \sum_{j\in \mathcal{H}_0} I(|T_j| \geq z)$ should be close to $2 p_0 \Phi(-z)$ for any $z\geq0$. The number of null hypotheses $p_0$ is typically unknown. In the high dimensional and sparse regime, where both $p$ and $p_0$ are large and $p_1 = p-p_0 = o(p)$ is relatively small,  $ \FDP^{{\rm A} }$ in \eqref{def.zalpha} serves as a slightly conservative surrogate for the asymptotic approximation $2 p_0 \Phi(-z) /  {R}(z )$. If the proportion $ \pi_0 = p_0 / p$ is bounded away from 1 as $p \to \infty$, $ \FDP^{{\rm A} }$ tends to overestimate the true FDP. The estimation of $\pi_0$ has long been known as an interesting problem. See, for example, \cite{S2002}, \cite{LL2005}, \cite{MR2006}, \cite{JC2007} and \cite{J2008}, among others. Therefore, a more adaptive method is to combine the above procedure with, for example Storey's approach, to calibrate the rejection region for individual hypotheses. Let $\{  P_j = 2\Phi(-| T_j | ) \}_{j=1}^p$ be the approximate $P$-values. For a predetermined $\eta \in [0,1)$, \cite{S2002} suggested to estimate $\pi_0$ by
\#
	\hat{\pi}_0(\eta) = \frac{1}{(1-\eta) p} \sum_{j=1}^p  I( P_j > \eta ). \label{ratio.est}
\#
The fundamental principle that underpins Storey's procedure is that most of the large $P$-values come from the true null hypotheses and thus are uniformly distributed.  For a sufficiently large $\eta$, about $(1-\eta) \pi_0$ of the $P$-values are expected to lie in $(\eta , 1]$. Therefore, the proportion of $P$-values that exceed $\eta$ should be close to $(1-\eta) \pi_0$. A value of $\eta=1/2$ is used in the SAM software \citep{ST2003}; while it was shown in \cite{BR2009} that the choice $\eta = \alpha$ may have better properties for dependent $P$-values.

Incorporating the above estimate of $\pi_0$, a modified estimate of $\FDP$  takes the form
\#
 \FDP^{ {\rm A} }(z ;  \eta ) = 2  p \,  \hat{\pi}_0(\eta)  \Phi(-z) /  R(z )   , \, z \geq 0 . \nn
\#
Finally, for any prespecified $\alpha \in (0,1)$, we reject $H_{0j}$ whenever $|  T_j | \geq  {z}_{\alpha, \eta}$, where
\begin{align}
	 {z}_{\alpha, \eta} = \inf \{ z \geq  0: \FDP^{ {\rm A} }(z ; \eta) \leq \alpha \}. \label{adaptive.zalpha}
\end{align}
By definition, it is easy to see that $ {z}_{\alpha, 0}$ coincides with $z_\alpha$ given in \eqref{def.zalpha}.

\section{Theoretical properties}
\label{sec:3}

To fully understand the impact of factor-adjustment as well as robust estimation, we successively investigate the theoretical properties of the FarmTest through several steps, starting with an oracle procedure that provides key insights into the problem.

\subsection{An oracle procedure}
\label{sec:3.1}

First we consider an oracle procedure that serves as a heuristic device. In this section, we assume the loading matrix $\Bb$ is known and the factors $\{ \bbf_i \}_{i=1}^n$ are observable. In this case, it is natural to use the factor-adjusted data:  $\bY_i = \bX_i  -  \Bb \bbf_i = \bmu +  \bvarepsilon_i$, which has smaller componentwise variances (which are  $\{ \sigma_{\varepsilon, jj} \}_{j=1}^p$ and assumed known for the moment) than those of $\bX_i$.  Thus, instead of using $\sqrt{n} \, \hat{\mu}_j$ given in \eqref{rest}, it is more efficient to construct robust mean estimates using factor-adjusted data.  This is essentially the same as using the test statistic
\begin{align}
T_j^\circ  = \sqrt{ \frac{n}{\sigma_{\varepsilon ,jj}}  }  ( \hat \mu_j - \bb_j^\T \bar{\bbf} \, ), \label{oracle.stat}
\end{align}
whose distribution is close to the standard normal distribution under the $j$-th null hypothesis.
Recall that $p_0 = |\mathcal{H}_0|$ is the number of true null hypotheses. Then, for any $z \geq  0$,
\$
	\frac{1}{p_0} V( z )  =  \frac{1}{p_0}\sum_{j \in \mathcal{H}_0}  I( |T_j^\circ | \geq z ).
\$
Intuitively,  the (conditional) law of large numbers suggests that $p_0^{-1} V(z)  =   2 \Phi(-z) + o_{\PP}(1)$.
Hence, the FDP based on oracle test statistics admits an asymptotic expression
\# \label{OFDP.def}
	{\rm AFDP}_{\rm orc}(z) =  2p_0\Phi(-z)  / R(z) , \, z \geq 0,
\#
where ``AFDP" stands for the asymptotic FDP and a subscript ``orc'' is added to highlight its role as an oracle.

\begin{remark}
 For testing the individual hypothesis $H_{0j}$, \cite{FH2017} considered the test statistic $\sqrt{n} \bar{X}_j$, where $\bar{X}_j = (1/n) \sn X_{ij}$. The empirical means, without factor adjustments, are inefficient as elucidated in Section~\ref{sec1}.  In addition, they are sensitive to the tails of error distributions \citep{C2012}. In fact, with many collected variables, by chance only, some test statistics $\sqrt{n} \bar{X}_j$ can be so large in magnitude empirically that they may be mistakenly regarded as discoveries.
\end{remark}

We will show that  $\AFDP_{\rm orc}(z)$ provides a valid asymptotic approximation of the (unknown) true $\FDP$ using oracle statistics $\{T_j^\circ\}$ in high dimensions. The latter will be denoted as $\FDP_{{\rm orc}}(z)$. Let $\Rb_\varepsilon = (r_{\varepsilon , j k})_{1\leq j,k\leq p}$ be the correlation matrix of $\bvarepsilon = ( \varepsilon_1,\ldots, \varepsilon_p)^\T$, that is, $\Rb_\varepsilon = \Db_\varepsilon^{-1} \bSigma_\varepsilon \Db_\varepsilon^{-1}$ where $\Db_\varepsilon^2 = {\rm diag}(\sigma_{\varepsilon, 11}, \ldots, \sigma_{\varepsilon, pp})$. Moreover, write
\#
	\omega_{n,p} = \sqrt{n/ \log(np)}. \label{wnp.def}
\#
We impose the following moment and regularity assumptions.

\begin{assumption} \label{cond.corr}
{\rm
(i) $p=p(n) \to \infty$ and $\log(p) = o( \sqrt{n} )$ as $n\to \infty$; (ii) $\bX \in \RR^p$ follows the approximate factor model \eqref{obs} with $\bbf$ and $\bvarepsilon$ being independent; (iii) $\EE( \bbf )= {\mathbf 0}$, $\cov(\bbf) = \Ib_K$ and $\| \bbf \|_{\psi_2} \leq A_f$ for some $A_f>0$, where $ \| \cdot \|_{\psi_2}$ denotes the vector sub-Gaussian norm \citep{V2018}; (iv) There exist constants $C_\varepsilon , c_\varepsilon >0$ such that $c_\varepsilon \leq \min_{1\leq j\leq p} \sigma_{\varepsilon , jj}^{1/2} \leq \max_{1\leq j\leq p} \upsilon_j \leq C_\varepsilon $, where $\upsilon_j :=  ( \EE   \varepsilon_j^4  )^{1/4}$; (v) There exist constants $\kappa_0 \in (0,1)$ and $\kappa_1>0$ such that $\max_{1\leq j,k\leq p} |r_{\varepsilon ,jk}| \leq \kappa_0$ and $p^{-2} \sum_{1\leq j, k \leq p} |r_{\varepsilon ,jk}|  =O(  p^{-\kappa_1})$ as $p\to \infty$. }
\end{assumption}

Part (iii) of Assumption~\ref{cond.corr} requires $\bbf \in \RR^K$ to be a sub-Gaussian random vector. Typical examples include: (1) Gaussian and
Bernoulli random vectors, (2) random vector that is uniformly distributed on the Euclidean sphere in $\RR^K$ with center at the origin and radius $\sqrt{K}$, (3) random vector that is uniformly distributed on the Euclidean ball centered at the origin
with radius $\sqrt{K}$, and (4) random vector that is uniformly distributed on
the unit cube $[-1,1]^K$. In all these cases, the constant $A_f$ is a dimension-free constant. See Section 3.4 in \cite{V2018} for detailed discussions of multivariate sub-Gaussian distributions.
Part (v) is a technical condition on the covariance structure that allows $\varepsilon_1,\ldots, \varepsilon_p$ to be weakly dependent. It relaxes the sparsity condition on the off-diagonal entries of $\bSigma_{\varepsilon}$.

\begin{theorem} \label{thm1}
Suppose that Assumption~\ref{cond.corr} holds and $p_0 \geq a p$ for some constant $a \in (0,1)$. Let $\tau_j = a_j \omega_{n,p}$ with $a_j \geq \sigma_{jj}^{1/2}$ for $j=1,\ldots, p$, where $\omega_{n,p}$ is given by \eqref{wnp.def}. Then we have
\#
p_0^{-1} V(z) &= 2 \Phi(-z) + o_{\PP}(1) \label{V.LLN}  \\
 p^{-1} R(z) &=   \frac{1}{p}  \sum_{j=1}^p  \bigg\{ \Phi\bigg( -z+  \frac{  \sqrt{n}  \mu_j}{ \sqrt{ \sigma_{\varepsilon, jj} } }  \bigg)  +     \Phi\bigg(-z - \frac{\sqrt{n} \mu_j}{\sqrt{ \sigma_{\varepsilon, jj} }}  \bigg) \bigg\}   +o_{\PP}(1)   \label{R.LLN}
\#
uniformly over $z\geq 0$ as $n, p \to \infty$. Consequently, for any $z\geq 0$,
\begin{align}
	 |  \FDP_{{\rm orc}}(z) -  \AFDP_{\rm orc}(z)   |  = o_{\PP}(1) ~\mbox{ as } n, p \to \infty. \nn
\end{align}
\end{theorem}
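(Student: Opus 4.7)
My plan is to convert the theorem into a law of large numbers for indicators of weakly dependent, approximately Gaussian statistics after extracting a uniform Bahadur representation for $\hat\mu_j$. Under Assumption~\ref{cond.corr} with $\tau_j\asymp\omega_{n,p}$, the exponential concentration inequality for the adaptive Huber estimator (the same ingredient supporting \eqref{stat.dec1}) yields, uniformly in $j$,
\[
\sqrt{n}\,(\hat\mu_j-\mu_j-\bb_j^\T\bar\bbf\,) \;=\; \frac{1}{\sqrt{n}}\sum_{i=1}^n\varepsilon_{ij}\;+\;r_j,\qquad \max_{1\le j\le p}|r_j|=o_\PP(1).
\]
Writing $Z_j:=(n\sigma_{\varepsilon,jj})^{-1/2}\sum_i\varepsilon_{ij}$ and dividing by $\sqrt{\sigma_{\varepsilon,jj}}$, the oracle statistic reads $T_j^\circ=Z_j+\sqrt{n}\,\mu_j/\sqrt{\sigma_{\varepsilon,jj}}+\tilde r_j$ with $\max_j|\tilde r_j|=o_\PP(1)$.

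For \eqref{V.LLN}, take any deterministic $\xi_n\downarrow 0$ with $\PP(\max_j|\tilde r_j|\le\xi_n)\to 1$. On that event,
\[
\sum_{j\in\mathcal{H}_0}\!I(|Z_j|\ge z{+}\xi_n)\;\le\;V(z)\;\le\;\sum_{j\in\mathcal{H}_0}\!I(|Z_j|\ge z{-}\xi_n).
\]
A Berry--Esseen bound, available from the fourth-moment condition in Assumption~\ref{cond.corr}(iv), yields $\EE\,I(|Z_j|\ge z)=2\Phi(-z)+O(n^{-1/2})$ uniformly in $j$ and $z$. For the variance, symmetry of $I(|\cdot|\ge z)$ combined with a bivariate Gaussian approximation of $(Z_j,Z_k)$ (Hermite/Mehler expansion, or a bivariate Berry--Esseen) gives $|\cov(I(|Z_j|\ge z),I(|Z_k|\ge z))|\lesssim r_{\varepsilon,jk}^2+n^{-1/2}$ uniformly over bounded $z$, so by Assumption~\ref{cond.corr}(v),
\[
\var\!\Big(\frac{1}{p_0}\!\sum_{j\in\mathcal{H}_0}\!I(|Z_j|\ge z)\Big)\;\le\;\frac{\kappa_0}{p_0^2}\sum_{j,k}|r_{\varepsilon,jk}|+O(n^{-1/2})\;=\;O(p^{-\kappa_1}+n^{-1/2}).
\]
Chebyshev plus continuity of $\epsilon\mapsto 2\Phi(-z\pm\epsilon)$ at $\epsilon=0$ closes pointwise convergence; monotonicity of $z\mapsto V(z)/p_0$ together with the continuity and monotonicity of $z\mapsto 2\Phi(-z)$ upgrades this to uniform convergence on $[0,\infty)$ via a Polya-type argument (tails $z\to\infty$ are trivial since both sides vanish). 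The proof of \eqref{R.LLN} is entirely analogous: each $T_j^\circ$ is now recentered by $\sqrt{n}\mu_j/\sqrt{\sigma_{\varepsilon,jj}}$, so its marginal indicator has mean $\Phi(-z+\sqrt{n}\mu_j/\sqrt{\sigma_{\varepsilon,jj}})+\Phi(-z-\sqrt{n}\mu_j/\sqrt{\sigma_{\varepsilon,jj}})+O(n^{-1/2})$, and the same correlation-driven variance bound applies over the full index set.

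The FDP claim then follows from the identity
\[
\FDP_{\mathrm{orc}}(z)-\AFDP_{\mathrm{orc}}(z)\;=\;\frac{V(z)-2p_0\Phi(-z)}{R(z)}\;=\;\frac{p}{R(z)}\cdot\frac{V(z)-2p_0\Phi(-z)}{p},
\]
together with $p/R(z)=O_\PP(1)$ whenever the deterministic limit in \eqref{R.LLN} is positive (using $p_0\ge ap$), while both quantities vanish by the $0/0=0$ convention otherwise. The main technical obstacle is establishing the uniform Bahadur remainder $\max_j|r_j|=o_\PP(1)$: this requires the exponential concentration of the Huber $M$-estimator for heavy-tailed errors with only a finite fourth moment, combined with a union bound over the $p$ coordinates that requires the calibration $\tau_j\asymp\sqrt{n/\log(np)}$. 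A secondary subtlety is that, because $I(|\cdot|\ge z)$ is discontinuous, converting the $\ell_1$-type correlation condition in Assumption~\ref{cond.corr}(v) into a genuine covariance bound on indicator pairs needs a quantitative bivariate CLT for the pair $(Z_j,Z_k)$ rather than a direct moment calculation.
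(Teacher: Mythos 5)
Your high-level architecture — Bahadur representation, Berry--Esseen for the marginal law, a bivariate Gaussian comparison for the indicator covariances, Chebyshev, and a Polya/Dini upgrade to uniformity — mirrors the paper's proof. However, the opening step contains a genuine gap.

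You claim the uniform coupling
\[
\sqrt{n}\,(\hat\mu_j-\mu_j-\bb_j^\T\bar\bbf\,) = \frac{1}{\sqrt{n}}\sum_{i=1}^n\varepsilon_{ij} + r_j,
\qquad
\max_{1\le j\le p}|r_j|=o_\PP(1).
\]
The Bahadur representation (Lemma~\ref{lemBR}) does give an exponentially concentrated remainder, but it gives it for the \emph{truncated} linearisation
$n^{-1/2}\sum_i\psi_{\tau_j}(\bb_j^\T\bbf_i+\varepsilon_{ij})$, not for $n^{-1/2}\sum_i\varepsilon_{ij}$. Passing from the truncated sum to the untruncated one adds, for each $j$, the error
$n^{-1/2}\sum_i\{\psi_{\tau_j}(\xi_{ij})-\xi_{ij}\}$, where $\xi_{ij}=\bb_j^\T\bbf_i+\varepsilon_{ij}$. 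Each summand is supported on $\{|\xi_{ij}|>\tau_j\}$ and can be as large as $|\xi_{ij}|$ there. With only fourth moments on $\varepsilon_j$ this quantity has variance $O(\log(np)/n)$ per coordinate, but it is \emph{not} exponentially concentrated; Chebyshev or even Rosenthal gives a tail of order $\delta^{-4}/n$ at best, and a union bound over $p$ coordinates forces $p=o(n)$, which contradicts the regime $\log p=o(\sqrt n)$ allowed by Assumption~\ref{cond.corr}(i). So the uniform statement $\max_j|r_j|=o_\PP(1)$ cannot be extracted from the available moment conditions, and the sandwich $\sum_{j\in\mathcal H_0} I(|Z_j|\ge z\pm\xi_n)$ around $V(z)$ does not follow.

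The paper avoids this by never replacing the truncated sum with the raw error sum. It keeps $S_j = n^{-1/2}\sum_i\{\psi_{\tau_j}(\xi_{ij}) - \EE_{\bbf_i}\psi_{\tau_j}(\xi_{ij})\}$ — a bounded conditionally centred sum — as the central object; the only parts that need to be uniformly small are the Bahadur remainder $R_{1j}$ (exponential tail, union-bounded with $t=\log(np)$) and the bias term $R_{2j}$ (deterministically $O(a_j^{-3}\upsilon_j^4\,n^{-1}t^{3/2})$). The reduction from $S_j$ to standard normal quantities is then done \emph{in distribution}, not by a coupling: conditioning on $\mathcal F_n=\{\bbf_1,\dots,\bbf_n\}$, applying the (univariate and bivariate Bentkus) Berry--Esseen bounds to the bounded $S_j$, and using Lemma~\ref{app.meanvar} to show the conditional variance and covariance of $S_j,S_k$ approximate $\sigma_{\varepsilon,jj}$ and $\sigma_{\varepsilon,jk}$. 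The covariance bound on the indicators then comes from the Li--Shao normal comparison inequality, giving a $|r_{\varepsilon,jk}|$ rate that feeds Assumption~\ref{cond.corr}(v). This conditioning device, which your proposal skips entirely, is what makes the argument work: unconditionally the truncated summands are not even independent across $j$-coordinates because of the common factor realisations.

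A minor secondary point: you invoke Assumption~\ref{cond.corr}(iv) to supply Berry--Esseen and $r_{\varepsilon,jk}^2$ bounds, and the Mehler/bivariate-CLT route you propose would need to deal with the same conditioning-on-$\mathcal F_n$ issue before it produces a clean indicator-covariance bound; as stated, the bivariate approximation of $(Z_j,Z_k)$ ignores the contribution of the common factor to the Berry--Esseen constants.
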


\subsection{Robust estimation of loading matrix}
\label{sec:3.2}

{To realize the oracle procedure in practice, we need to estimate the loading matrix $\Bb$ and the covariance matrix $\bSigma$, especially its diagonal entries. Before proceeding, we first investigate how these preliminary estimates affect FDP estimation. Assume at the moment that $\bar{\bbf}$ is given, let $\wt \bb_1, \ldots, \wt \bb_p$ and $\wt \sigma_{11} , \ldots, \wt \sigma_{pp}$ be generic estimates of $\bb_1,\ldots, \bb_p$ and $\sigma_{11}, \ldots, \sigma_{pp}$, respectively. In view of \eqref{id.ass}, $\sigma_{\varepsilon, jj}$ can be naturally estimated by $ \wt \sigma_{jj} - \| \wt \bb_j \|_2^2$. The corresponding FDP and its asymptotic approximation are given by
\$
	\wt \FDP(z) =  {\wt V(z)}/{ \wt R(z)}  ~\mbox{ and }~ \wt \AFDP(z)  = {2p_0 \Phi(-z)}/{ \wt R(z)} ,   \ \ z\geq 0,
\$
where $\wt V(z) = \sum_{j\in \cH_0} I(|\wt T_j| \geq z)$, $\wt R(z) = \sum_{j=1}^p I(|\wt T_j| \geq z)$ and $\wt T_j = (n/\wt \sigma_{\varepsilon, jj})^{1/2} (\hat{\mu}_j - \wt \bb_j^\T \bar{\bbf})$ for $j=1,\ldots,p$. The following proposition shows that to ensure consistent FDP approximation or furthermore estimation, it suffices to establish the uniform convergence results in \eqref{matrix.est.error} for the preliminary estimators of $\Bb$ and $\{ \sigma_{jj}\}_{j=1}^p$. Later in Section~\ref{sec:3.3.1} and \ref{sec:3.3.2}, we propose two types of robust estimators satisfying \eqref{matrix.est.error} when $p=p(n)$ is allowed to grow exponentially fast with $n$.
\begin{proposition} \label{prop.obs}
Assume the conditions of Theorem~\ref{thm1} hold and that the preliminary estimates $\{ \wt \bb_j, \wt \sigma_{jj} \}_{j=1}^p$ satisfy
\#
	\max_{1\leq j\leq p} \| \wt \bb_j - \bb_j \|_2 = o_{\PP} \{ (\log n)^{-1/2} \}, \quad  	\max_{1\leq j\leq p} | \wt \sigma_{jj} - \sigma_{jj} | = o_{\PP} \{ (\log n)^{-1/2} \}.  \label{matrix.est.error}
\#
Then, for any $z\geq 0$, $| \wt \FDP(z) - \wt \AFDP(z)| = o_{\PP}(1)$ as $n,p\to\infty$.
\end{proposition}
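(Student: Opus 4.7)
The plan is to reduce the claim to the oracle result of Theorem~\ref{thm1} by expressing $\wt T_j$ as a small perturbation of the oracle statistic $T_j^\circ$ defined in \eqref{oracle.stat}. Writing $s_j := \sigma_{\varepsilon,jj}^{1/2}$ and $\wt s_j := (\wt\sigma_{jj}-\|\wt\bb_j\|_2^2)^{1/2}$, a direct algebraic rearrangement gives
\begin{equation*}
\wt T_j \;=\; \lambda_j\,T_j^\circ \,+\, R_j, \qquad \lambda_j := s_j/\wt s_j, \qquad R_j := \sqrt n\,(\bb_j-\wt\bb_j)^\T \bar\bbf\,/\,\wt s_j.
\end{equation*}
The hypothesis \eqref{matrix.est.error}, combined with the lower bound $s_j \geq c_\varepsilon$ from Assumption~\ref{cond.corr}(iv), the standard approximate-factor bound $\|\bb_j\|_2 = O(1)$, and $\sqrt n\,\bar\bbf = O_\PP(1)$ from the sub-Gaussianity of $\bbf$, then yields the uniform rates
\begin{equation*}
\max_{1\leq j\leq p}|\lambda_j-1| \,+\, \max_{1\leq j\leq p}|R_j| \;=\; o_\PP\{(\log n)^{-1/2}\}.
\end{equation*}

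Fix $z>0$ (the case $z=0$ is immediate since $\wt V(0)=p_0$ and $\wt R(0)=p$). For $\eta\in(0,\min\{z,1/2\})$, let $\Omega_\eta := \{\max_j|\lambda_j-1|\vee\max_j|R_j|\leq\eta\}$, so that $\PP(\Omega_\eta)\to 1$. On $\Omega_\eta$, the elementary bound $|\wt T_j|\in[(1-\eta)|T_j^\circ|-\eta,\,(1+\eta)|T_j^\circ|+\eta]$ produces the multiplicative sandwich
\begin{equation*}
\bigl\{|T_j^\circ|\geq (z+\eta)/(1-\eta)\bigr\}\;\subseteq\;\bigl\{|\wt T_j|\geq z\bigr\}\;\subseteq\;\bigl\{|T_j^\circ|\geq (z-\eta)/(1+\eta)\bigr\}.
\end{equation*}
Writing $V^\circ(z):=\sum_{j\in\cH_0}I(|T_j^\circ|\geq z)$, this gives $|\wt V(z)-V^\circ(z)|\leq\sum_{j\in\cH_0}I(|T_j^\circ|\in I_\eta)$ on $\Omega_\eta$, with $I_\eta := [(z-\eta)/(1+\eta),(z+\eta)/(1-\eta)]$. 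Applying \eqref{V.LLN} at the two endpoints of $I_\eta$,
\begin{equation*}
p_0^{-1}\sum_{j\in\cH_0} I(|T_j^\circ|\in I_\eta) \;=\; 2\Phi\bigl(-(z-\eta)/(1+\eta)\bigr)-2\Phi\bigl(-(z+\eta)/(1-\eta)\bigr) + o_\PP(1),
\end{equation*}
whose deterministic part vanishes as $\eta\downarrow 0$. A standard $\epsilon$--$\eta$ diagonal argument, combined with $\PP(\Omega_\eta)\to 1$, then delivers $p_0^{-1}|\wt V(z)-V^\circ(z)|=o_\PP(1)$; another invocation of \eqref{V.LLN} gives $\wt V(z)=2p_0\Phi(-z)+o_\PP(p)$.

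The denominator requires only a lower bound: from $\wt R(z)\geq\wt V(z)$ and the assumption $p_0\geq ap$ one obtains $\wt R(z)\geq 2ap\,\Phi(-z)(1+o_\PP(1))$, which exceeds a positive multiple of $p$ with probability tending to $1$ for any fixed $z<\infty$. Therefore
\begin{equation*}
\wt\FDP(z)-\wt\AFDP(z) \;=\; \frac{\wt V(z)-2p_0\Phi(-z)}{\wt R(z)} \;=\; o_\PP(1),
\end{equation*}
as claimed.

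The step I expect to be most delicate is the control of the multiplicative contribution $(\lambda_j-1)T_j^\circ$. While the additive remainder $R_j$ is uniformly $o_\PP\{(\log n)^{-1/2}\}$, the quantity $\max_{j\in\cH_0}|T_j^\circ|=O_\PP(\sqrt{\log p})$ can easily exceed $\sqrt{\log n}$ under the regime $\log p=o(\sqrt n)$ permitted by Assumption~\ref{cond.corr}(i), so one cannot hope for a \emph{uniform} additive bound $\max_j|\wt T_j-T_j^\circ|=o_\PP(1)$. The remedy sketched above sidesteps any such uniform additive bound by exploiting the multiplicative sandwich instead: the indicator mismatch is confined to a \emph{relative} $O(\eta)$-band around $z$, whose null mass vanishes with $\eta$ thanks to the oracle CLT \eqref{V.LLN}.
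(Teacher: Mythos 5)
Your proof is correct, and it takes a genuinely different route from the paper's. The paper establishes a uniform \emph{additive} bound $\max_{j\in\cH_0}|\wt T_j - T_j^\circ| = o_\PP(1)$ by expanding $1/\sqrt{\wt\sigma_{\varepsilon,jj}}-1/\sqrt{\sigma_{\varepsilon,jj}}$ and invoking high-probability bounds on $\max_j|\hat\mu_j-\mu_j|$ and $\|\sqrt n\,\bar\bbf\|_2$, then plugs this into the sandwich step inside the proof of Theorem~\ref{thm1}. You instead write $\wt T_j = \lambda_j T_j^\circ + R_j$ and deploy a \emph{multiplicative} sandwich, applying the uniform-in-$z$ law of large numbers \eqref{V.LLN} at the endpoints of a shrinking relative band and then taking $\eta\downarrow 0$. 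The gain is exactly what you flag: under the permitted regime $\log p = o(\sqrt n)$ one can have $\max_{j\in\cH_0}|T_j^\circ| \asymp \sqrt{\log p}\gg\sqrt{\log n}$, so the term $(\lambda_j-1)T_j^\circ$ does not admit a uniform $o_\PP(1)$ bound from the hypothesis $\max_j|\lambda_j-1|=o_\PP\{(\log n)^{-1/2}\}$ alone; the paper's stated bound carries only a $(K+\log n)^{1/2}$ factor where a $\{\log(np)\}^{1/2}$ factor actually arises, which is a point your approach sidesteps cleanly. Both methods rely on the same anticoncentration input, namely the uniform-in-$z$ statement of \eqref{V.LLN}, so neither requires genuinely new probabilistic machinery.

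Two minor points worth tightening if you write this up formally. First, in the rate claim you state $\max_j|R_j| = o_\PP\{(\log n)^{-1/2}\}$, which is what you get from $\|\sqrt n\,\bar\bbf\|_2 = O_\PP(1)$; but your argument only needs $\max_j|R_j| = o_\PP(1)$, so you could equally invoke the high-probability bound $\|\sqrt n\,\bar\bbf\|_2\lesssim (K+\log n)^{1/2}$ from Lemma~\ref{factor.concentration} and still finish. Second, when you subtract indicator counts to bound the mass of $\{|T_j^\circ|\in I_\eta\}$ via \eqref{V.LLN} at the two endpoints, you should note the half-open/closed convention (or simply widen $I_\eta$ slightly) to avoid a spurious equality event at the upper endpoint; it does not affect the conclusion.
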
}

Next we focus on estimating $\Bb$ under identification condition \eqref{id.ass}. Write $\Bb=(\bar\bb_1, \ldots, \bar \bb_K)$ and assume without loss of generality that $\bar\bb_1, \ldots, \bar \bb_K \in \RR^p$ are ordered such that $\{ \| \bar \bb_\ell \|_2 \}_{\ell =1}^K$ is in a non-increasing order. In this notation, we have $\bSigma =  \sum_{\ell=1}^K \bar{\bb}_\ell \bar{\bb}_\ell^\T + \bSigma_\varepsilon$, and $\bar{\bb}_{\ell_1}^\T \bar{\bb}_{\ell_2} =0$ for $1\leq \ell_1 \neq \ell_2 \leq K$. Let $\lambda_1 , \ldots, \lambda_p$ be the eigenvalues of $\bSigma$ in a descending order, with associated eigenvectors denoted by $\bv_1,\ldots, \bv_p \in \RR^p$. By Weyl's theorem,
\begin{align}
	 | \lambda_j - \| \bar{\bb}_j  \|_2^2 | \leq \| \bSigma_\varepsilon \| ~\mbox{ for } 1\leq j\leq K ~\mbox{ and }~ | \lambda_j | \leq \| \bSigma_\varepsilon \| ~\mbox{ for } j >K. \nn
\end{align}
Moreover, under the pervasiveness condition (see Assumption~\ref{cond.pervasive} below), the eigenvectors $\bv_j$ and $\bar{\bb}_j/ \| \bar{\bb}_j \|_2$ of $\bSigma$  and $\Bb \Bb^\T$, respectively, are close to each other for $1\leq j\leq K$. The estimation of $\Bb$ thus depends heavily on estimating $\bSigma$ along with its eigenstructure.

In Sections~\ref{sec:3.3.1} and \ref{sec:3.3.2}, we propose two different robust covariance matrix estimators that are also of independent interest. The construction of $\hat{\Bb}$ then follows from Step 1 of the FarmTest procedure described in Section~\ref{sec:2.2}.

\subsubsection{$U$-type covariance estimation}
\label{sec:3.3.1}

First, we propose a $U$-type covariance matrix estimator, which leads to estimates of  the unobserved factors under condition \eqref{id.ass}. 
Let $\psi_\tau(\cdot)$ be the derivative of $\ell_\tau(\cdot)$ given by
\$
\psi_\tau(u) = \min(|u|, \tau ) \sign (u), \ \ u\in \RR.
\$
Given $n$ real-valued random variables $X_1,\ldots, X_n$ from $X$ with mean $\mu$, a fast and robust estimator of $\mu$ is given by $\widehat\mu_\tau = (1/n)\sum_{i=1}^n \psi_\tau(X_i)$.  \cite{M2016} extended this univariate estimation scheme to matrix settings based on the following definition on matrix functionals.

\begin{definition} \label{def.matrix.functional}
Given a real-valued function $f$ defined on $\RR$ and a symmetric $\Ab\in \RR^{d\times d}$ with eigenvalue decomposition $\Ab=\Ub\bLambda \Ub^\T$ such that $\bLambda=\textnormal{diag}(\lambda_1,\ldots, \lambda_d)$, $f(\Ab)$ is defined as $f(\Ab)=\Ub f(\bLambda)\Ub^\T$, where $f(\bLambda)=\textnormal{diag}  (f(\lambda_1), \ldots,f(\lambda_d)  )$.
\end{definition}

Suppose we observe $n$ random samples $\bX_1,\ldots, \bX_n$ from $\bX$ with mean $\bmu$ and covariance matrix $\bSigma=\EE \{ (\bX-\bmu)(\bX-\bmu)^\T \}$. If $\bmu$ were known, a robust estimator of $\bSigma$ can be simply constructed by $(1/n)\sum_{i=1}^n\psi_\tau \{(\bX_i-\bmu)(\bX_i-\bmu)^\T \}$. In practice, the assumption of a known $\bmu$ is often unrealistic.  Instead, we suggest to estimate $\bSigma$ using the following $U$-statistic based estimator:
\$
  \widehat\bSigma_{U}(\tau) =\frac{ 1}{\binom{n}{2}}\sum_{ 1\leq i <  j\leq n}\psi_\tau\bigg\{\frac{1}{2}(\bX_i -\bX_j)(\bX_i -\bX_j)^\T\bigg\}.
\$
Observe that $(\bX_i-\bX_j)(\bX_i- \bX_j)^\T$ is a rank one matrix with eigenvalue $\|  \bX_i-\bX_j \|_2^2$ and eigenvector $(\bX_i-\bX_j)/\| \bX_i-\bX_j \|_2$. Therefore, by Definition~\ref{def.matrix.functional}, $ \widehat\bSigma_{U}(\tau)$ can be equivalently written as
\#\label{u-type_cov}
\frac{ 1}{ \binom{n}{2}}\sum_{1\leq i < j\leq n}  \psi_\tau\bigg(\frac{1}{2} \| \bX_i -\bX_j \|_2^2 \bigg) \frac{(\bX_i-\bX_j)(\bX_i-\bX_j)^\T}{ \|  \bX_i-\bX_j  \|_2^2}.
\#
This alternative expression makes it much easier to compute. The following theorem provides an exponential-type deviation inequality for $\widehat\bSigma_U(\tau) $, representing a useful complement to the results in \cite{M2016}. See, for example, Remark~8 therein.

\begin{theorem}\label{thm:u-type}
Let
\#
v^2 = \frac{1}{2}\Big\| \EE \{(\bX-\bmu)(\bX-\bmu)^\T \}^2+\tr (\bSigma)\bSigma+2\bSigma^2 \Big\|.  \label{v2.def}
\#
For any $t>0$, the estimator $\widehat\bSigma_U = \widehat\bSigma_U(\tau)$ with $\tau \geq  (v/2)(n/t)^{1/2}$ satisfies
\$
 \PP \{ \|\widehat\bSigma_U -\bSigma\| \geq  4 v (t/n)^{1/2} \} \leq 2p \exp(-t).
\$
\end{theorem}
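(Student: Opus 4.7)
The strategy is to decompose
$$
\widehat\bSigma_U-\bSigma \;=\; (\widehat\bSigma_U-\bar\bSigma)\;+\;(\bar\bSigma-\bSigma),\qquad \bar\bSigma:=\EE\psi_\tau(\bm{Y}_{12}),
$$
with $\bm{Y}_{ij}:=\tfrac12(\bX_i-\bX_j)(\bX_i-\bX_j)^\T$, and to bound each piece by roughly $2v(t/n)^{1/2}$. The enabling observation is that every $\bm{Y}_{ij}$ is rank one and positive semi-definite, with non-zero eigenvalue $\lambda_{ij}=\tfrac12\|\bX_i-\bX_j\|_2^2$ and unit eigenvector $\bm{q}_{ij}=(\bX_i-\bX_j)/\|\bX_i-\bX_j\|_2$. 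Definition~\ref{def.matrix.functional} then reduces every matrix-functional question about $\psi_\tau(\bm{Y}_{ij})$ to a scalar question about the Huber transform of $\lambda_{ij}$, in particular $\psi_\tau(\bm{Y}_{ij})=\min(\lambda_{ij},\tau)\bm{q}_{ij}\bm{q}_{ij}^\T$ and $\bm{Y}_{ij}-\psi_\tau(\bm{Y}_{ij})=(\lambda_{ij}-\tau)_+\bm{q}_{ij}\bm{q}_{ij}^\T$.

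\textbf{Bias step.} Combining the scalar envelope $(\lambda-\tau)_+\le\lambda^2/\tau$ on $\lambda\ge 0$ with the rank-one identity $\lambda_{12}^2(\bm{u}^\T\bm{q}_{12})^2=\bm{u}^\T\bm{Y}_{12}^{\,2}\bm{u}$ gives, for every unit $\bm{u}$,
$$
|\bm{u}^\T(\bar\bSigma-\bSigma)\bm{u}|\;=\;\EE\bigl[(\lambda_{12}-\tau)_+(\bm{u}^\T\bm{q}_{12})^2\bigr]\;\le\;\frac{1}{\tau}\,\bm{u}^\T \EE\bm{Y}_{12}^{\,2}\,\bm{u}\;\le\;\|\EE\bm{Y}_{12}^{\,2}\|/\tau.
$$
A direct moment expansion of $\EE\bm{Y}_{12}^{\,2}=\tfrac14\EE[\|\bX_1-\bX_2\|_2^2(\bX_1-\bX_2)(\bX_1-\bX_2)^\T]$ using independence of $\bX_1,\bX_2$ and $\EE(\bX-\bmu)=\bm{0}$ reproduces exactly the matrix appearing in the definition of $v^2$, so $\|\bar\bSigma-\bSigma\|\le v^2/\tau\le 2v(t/n)^{1/2}$ under the prescribed $\tau\ge(v/2)(n/t)^{1/2}$.

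\textbf{Concentration step.} For the stochastic term, invoke Hoeffding's classical average-of-averages representation of a $U$-statistic:
$$
\widehat\bSigma_U=\frac{1}{n!}\sum_{\pi}\widehat\bSigma_\pi,\qquad \widehat\bSigma_\pi:=\frac{1}{m}\sum_{k=1}^m\psi_\tau\bigl(\bm{Y}_{\pi(2k-1),\pi(2k)}\bigr),\qquad m=\lfloor n/2\rfloor.
$$
Within each fixed permutation $\pi$ the $m$ summands are IID copies of $\psi_\tau(\bm{Y}_{12})$, so the matrix Bernstein inequality applies with uniform bound $\|\psi_\tau(\bm{Y}_{12})-\bar\bSigma\|\le 2\tau$ and matrix variance proxy $\|\EE\psi_\tau(\bm{Y}_{12})^2\|\le\|\EE\bm{Y}_{12}^{\,2}\|=v^2$; the second inequality uses the pointwise rank-one comparison $\psi_\tau(\bm{Y}_{12})^2\preceq\bm{Y}_{12}^{\,2}$. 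The prescribed $\tau\asymp v\sqrt{n/t}$ is exactly the scale that balances the sub-Gaussian and sub-exponential regimes of Bernstein, yielding a bound of the form $\PP\bigl(\|\widehat\bSigma_\pi-\bar\bSigma\|\ge 2v(t/n)^{1/2}\bigr)\le 2p\,e^{-t}$. To transfer this tail from an individual $\widehat\bSigma_\pi$ to the permutation-average $\widehat\bSigma_U$, exploit convexity of $\exp(\lambda\|\cdot\|)$ together with identical distribution across $\pi$: Jensen dominates the MGF of $\|\widehat\bSigma_U-\bar\bSigma\|$ by that of $\|\widehat\bSigma_{\pi_0}-\bar\bSigma\|$ for any fixed $\pi_0$, and Chernoff's inequality produces the same tail for $\widehat\bSigma_U-\bar\bSigma$. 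Adding the bias via the triangle inequality delivers the stated $4v(t/n)^{1/2}$ deviation with probability at least $1-2p\,e^{-t}$.

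\textbf{Main obstacle.} The delicate point is handling the matrix Huber transform cleanly: $\psi_\tau$ is neither operator-monotone nor operator-convex, so one cannot argue at the level of general PSD inequalities. The rank-one structure of $\bm{Y}_{ij}$ is essential in every step — it is precisely what reduces the comparisons $\psi_\tau(\bm{Y})\preceq\bm{Y}$ and $\psi_\tau(\bm{Y})^2\preceq\bm{Y}^2$, the bias envelope, and the identification $\|\EE\bm{Y}_{12}^{\,2}\|=v^2$ to scalar statements about $\lambda_{ij}$. A secondary technical nuisance is bookkeeping the absolute constants through Hoeffding's decoupling and the Bernstein exponent carefully enough that the final tail displays the unadorned $-t$ rather than $-c\,t$ for some absolute $c$, which forces the particular numeric relationship between the $1/2$ in $\tau\ge(v/2)(n/t)^{1/2}$ and the $4$ in $4v(t/n)^{1/2}$.
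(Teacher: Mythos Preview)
Your proof takes a genuinely different route from the paper's. Both arguments share Hoeffding's average-of-averages representation of the $U$-statistic, but diverge thereafter.

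The paper does \emph{not} separate bias from concentration. It works directly with $\widehat\bSigma_U-\bSigma$ (centered at the target, not at $\bar\bSigma=\EE\psi_\tau(\bm Y_{12})$), applies convexity of $\lambda_{\max}$ and of $\exp$ to pass to a single permutation, and then uses the scalar inequality $\psi_1(x)\le\log(1+x+x^2)$ lifted to matrices via Definition~\ref{def.matrix.functional} together with Lieb's concavity theorem to peel off the IID summands one at a time. The log-inequality absorbs the first-moment term $\EE H_{\pi j}/\tau=\bSigma/\tau$ exactly (this is what would have been your bias) and leaves only $\EE H_{\pi j}^2/\tau^2$, whose spectral norm is $v^2/\tau^2$; optimizing the Chernoff parameter then yields the clean constants $4$ and $-t$. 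No rank-one structure is used anywhere except in the side computation identifying $\|\EE h^2(\bX_1,\bX_2)\|=v^2$.

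Your route --- rank-one envelope for the bias $\|\bar\bSigma-\bSigma\|\le v^2/\tau$, then matrix Bernstein on $\widehat\bSigma_\pi-\bar\bSigma$, then Jensen/Chernoff transfer to $\widehat\bSigma_U$ --- is conceptually sound and more modular, but the constants do not close. With variance proxy $v^2$, uniform bound $L=2\tau=v\sqrt{n/t}$ at the minimal admissible $\tau$, target deviation $s=2v\sqrt{t/n}$, and effective sample size $m\approx n/2$, the Bernstein exponent is $ms^2/\{2(\sigma^2+Ls/3)\}=v^2t/(5v^2/3)=3t/5$, not $t$. You flag this as a nuisance, but it is structural: matrix Bernstein applied to truncated summands cannot recover the Catoni-sharp exponent because the uniform bound $2\tau$ always enters the sub-exponential term at order one. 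The paper's log-inequality sidesteps this by never invoking a uniform bound on $\psi_\tau(H)$. A second small gap: your Jensen transfer requires an MGF bound on $\|\widehat\bSigma_{\pi_0}-\bar\bSigma\|$, which is not the output of matrix Bernstein but an ingredient inside its proof (the Lieb-based bound on $\EE\tr\exp$); once you spell that out you are essentially back to the paper's argument, only with the less efficient centering.
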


Given $\widehat\bSigma_{U}$, we can construct an estimator of $\Bb$ following Step~1 of the FarmTest procedure. Recall that $\hat{\bb}_1, \ldots, \hat{\bb}_p$ are the $p$ rows of $\hat{\Bb}$. To investigate the consistency of $\hat{\bb}_j$'s, let $\overline{\lambda}_1,\ldots, \overline{\lambda}_K$ be the top $K$ (nonzero) eigenvalues of $\Bb \Bb^\T$ in a descending order and $\overline{\bv}_1, \ldots, \overline{\bv}_K$ be the corresponding eigenvectors. Under identification condition \eqref{id.ass}, we have $\overline{\lambda}_\ell = \|\overline{\bb}_\ell \|_2^2$ and $\overline{\bv}_\ell = \overline{\bb}_\ell/ \|\overline{\bb}_\ell \|_2$ for $\ell =1,\ldots, K$.

\begin{assumption}[\sf Pervasiveness] \label{cond.pervasive}
{\rm
There exist positive constants $c_1$, $c_2$ and $c_3$ such that $c_1 p \leq \overline{\lambda}_{\ell} - \overline{\lambda}_{\ell+1} \leq c_2 p$ for $\ell =1,\ldots, K$ with $\overline{\lambda}_{K+1} := 0$, and $\| \bSigma_\varepsilon \| \leq c_3 <  \overline{\lambda}_K$.}
\end{assumption}

\begin{remark}
The pervasiveness condition is required for high dimensional spiked covariance model with the first
several eigenvalues well separated and significantly larger than the rest.  In particular, Assumption \ref{cond.pervasive} requires the top $K$ eigenvalues grow linearly with the dimension $p$. The corresponding eigenvectors can therefore be consistently estimated as long as sample size diverges \citep{FLM2013}. This condition is widely assumed in the literature \citep{{SW02},Bai_Ng_02}. The following proposition provides convergence rates of the robust estimators  $\{ \hat{\lambda}_\ell , \hat{\bv}_\ell \}_{\ell =1}^K$ under Assumption \ref{cond.pervasive}. The proof, which is given in in Appendix~D, is based on Weyl's inequality and a useful variant of the Davis-Kahan theorem \citep{YWS15}. We notice that some preceding works \citep{Ona2012, Shen_16,WF2017} have provided similar results under a weaker pervasiveness assumption which allows $p/n \rightarrow \infty$ in any manner and the spiked eigenvalues $\{ \overline{\lambda}_\ell \}_{\ell=1}^K$ are allowed to grow slower than $p$ so long as $c_\ell = p/(n \overline{\lambda}_\ell )$ is bounded.
\end{remark}

\begin{proposition} \label{prop1}
Under Assumption~\ref{cond.pervasive}, we have
\begin{gather}
	\max_{1\leq \ell \leq K }   |  \hat{\lambda}_\ell    -  \overline{\lambda}_\ell   |   \leq  \| \hat{\bSigma}_U - \bSigma  \|  +  \| \bSigma_\varepsilon \|  ~~\textnormal{and} \label{eigenvalue.bound} \\
	 \max_{1\leq \ell \leq K }    \| \hat{\bv}_\ell - \overline{\bv}_\ell  \|_2   \leq C  p^{-1}  (  \| \hat{\bSigma}_U - \bSigma  \|  +  \| \bSigma_\varepsilon  \|   ) , \label{eigenvector.bound}
\end{gather}
where $C >0$ is a constant independent of $(n,p)$.
\end{proposition}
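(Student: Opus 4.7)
My plan is to reduce the question to a single spectral-norm perturbation of the low-rank matrix $\Bb\Bb^\T$ and then invoke two classical matrix-perturbation tools: Weyl's inequality for the eigenvalues and the Davis--Kahan $\sin\Theta$ theorem (in the Yu--Wang--Samworth 2015 formulation cited in the remark) for the eigenvectors. The key identity is
$$
\bSigma \;=\; \Bb\Bb^\T \,+\, \bSigma_\varepsilon,
$$
so the triangle inequality immediately yields
$$
\| \hat{\bSigma}_U - \Bb\Bb^\T \| \;\le\; \| \hat{\bSigma}_U - \bSigma \| \,+\, \| \bSigma_\varepsilon \|,
$$
which is precisely the right-hand side appearing in both \eqref{eigenvalue.bound} and \eqref{eigenvector.bound}. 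Thus it suffices to bound each of $|\hat{\lambda}_\ell - \overline{\lambda}_\ell|$ and $\|\hat{\bv}_\ell - \overline{\bv}_\ell\|_2$ by a multiple of $\|\hat{\bSigma}_U - \Bb\Bb^\T\|$ (with the appropriate eigengap factor).

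For \eqref{eigenvalue.bound}, since both $\hat{\bSigma}_U$ and $\Bb\Bb^\T$ are symmetric $p \times p$ matrices, Weyl's inequality gives $|\lambda_\ell(\hat{\bSigma}_U) - \lambda_\ell(\Bb\Bb^\T)| \le \|\hat{\bSigma}_U - \Bb\Bb^\T\|$ uniformly in $\ell$, where eigenvalues are listed in non-increasing order. Because $\Bb\Bb^\T$ has rank at most $K$, its top $K$ ordered eigenvalues are exactly $\overline{\lambda}_1,\ldots,\overline{\lambda}_K$, while $\lambda_\ell(\hat{\bSigma}_U) = \hat{\lambda}_\ell$ by definition. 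Maximizing over $\ell \le K$ and combining with the triangle inequality above closes the argument.

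For \eqref{eigenvector.bound}, I apply the YWS15 variant of Davis--Kahan to the same pair $(\hat{\bSigma}_U, \Bb\Bb^\T)$: up to a sign choice, $\|\hat{\bv}_\ell - \overline{\bv}_\ell\|_2 \le 2^{3/2}\|\hat{\bSigma}_U - \Bb\Bb^\T\| / \delta_\ell$, where the effective gap is $\delta_\ell = \min\{\overline{\lambda}_{\ell-1}-\overline{\lambda}_\ell,\, \overline{\lambda}_\ell - \overline{\lambda}_{\ell+1}\}$ with $\overline{\lambda}_0 := +\infty$. Under Assumption~\ref{cond.pervasive}, every consecutive spike gap satisfies $\overline{\lambda}_\ell - \overline{\lambda}_{\ell+1} \ge c_1 p$ for $1\le\ell\le K$ (with $\overline{\lambda}_{K+1} = 0$, so the bottom gap is $\overline{\lambda}_K \ge c_1 p$ by telescoping), hence $\delta_\ell \ge c_1 p$ throughout. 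Plugging in and using the triangle inequality once more yields the claim with $C = 2^{3/2}/c_1$.

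The main subtlety, and the reason one needs the YWS15 version rather than a raw projector bound, is the sign indeterminacy of individual eigenvectors: Davis--Kahan for a single eigenvector only controls $\min_{s\in\{\pm 1\}} \|s\hat{\bv}_\ell - \overline{\bv}_\ell\|_2$. I therefore redefine each $\hat{\bv}_\ell$ to carry the sign attaining this minimum, which is harmless downstream because the quantities the FarmTest procedure actually uses, namely $\hat{\bb}_j^\T \hat{\bbf}$, are invariant under a sign flip applied simultaneously to the $\ell$-th column of $\hat{\Bb}$ and the $\ell$-th entry of $\hat{\bbf}$ in \eqref{factor.est}. A secondary bookkeeping point is handling the endpoints $\ell=1$ (no upper spike, so only the right gap matters) and $\ell=K$ (right gap equals $\overline{\lambda}_K$), but both are absorbed into the uniform lower bound $\delta_\ell \ge c_1 p$. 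Beyond these observations, no further computation is needed.
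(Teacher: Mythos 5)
Your proof is correct and follows essentially the same route as the paper: you decompose $\hat{\bSigma}_U$ as a perturbation of $\Bb\Bb^\T$ with perturbation bounded by $\|\hat{\bSigma}_U - \bSigma\| + \|\bSigma_\varepsilon\|$, then apply Weyl's inequality for the eigenvalues and Corollary~1 of Yu--Wang--Samworth (2015) for the eigenvectors, using the $\Theta(p)$ eigengaps from Assumption~\ref{cond.pervasive}. Your explicit handling of the sign ambiguity and the explicit constant $C = 2^{3/2}/c_1$ are welcome details, though the parenthetical remark that $\overline{\lambda}_K \ge c_1 p$ follows ``by telescoping'' is a slight misstatement — it is immediate from the assumption applied at $\ell = K$ with $\overline{\lambda}_{K+1} := 0$.
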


We now show the properties of estimated loading vectors and estimated residual variances $\{ \hat \sigma_{\varepsilon , jj }\}_{j=1}^p$ that are defined below \eqref{hat.Tj}.

\begin{theorem} \label{thm3}
Suppose  Assumption~\ref{cond.corr}(iv) and Assumption~\ref{cond.pervasive} hold.  Let  $\tau = v_0 \omega_{n,p}$ with $v_0 \geq v/2$ for $v$ given in \eqref{v2.def}. Then, with probability at least $1-2n^{-1}$,
\#
 \max_{1\leq j\leq p}\| \hat{\bb}_j -  {\bb}_j  \|_2 \leq C_1 \{  v \sqrt{\log(np)} \, (np)^{-1/2} + p^{-1/2}   \}   \label{variance.concentration.1}
\#
as long as $n \geq  v^2 p^{-1} \log(np)$. In addition, if $n\geq C_2 \log(np)$, $\tau_j  = a_j \omega_{n,p}, \tau_{jj} = a_{jj} \omega_{n,p}$ with $a_j \geq \sigma_{jj}^{1/2}, a_{jj} \geq \var(X_j^2)^{1/2}$,  we have
\#
	\max_{1\leq j\leq p}  | \hat{\sigma}_{ \varepsilon ,jj} - \sigma_{\varepsilon , jj}  | \leq C_3 (  v  p^{-1/2} w_{n,p}^{-1} + p^{-1/2}   )  \label{variance.concentration.2}
\#
with probability greater than $1- C_4 n^{-1}$. Here, $C_1$--$C_4$ are positive constants that are independent of $(n,p)$.
\end{theorem}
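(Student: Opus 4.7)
\textbf{Proof proposal for Theorem~\ref{thm3}.} The plan is to combine the spectral-norm deviation bound of Theorem~\ref{thm:u-type} with the eigen-perturbation inequalities of Proposition~\ref{prop1}, then carry out the extra bookkeeping needed to convert an eigendecomposition error into a uniform row-wise error for the loading matrix, and finally assemble the variance bound via an algebraic decomposition.

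For \eqref{variance.concentration.1}, first apply Theorem~\ref{thm:u-type} with $t=\log(np)$: the choice $\tau=v_0\,\omega_{n,p}$ with $v_0\ge v/2$ verifies the hypothesis $\tau\ge (v/2)(n/t)^{1/2}$, yielding $\|\hat{\bSigma}_U-\bSigma\|\le 4v\{\log(np)/n\}^{1/2}$ on an event $\mathcal{E}$ with $\PP(\mathcal{E})\ge 1-2n^{-1}$. Feeding this into Proposition~\ref{prop1}, and using that $\|\bSigma_\varepsilon\|\le c_3$ by Assumption~\ref{cond.pervasive}, gives $\max_{\ell\le K}|\hat\lambda_\ell-\overline\lambda_\ell|\lesssim v\sqrt{\log(np)/n}+1$ and $\max_{\ell\le K}\|\hat\bv_\ell-\overline\bv_\ell\|_2\lesssim p^{-1}\bigl(v\sqrt{\log(np)/n}+1\bigr)$ on $\mathcal{E}$, with signs chosen so that $\hat\bv_\ell^\T\overline\bv_\ell\ge 0$. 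I would then decompose each coordinate of $\hat\bb_j-\bb_j$ as
$$
\wt\lambda_\ell^{1/2}\hat v_{\ell,j}-\overline\lambda_\ell^{1/2}\overline v_{\ell,j}=\bigl(\wt\lambda_\ell^{1/2}-\overline\lambda_\ell^{1/2}\bigr)\hat v_{\ell,j}+\overline\lambda_\ell^{1/2}\bigl(\hat v_{\ell,j}-\overline v_{\ell,j}\bigr).
$$
Pervasiveness gives $\overline\lambda_\ell\asymp p$ for $\ell\le K$, and the condition $n\ge v^2 p^{-1}\log(np)$ ensures $\wt\lambda_\ell=\hat\lambda_\ell\asymp p$ on $\mathcal{E}$. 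Hence $|\wt\lambda_\ell^{1/2}-\overline\lambda_\ell^{1/2}|\le |\hat\lambda_\ell-\overline\lambda_\ell|/(\wt\lambda_\ell^{1/2}+\overline\lambda_\ell^{1/2})\lesssim p^{-1/2}(v\sqrt{\log(np)/n}+1)$, and using $|\hat v_{\ell,j}|\le 1$ the first summand is of order $p^{-1/2}(v\sqrt{\log(np)/n}+1)=v\sqrt{\log(np)}(np)^{-1/2}+p^{-1/2}$. The second summand is bounded by $\overline\lambda_\ell^{1/2}\|\hat\bv_\ell-\overline\bv_\ell\|_\infty\le\sqrt{p}\,\|\hat\bv_\ell-\overline\bv_\ell\|_2$, which is of the same order. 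Summing over the $K$ (fixed) coordinates gives \eqref{variance.concentration.1}.

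For \eqref{variance.concentration.2}, I would exploit the identity $\EE X_j^2=\mu_j^2+\|\bb_j\|_2^2+\sigma_{\varepsilon,jj}$, obtained from \eqref{id.ass}, and write
$$
\hat\sigma_{\varepsilon,jj}-\sigma_{\varepsilon,jj}=\bigl(\hat\theta_j-\EE X_j^2\bigr)-\bigl(\hat\mu_j^2-\mu_j^2\bigr)-\bigl(\|\hat\bb_j\|_2^2-\|\bb_j\|_2^2\bigr).
$$
The first summand is controlled by the adaptive Huber concentration bound (standard from \cite{ZBFL2017}, applied to the random variables $X_{ij}^2$ with the truncation constraint in the definition of $\hat\theta_j$ active with high probability) together with a union bound; when $n\ge C_2\log(np)$, $\max_j|\hat\theta_j-\EE X_j^2|\lesssim\sqrt{\log(np)/n}$ with probability $\ge 1-Cn^{-1}$. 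The same tool applied to $X_{ij}$ gives $\max_j|\hat\mu_j-\mu_j|\lesssim\sqrt{\log(np)/n}$, and combined with boundedness of $|\hat\mu_j|+|\mu_j|$ yields the analogous rate for the quadratic difference. The third summand is handled by \eqref{variance.concentration.1} and $\|\hat\bb_j\|_2+\|\bb_j\|_2=O(1)$, giving a $v\sqrt{\log(np)}(np)^{-1/2}+p^{-1/2}$ bound; since $v\gtrsim\sqrt{p}$ under Assumption~\ref{cond.pervasive} (because $\tr(\bSigma)\bSigma$ contributes $\asymp p^2$ to $v^2$), the first two contributions are absorbed into the third, yielding \eqref{variance.concentration.2}.

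The main obstacle is the careful tracking of the $\sqrt{p}$ gain from pervasiveness that converts a raw $O(v\sqrt{\log(np)/n})$ covariance perturbation into a per-coordinate $O(v\sqrt{\log(np)/(np)}+p^{-1/2})$ loading bound; getting both terms of this form requires dividing by $\overline\lambda_\ell^{1/2}\asymp\sqrt{p}$ in the eigenvalue gap and using the same $\sqrt{p}$ as the eigenvector multiplier. Secondary subtleties include aligning the eigenvector sign convention, verifying that $\wt\lambda_\ell=\hat\lambda_\ell$ on $\mathcal{E}$ so that no bias is introduced by the positive part, and coordinating the probability levels of the several high-probability events (covariance concentration, robust mean concentration, robust second-moment concentration) at the $1-Cn^{-1}$ scale via a union bound.
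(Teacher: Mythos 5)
Your proposal is correct and follows essentially the same route as the paper: invoke Theorem~\ref{thm:u-type} with $t=\log(np)$, feed the resulting spectral-norm bound into Proposition~\ref{prop1}, decompose $\hat\bb_j-\bb_j$ coordinatewise into an eigenvalue-difference piece and an eigenvector-difference piece weighted by $\overline\lambda_\ell^{1/2}\asymp\sqrt p$, and handle the variance estimator via the identity $\hat\sigma_{\varepsilon,jj}-\sigma_{\varepsilon,jj}=(\hat\theta_j-\EE X_j^2)-(\hat\mu_j^2-\mu_j^2)-(\|\hat\bb_j\|_2^2-\|\bb_j\|_2^2)$. The only cosmetic differences are that you bound $|\hat v_{\ell,j}|$ by $1$ rather than by $\|\hat\bv_\ell\|_\infty\lesssim p^{-1/2}$ (looser but sufficient), and you control $\|\hat\bb_j\|_2^2-\|\bb_j\|_2^2$ by reusing \eqref{variance.concentration.1} whereas the paper redoes a direct spectral decomposition; your explicit absorption of the $\hat\theta_j$ and $\hat\mu_j$ contributions via $v\gtrsim\sqrt p$ makes a step explicit that the paper leaves tacit.
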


\begin{remark}
According to Theorem \ref{thm3}, the robustification parameters can be set as $\tau_j  = a_j \omega_{n,p}$ and  $\tau_{jj} = a_{jj} \omega_{n,p}$, where $w_{n,p} $ is given in \eqref{wnp.def}. In practice, the constants $a_j$ and $a_{jj}$ can be chosen by cross-validation.
\end{remark}

\subsubsection{Adaptive Huber covariance estimation}
\label{sec:3.3.2}

In this section, we adopt an estimator that was first considered in \cite{FLW2017}. For every $1 \leq j \neq  k \leq p$, we define the robust estimate $\hat{\sigma}_{jk}$ of $\sigma_{jk}=\e(X_j X_k) - \mu_j \mu_k$ to be
\begin{align} \label{off-diagonal}
	\hat{\sigma}_{jk} = \hat{\theta}_{jk} -  \hat{\mu}_j \hat{\mu}_k ~\mbox{ with }~ \hat{\theta}_{jk} = \arg\min_{\theta \in \RR} \sn \ell_{\tau_{jk} } (X_{ij}X_{ik} - \theta ) ,
\end{align}
where $\tau_{jk}>0$ is a robustification parameter and $\hat{\mu}_j$ is defined in \eqref{rest}. 
This yields the adaptive Huber covariance estimator $\hat{\bSigma}_{{\rm H}} = (\hat{\sigma}_{jk})_{1\leq  j, k \leq p}$. The dependence of $\hat{\bSigma}_{{\rm H}}$ on $\{ \tau_{jk}: 1\leq j \leq k \leq   p\}$ and $\{ \tau_j\}_{j=1}^p$ is assumed without displaying.

\begin{theorem} \label{thm4}
Suppose Assumption~\ref{cond.corr}(iv) and Assumption~\ref{cond.pervasive} hold. Let $\tau_j  = a_j \omega_{n,p}, \tau_{jk} = a_{jk} \omega_{n,p^2}$ with $a_j \geq \sigma_{jj}^{1/2}, a_{jk} \geq \var(X_j^2)^{1/2}$ for $1\leq j, k\leq p$. Then, there exist positive constants $C_1$--$C_3$ independent of $(n,p)$ such that as long as $n\geq C_1 \log(np)$,
\begin{align}
\max_{1\leq j\leq p}\| \hat{\bb}_j -  {\bb}_j  \|_2 \leq C_2  ( \omega_{n,p}^{-1} + p^{-1/2} )   \nn \\
 \mbox{ and }~ \max_{1\leq j\leq p}  | \hat{\sigma}_{ \varepsilon ,jj} - \sigma_{\varepsilon , jj}  | \leq C_3 ( \omega_{n,p}^{-1} + p^{-1/2}  )  \nn
\end{align}
with probability greater than $1-  4n^{-1}$, where $w_{n,p}$ is given in \eqref{wnp.def}.
\end{theorem}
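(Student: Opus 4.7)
The plan is to mirror the proof of Theorem~\ref{thm3}, replacing the spectral-norm concentration of $\hat{\bSigma}_U$ with an elementwise max-norm concentration of $\hat{\bSigma}_H$, and then propagating it through Proposition~\ref{prop1}. I would first establish $\|\hat{\bSigma}_H - \bSigma\|_{\max} \lesssim \omega_{n,p}^{-1}$ on a high-probability event. For each pair $(j,k)$, the adaptive Huber concentration of \cite{FLW2017,ZBFL2017}, applied to $X_{ij}X_{ik}$ with $\tau_{jk} = a_{jk}\omega_{n,p^2}$, yields $|\hat{\theta}_{jk} - \EE(X_jX_k)| \lesssim \omega_{n,p^2}^{-1}$ with probability at least $1-2(np)^{-c}$ for any prescribed $c$, the required moment bounds following from Assumption~\ref{cond.corr}(iv) and the factor structure. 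A parallel bound applies to $\hat{\mu}_j$ with $\tau_j = a_j\omega_{n,p}$. Combining via $\hat{\sigma}_{jk} = \hat{\theta}_{jk} - \hat{\mu}_j\hat{\mu}_k$ and a union bound over the $O(p^2)$ pairs produces the claimed max-norm bound on an event of probability at least $1 - O(n^{-1})$, provided $n \gtrsim \log(np)$.

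Next, converting to spectral norm via $\|\hat{\bSigma}_H - \bSigma\| \leq p\,\|\hat{\bSigma}_H - \bSigma\|_{\max} \lesssim p\,\omega_{n,p}^{-1}$ and invoking Proposition~\ref{prop1} under Assumption~\ref{cond.pervasive}, I obtain $\max_\ell |\hat{\lambda}_\ell - \overline{\lambda}_\ell| \lesssim p\,\omega_{n,p}^{-1} + 1$ and $\max_\ell \|\hat{\bv}_\ell - \overline{\bv}_\ell\|_2 \lesssim \omega_{n,p}^{-1} + p^{-1}$. Under identification~\eqref{id.ass}, each row admits the decomposition $\hat{\bb}_j - \bb_j = \hat{\Lambda}^{1/2}(\hat{U} - \overline{U})^\T e_j + (\hat{\Lambda}^{1/2} - \overline{\Lambda}^{1/2})\overline{U}^\T e_j$, where $\hat{U}, \overline{U}$ and $\hat{\Lambda}, \overline{\Lambda}$ collect the top-$K$ eigenvectors and eigenvalues. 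The second piece is $O(\omega_{n,p}^{-1} + p^{-1})$, using $|\overline{v}_{\ell,j}| = |b_{j,\ell}|/\sqrt{\overline{\lambda}_\ell} = O(p^{-1/2})$ (from the uniform boundedness of $\|\bb_j\|_2$) together with $|\sqrt{\hat{\lambda}_\ell} - \sqrt{\overline{\lambda}_\ell}| \lesssim \sqrt{p}\,\omega_{n,p}^{-1} + p^{-1/2}$. The task thus reduces to the entrywise eigenvector bound $\max_j |\hat{v}_{\ell,j} - \overline{v}_{\ell,j}| = O(\omega_{n,p}^{-1}/\sqrt{p} + p^{-1})$, which once multiplied by $\sqrt{\hat{\lambda}_1} \asymp \sqrt{p}$ delivers the desired rate for the first piece.

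The entrywise eigenvector bound is the main obstacle. Starting from the eigenequations $\hat{\lambda}_\ell \hat{v}_{\ell,j} = (\hat{\bSigma}_H \hat{\bv}_\ell)_j$ and $\overline{\lambda}_\ell \overline{v}_{\ell,j} = ((\bSigma - \bSigma_\varepsilon)\overline{\bv}_\ell)_j$, subtracting (after adding and subtracting $\overline{\lambda}_\ell^{-1}(\hat{\bSigma}_H \hat{\bv}_\ell)_j$ and $\overline{\lambda}_\ell^{-1}(\bSigma \hat{\bv}_\ell)_j$) splits $\hat{v}_{\ell,j} - \overline{v}_{\ell,j}$ into four pieces: (i) the eigenvalue-ratio term $((\overline{\lambda}_\ell - \hat{\lambda}_\ell)/\overline{\lambda}_\ell)\hat{v}_{\ell,j}$, where writing $\hat{v}_{\ell,j} = \overline{v}_{\ell,j} + (\hat{v}_{\ell,j} - \overline{v}_{\ell,j})$ produces a self-referential contribution with coefficient $\omega_{n,p}^{-1} + p^{-1} = o(1)$; (ii) $\overline{\lambda}_\ell^{-1}((\hat{\bSigma}_H - \bSigma)\hat{\bv}_\ell)_j$, bounded via $\|(\hat{\bSigma}_H - \bSigma)_{j,\cdot}\|_2 \leq \sqrt{p}\,\|\hat{\bSigma}_H - \bSigma\|_{\max}$ to give $O(\omega_{n,p}^{-1}/\sqrt{p})$; (iii) $\overline{\lambda}_\ell^{-1}(\bSigma(\hat{\bv}_\ell - \overline{\bv}_\ell))_j$, handled by splitting $\bSigma = \Bb\Bb^\T + \bSigma_\varepsilon$ and using $\Bb^\T \overline{\bv}_\ell = \sqrt{\overline{\lambda}_\ell} e_\ell$ along with $|\overline{\bv}_m^\T \hat{\bv}_\ell| = O(\|\hat{\bv}_\ell - \overline{\bv}_\ell\|_2)$ for $m \neq \ell$, which yields $\|\Bb^\T(\hat{\bv}_\ell - \overline{\bv}_\ell)\|_2 = O(\sqrt{p}\,\|\hat{\bv}_\ell - \overline{\bv}_\ell\|_2)$ and hence a contribution of $O((\omega_{n,p}^{-1} + p^{-1})/\sqrt{p})$; and (iv) the residual $\overline{\lambda}_\ell^{-1}(\bSigma_\varepsilon \overline{\bv}_\ell)_j = O(p^{-1})$. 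Rearranging the self-referential inequality from (i) then delivers the claimed entrywise rate, and hence $\max_j \|\hat{\bb}_j - \bb_j\|_2 \lesssim \omega_{n,p}^{-1} + p^{-1/2}$. Finally, the residual-variance bound follows by decomposing $\hat{\sigma}_{\varepsilon,jj} - \sigma_{\varepsilon,jj} = (\hat{\theta}_j - \EE(X_j^2)) - (\hat{\mu}_j^2 - \mu_j^2) - (\|\hat{\bb}_j\|_2^2 - \|\bb_j\|_2^2)$: the first summand is $O(\omega_{n,p}^{-1})$ by adaptive Huber concentration for $X_j^2$ with $a_{jj} \geq \var(X_j^2)^{1/2}$, the second by $|\hat{\mu}_j - \mu_j| \cdot |\hat{\mu}_j + \mu_j| = O(\omega_{n,p}^{-1})$, and the third by $\|\hat{\bb}_j - \bb_j\|_2 \cdot (\|\hat{\bb}_j\|_2 + \|\bb_j\|_2) = O(\omega_{n,p}^{-1} + p^{-1/2})$ using the uniform boundedness of $\|\bb_j\|_2$.
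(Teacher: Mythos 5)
Your proposal is correct and follows the same high-level pipeline as the paper (elementwise Huber concentration $\Rightarrow$ $\ell_\infty$ eigenstructure bounds $\Rightarrow$ loading and variance bounds), but the middle step takes a genuinely different route. The paper does not re-derive the entrywise eigenvector perturbation bound: it invokes Lemma~\ref{infinity.perturbation}, which in turn cites Theorem~3 and Proposition~3 of Fan, Wang and Zhong (2018) to obtain $\max_\ell \|\hat\bv_\ell - \overline\bv_\ell\|_\infty \lesssim p^{-1/2}\|\hat\bSigma_{\rm H}-\bSigma\|_{\max} + p^{-1}\|\bSigma_\varepsilon\|$, then plugs that black-box bound into the Theorem~\ref{thm3} machinery. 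You instead prove the entrywise perturbation bound from scratch by subtracting the two eigenequations, inserting intermediate terms, and absorbing the self-referential coefficient $(\overline\lambda_\ell-\hat\lambda_\ell)/\overline\lambda_\ell = O(\omega_{n,p}^{-1}+p^{-1}) = o(1)$; the four resulting pieces each land at or below $O(\omega_{n,p}^{-1}p^{-1/2}+p^{-1})$, which reproduces exactly the rate the cited lemma delivers. Your argument is therefore more self-contained, though somewhat longer, and it implicitly uses the same structural facts (pervasiveness, $|\overline v_{\ell j}| = O(p^{-1/2})$ via $\|\Bb\|_{\max}=O(1)$, $\|\bSigma_\varepsilon\|=O(1)$) that the paper's citation route also relies on. A second, smaller difference: you bound $|\|\hat\bb_j\|_2^2-\|\bb_j\|_2^2|$ by direct factoring, $\|\hat\bb_j-\bb_j\|_2(\|\hat\bb_j\|_2+\|\bb_j\|_2)$, whereas the paper expands it as $\sum_\ell(\hat\lambda_\ell-\overline\lambda_\ell)\hat v_{\ell j}^2 + \sum_\ell\overline\lambda_\ell(\hat v_{\ell j}^2-\overline v_{\ell j}^2)$ and bounds term by term; your version is cleaner and yields the same rate once the loading bound is available. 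One cosmetic remark: in your term (iii) the chain through $\Bb^\T\overline\bv_\ell=\sqrt{\overline\lambda_\ell}\,e_\ell$ and $|\overline\bv_m^\T\hat\bv_\ell|$ is an unnecessary detour, since $\|\Bb^\T(\hat\bv_\ell-\overline\bv_\ell)\|_2\le \|\Bb\|\,\|\hat\bv_\ell-\overline\bv_\ell\|_2\lesssim \sqrt p\,\|\hat\bv_\ell-\overline\bv_\ell\|_2$ is immediate, but the bound you reach is correct.
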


\subsection{Estimating realized factors}
\label{sec:3.3}

To make the oracle statistics $T_j^\circ$'s given in \eqref{oracle.stat} usable, it remains to estimate $\bar{\bbf}$. Since the loadings can be estimated in two different ways, let us first assume $\Bb$ is given and treat it as an input variable.

Averaging the approximate factor model \eqref{obs}, we have $\bar{\bX} = \bmu + \Bb \bar{\bbf} + \bar{\bvarepsilon}$, where $\bar{\bX} = (\bar{X}_1, \ldots, \bar{X}_p)^\T = (1/n)\sn \bX_i$ and $\bar{\bvarepsilon} := (\bar{\varepsilon}_1, \ldots, \bar{\varepsilon}_p)^\T =  (1/n) \sn \bvarepsilon_i$. This leads to
\begin{align}
	\bar{X}_j = \bb_j^\T \bar{\bbf} + \mu_j + \bar{\varepsilon}_j,\ \  j=1,\ldots, p. \label{new.model}
\end{align}
Among all $\{ \mu_j + \bar{\varepsilon}_j\}_{j=1}^p$, we may regard $\mu_j + \bar{\varepsilon}_j$ with $\mu_j \neq 0$ as outliers. Therefore, to achieve robustness, we estimate $\bar{\bbf}$ by solving the following optimization problem:
\begin{align}
	\widehat{\bbf}(\Bb)  \in  \arg\min_{  \bbf \in \RR^K } \sum_{j=1}^p \ell_{\gamma} ( \bar{X}_j  - \bb_j^\T \bbf )  ,  \label{oracle.factor.est}
\end{align}
where $\gamma = \gamma(n,p) > 0$ is a robustification parameter. Next, we define robust variance estimators $\hat{\sigma}_{\varepsilon, jj}$'s by
\begin{align}  
	\hat{\sigma}_{\varepsilon, jj}(\Bb)   = \hat{\theta}_j - \hat{\mu}_j^2 - \| \bb_j \|_2^2   ~\mbox{ with }~ \hat{\theta}_j = \arg\min_{\theta \, \geq\, \hat{\mu}_j^2 + \| \bb_j \|_2^2 } \sn \ell_{ \tau_{jj} } (X_{ij}^2 - \theta ), \nn
\end{align}
where $\tau_{jj}$'s are robustification parameters and $\hat{\mu}_j$'s are as in \eqref{rest}. Plugging $\{\widehat\sigma_{\varepsilon, jj}\}_{j=1}^p$ and $\widehat{\bbf}$ into \eqref{oracle.stat}, we obtain the following factor-adjusted test statistics
\begin{align}
	 {T}_j(\Bb) = \bigg\{  \frac{n}{\hat \sigma_{\varepsilon , jj}(\Bb) }  \bigg\}^{1/2}    \{ \hat{\mu}_j - {\bb}_j^\T \hat{\bbf}(\Bb) \}, \ \  j=1,\ldots, p. \label{dd.test}
\end{align}
For a given threshold $z \geq 0$, the corresponding FDP is defined as
\#
	 {\FDP}(z ; \Bb ) =  {V}(z ; \Bb ) /  {R}(z ; \Bb ), \nn
\#
where
$
{V}(z; \Bb ) =  \sum_{j \in \mathcal{H}_0}  I \{ |  {T}_j(\Bb)|\geq z \}$ \textnormal{and} ${R}(z; \Bb ) = \sum_{1\leq  j \leq p}  I \{ |  {T}_j (\Bb)| \geq z \}.
$
Similarly to \eqref{OFDP.def}, we approximate $ {\FDP}(z ; \Bb )$ by
\begin{align}  
	  \AFDP(z  ; \Bb ) =  2 p_0 \Phi(-z) / R(z ; \Bb )  .\nn
\end{align}
For any $z\geq 0$, the approximate FDP $ \AFDP(z;\Bb)$ is computable except $p_0$, which can be either estimated \citep{S2002} or upper bounded by $p$. Albeit being slightly conservative, the latter proposal is accurate enough in the sparse setting.

Regarding the accuracy of $  \AFDP(z  ; \Bb)$ as an asymptotic approximation of $ {\FDP}(z ; \Bb )$, we need to account for the statistical errors of $\{\widehat\sigma_{\varepsilon, jj} (\Bb)\}_{j=1}^p$ and $\hat{\bbf}(\Bb)$. To this end, we make the following structural assumptions on $\bmu$ and $\Bb$.

\begin{assumption} \label{cond.mineigen}
The  idiosyncratic errors $\varepsilon_1, \ldots, \varepsilon_p$ are mutually independent, and there exist constants $c_l , c_u>0$ such that $
	\lambda_{\min} (p^{-1}\Bb^\T \Bb ) \geq c_l$ and $\| \Bb \|_{\max} \leq c_u$.
\end{assumption}

\begin{assumption}[\sf Sparsity] \label{cond.sparsity}
There exist constants $C_\mu>0$ and $c_\mu \in (0, 1/2)$ such that $\| \bmu \|_\infty = \max_{1\leq j\leq p} | \mu_j | \leq C_\mu$ and $\| \bmu \|_0 = \sum_{j=1}^p I(\mu_j \neq 0) \leq p^{1/2 - c_\mu}$. Moreover, $(n,p)$ satisfies that $n \log(n) = o(p)$ as $n, p \to \infty$.
\end{assumption}

The following proposition, which is of independent interest, reveals an exponential-type deviation inequality for $\hat{\bbf}(\Bb)$ with a properly chosen $\gamma>0$.

\begin{proposition} \label{prop2}
Suppose that Assumption~\ref{cond.mineigen} holds. For any $t>0$, the estimator $\hat{\bbf}(\Bb)$ given in \eqref{oracle.factor.est} with $\gamma = \gamma_0 (p/t)^{1/2}$ for $\gamma_0 \geq \overline{\sigma}_\varepsilon  :=  (  p^{-1} \sum_{j=1}^p \sigma_{\varepsilon, jj}  )^{1/2}$ satisfies that with probability greater than $1- (2eK + 1) e^{-t}$,
\begin{align}  \label{factor.est.ubd}
	 \| \hat{\bbf}(\Bb) - \bar{\bbf} \|_2 \leq  C_1 \gamma_0 (K t)^{1/2} p^{-1/2}
\end{align}
as long as $p \geq \max \{   \| \bmu \|_2^2 / \overline{\sigma}_\varepsilon^2  , ( \| \bmu \|_1 / \overline{\sigma}_\varepsilon )^2 t , C_2 K^2 t \}$, where $C_1, C_2 >0$ are constants depending only on $c_l, c_u$ in Assumption~\ref{cond.mineigen}.
\end{proposition}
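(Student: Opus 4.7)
The plan is to run a convex $M$-estimator analysis, combining a concentration bound on the score at $\bar\bbf$ with a deterministic quadratic lower bound on the objective in a neighborhood of $\bar\bbf$. Write
$$L(\bbf) = \sum_{j=1}^p \ell_\gamma(\bar X_j - \bb_j^\T \bbf), \qquad \nabla L(\bbf) = -\sum_{j=1}^p \psi_\gamma(\bar X_j - \bb_j^\T \bbf)\, \bb_j,$$
and set $w_j = \mu_j + \bar\varepsilon_j$, so that $\bar X_j - \bb_j^\T \bar\bbf = w_j$. The goal is to combine the first-order condition $\nabla L(\hat\bbf) = 0$ with a lower curvature estimate of order $p$ valid on a ball around $\bar\bbf$, yielding $\|\hat\bbf - \bar\bbf\|_2 \lesssim \|\nabla L(\bar\bbf)\|_2 / p$.

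For the score bound, I would decompose $-\nabla L(\bar\bbf) = \mathrm{I} + \mathrm{II}$ with random piece $\mathrm{I} = \sum_j \{\psi_\gamma(w_j) - \e\psi_\gamma(w_j)\}\bb_j$ and deterministic bias $\mathrm{II} = \sum_j \e\psi_\gamma(w_j)\,\bb_j$. Since $|\psi_\gamma(u)-u|\leq u^2/\gamma$, we have $|\e\psi_\gamma(w_j) - \mu_j| \leq \gamma^{-1}(\mu_j^2 + \sigma_{\varepsilon,jj}/n)$, hence
$$\|\mathrm{II}\|_2 \leq c_u\sqrt K\,\|\bmu\|_1 + c_u\sqrt K\,\gamma^{-1}(\|\bmu\|_2^2 + p\bar\sigma_\varepsilon^2/n);$$
the hypotheses $p \geq \|\bmu\|_2^2/\bar\sigma_\varepsilon^2$ and $p \geq (\|\bmu\|_1/\bar\sigma_\varepsilon)^2 t$, together with $\gamma = \gamma_0(p/t)^{1/2}$, force this to be $O(\gamma_0\sqrt{Kpt})$. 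Each summand $\psi_\gamma(w_j) B_{jk}$ appearing in $\mathrm{I}$ is bounded by $\gamma c_u$ with variance at most $c_u^2(\mu_j^2 + \sigma_{\varepsilon,jj}/n)$; using mutual independence of the $\varepsilon_j$'s (Assumption~\ref{cond.mineigen}), Bernstein's inequality applied coordinate-wise and upgraded to a vector bound (this is where the $2eK$ factor enters via a standard net/symmetrization argument) yields $\|\mathrm{I}\|_2 \lesssim c_u\sqrt K\,\gamma t \asymp \gamma_0\sqrt{Kpt}$ on an event of probability at least $1 - 2eK e^{-t}$, the ``linear'' Bernstein term dominating precisely because $\gamma$ is large.

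For the curvature estimate, fix $r = (\gamma/2 - C_\mu)/(c_u\sqrt K)$ and work on the event $\mathcal{E} = \{\max_{1\leq j\leq p} |\bar\varepsilon_j| \leq \gamma/4\}$, which holds with probability at least $1 - e^{-t}$ by Chebyshev applied to $\bar\varepsilon_j^4$ (whose mean is of order $1/n^2$), combined with $\gamma \asymp \sqrt{p/t}$ and $p \geq C_2 K^2 t$. On $\mathcal{E}$, any $\bbf$ with $\|\bbf - \bar\bbf\|_2 \leq r$ satisfies $|\bar X_j - \bb_j^\T \bbf| \leq |w_j| + \|\bb_j\|_2\, r \leq 3\gamma/4 \leq \gamma$ for every $j$, so $L$ agrees with the quadratic $\tfrac12\sum_j(\bar X_j - \bb_j^\T \bbf)^2$ on the whole ball, and its Hessian is $\Bb^\T\Bb \succeq c_l p\, \Ib_K$ by Assumption~\ref{cond.mineigen}. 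A standard convexity/contradiction argument (convexity of $L$ plus the small score) then traps the minimizer $\hat\bbf$ inside the ball of radius $2\|\nabla L(\bar\bbf)\|_2/(c_l p) \lesssim \gamma_0\sqrt{Kt/p}$; the condition $p \geq C_2 K^2 t$ is precisely what makes this radius $\leq r$, closing the localization.

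The main technical obstacle I anticipate is synchronizing the two competing roles of $\gamma$: it must be large enough that the Huber loss behaves as a pure quadratic on the localization ball of radius $r \asymp \gamma/\sqrt K$, yet its appearance through the ``linear'' Bernstein term $c_u \gamma t = c_u \gamma_0 \sqrt{pt}$ is exactly what drives the final rate $\gamma_0\sqrt{Kt/p}$. The prescription $\gamma = \gamma_0 \sqrt{p/t}$ balances these two demands, and the quantitative sparsity Assumption~\ref{cond.sparsity} is indispensable to keep the deterministic bias $\mathrm{II}$ from contaminating the rate.
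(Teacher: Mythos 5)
Your high-level architecture (convexity, score concentration plus local curvature lower bound, localization) matches the paper's proof, and your score bound via Bernstein is a legitimate alternative to the paper's moment-generating-function argument based on the inequality $-\log(1-u+u^2)\le\psi_1(u)\le\log(1+u+u^2)$ — both yield a bound of order $\gamma_0\sqrt{Kt/p}$ for the normalized score, with the mgf route giving the precise $2eK$ factor and yours giving $2K$ plus slightly worse constants. However, your curvature step has a genuine gap.

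You want to work on the event $\mathcal{E}=\{\max_{1\le j\le p}|\bar\varepsilon_j|\le\gamma/4\}$ so that the Huber loss is exactly quadratic on the localization ball, and you claim $\PP(\mathcal{E}^c)\le e^{-t}$ by Chebyshev applied to $\bar\varepsilon_j^4$. This does not work for two reasons. First, Assumption~\ref{cond.mineigen} (the only hypothesis of the proposition) gives independence and conditions on $\Bb$ but no fourth-moment control on $\varepsilon_j$; nor is Assumption~\ref{cond.sparsity} assumed, so the constant $C_\mu$ you invoke in choosing the radius $r$ is not available. Second, and more fundamentally, Chebyshev on $\bar\varepsilon_j^4$ plus a union bound gives $\PP(\mathcal{E}^c)\lesssim p\,n^{-2}\gamma^{-4}\asymp t^2/(n^2p\,\gamma_0^4)$, which decays only polynomially in $t$ and cannot in general be bounded by $e^{-t}$ for all $t>0$. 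The exponential tail $1-(2eK+1)e^{-t}$ in the statement genuinely needs a concentration argument, and a maximum over $p$ terms with only low-moment control does not deliver one.

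The paper avoids this by never requiring uniform control of $\bar\varepsilon_j$. Instead of demanding that the Huber loss be exactly quadratic everywhere on the ball, it lower-bounds the Hessian
$$\nabla^2 L_\gamma(\bw)=\frac{1}{p}\sum_{j=1}^p\bb_j\bb_j^\T\,I\bigl(|\bar X_j-\bb_j^\T\bw|\le\gamma\bigr)$$
by subtracting from $\bu^\T\Sb\bu$ two loss terms involving $\frac{1}{p}\sum_j I(|\bar\varepsilon_j+\mu_j|>\gamma/2)$ and the (deterministic) $\frac{1}{p}\sum_j I(|\bb_j^\T(\bw^*-\bw)|>\gamma/2)$. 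The random term is an average of $p$ independent bounded indicators, so Hoeffding gives a sub-Gaussian $\sqrt{t/(2p)}$ fluctuation with probability $1-e^{-t}$, while its mean is controlled by Chebyshev applied to each $\PP(|\bar\varepsilon_j+\mu_j|>\gamma/2)$ (second moments only). This produces an exponential tail without ever bounding a maximum and without fourth moments. To repair your proof you would need to replace the $\max_j|\bar\varepsilon_j|$ event by a Hoeffding-type bound on the fraction of ``overshooting'' indices — i.e., accept a strictly positive but small fraction of coordinates outside the linear regime of $\psi_\gamma$ and argue that the Hessian retains a lower bound of order $c_l p$ nevertheless.
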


The convergence in probability of ${\FDP}( z ; \Bb )$ to $ \AFDP( z ; \Bb)$ for any $z \geq 0$ is investigated in the following theorem.

\begin{theorem} \label{thm2}
Suppose that Assumptions~\ref{cond.corr}\,(i)--(iv), Assumptions \ref{cond.mineigen} and \ref{cond.sparsity} hold.
Let $\tau_j = a_j \omega_{n,p}$, $\tau_{jj} = a_{jj} \omega_{n,p}$ with $a_j \geq \sigma_{jj}^{1/2}$, $a_{jj} \geq \var(X_j^2)^{1/2}$ for $j=1,\ldots, p$, and $\gamma = \gamma_0\{ p/  \log(n) \}^{1/2}$ with $\gamma_0 \geq \overline{\sigma}_\varepsilon$. Then, for any $z \geq  0$, $|{\FDP}( z ; \Bb )- \AFDP(z  ; \Bb)  | = o_{\PP} (1)$ as $n, p \to \infty$.
\end{theorem}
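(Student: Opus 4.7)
The plan is to reduce Theorem~\ref{thm2} to the oracle result of Theorem~\ref{thm1} by bridging the feasible statistic $T_j(\Bb)$ to the oracle statistic $T_j^\circ = \sqrt{n/\sigma_{\varepsilon,jj}}(\hat\mu_j - \bb_j^\T\bar\bbf)$ with uniform-in-$j$ error, and then converting the closeness of statistics into closeness of $V(\cdot;\Bb)$ and $R(\cdot;\Bb)$ via a sandwich-plus-anti-concentration argument at the rejection boundary.

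First, I would control the plug-in error from the estimated realized factor. Applying Proposition~\ref{prop2} with $t=\log n$ and the prescribed $\gamma=\gamma_0\{p/\log n\}^{1/2}$ yields $\|\hat\bbf(\Bb)-\bar\bbf\|_2 = O_\PP(\sqrt{K\log n/p})$ with probability at least $1-(2eK+1)/n$. Since Assumption~\ref{cond.mineigen} gives $\max_j \|\bb_j\|_2 \le c_u\sqrt{K}$ and Assumption~\ref{cond.sparsity} enforces $n\log n = o(p)$, it follows that $\max_{1\le j\le p}\sqrt{n}\,|\bb_j^\T(\bar\bbf-\hat\bbf(\Bb))| = o_\PP(1)$. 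Next I would establish uniform consistency of $\hat\sigma_{\varepsilon,jj}(\Bb) = \hat\theta_j - \hat\mu_j^2 - \|\bb_j\|_2^2$. Because $\Bb$ is treated as known here, this reduces to the adaptive-Huber concentration bounds for $\hat\theta_j$ (estimating $\EE X_j^2$) and $\hat\mu_j$ (estimating $\mu_j$); combining these with a union bound and the choices $\tau_j, \tau_{jj} \asymp \omega_{n,p}$ yields $\max_j|\hat\sigma_{\varepsilon,jj}(\Bb)-\sigma_{\varepsilon,jj}| = O_\PP(\sqrt{\log p/n}) = o_\PP(1/\sqrt{\log p})$ under the condition $\log p = o(\sqrt n)$ from Assumption~\ref{cond.corr}(i).

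Combining the two pieces, the identity
$$T_j(\Bb) = \sqrt{\sigma_{\varepsilon,jj}/\hat\sigma_{\varepsilon,jj}(\Bb)}\,T_j^\circ + \sqrt{n/\hat\sigma_{\varepsilon,jj}(\Bb)}\,\bb_j^\T(\bar\bbf-\hat\bbf(\Bb))$$
shows that the second summand is $o_\PP(1)$ uniformly in $j$, while the multiplicative factor on $T_j^\circ$ equals $1 + O_\PP(\sqrt{\log p/n})$. On the bulk $\{|T_j^\circ|\le 3z\}$ this gives $|T_j(\Bb)-T_j^\circ| = o_\PP(1)$; off the bulk $\{|T_j^\circ|>3z\}$ one checks $|T_j(\Bb)| \ge z$ with probability tending to one, so the indicators $I(|T_j(\Bb)|\ge z)$ and $I(|T_j^\circ|\ge z)$ already coincide. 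Consequently there exists $\xi_n = o_\PP(1)$ such that with probability tending to one
$$V^\circ(z+\xi_n) \le V(z;\Bb) \le V^\circ(z-\xi_n), \qquad R^\circ(z+\xi_n) \le R(z;\Bb) \le R^\circ(z-\xi_n),$$
where $V^\circ, R^\circ$ denote the oracle counts built from $\{T_j^\circ\}$.

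Finally, I would invoke the uniform-in-$z$ approximations \eqref{V.LLN} and \eqref{R.LLN} from Theorem~\ref{thm1}: both $p_0^{-1}V^\circ(z)$ and $p^{-1}R^\circ(z)$ have continuous (in $z$) probability limits, so letting $\xi_n\to 0$ through the sandwich yields $p_0^{-1}V(z;\Bb) = 2\Phi(-z)+o_\PP(1)$ and $p^{-1}R(z;\Bb)$ shares the same limit as $p^{-1}R^\circ(z)$, which is bounded below by $2\Phi(-z)\cdot (p_0/p) \ge 2a\Phi(-z)>0$. The ratio $\FDP(z;\Bb)/\AFDP(z;\Bb) = \{p_0^{-1}V(z;\Bb)\}/\{2\Phi(-z)\}$ then delivers $|\FDP(z;\Bb)-\AFDP(z;\Bb)| = o_\PP(1)$. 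The main obstacle I anticipate is the multiplicative nature of the variance-plug-in error: for alternatives with large signal, $|T_j^\circ|$ can be as large as $\sqrt{n}$, so a naive additive bound on $T_j(\Bb)-T_j^\circ$ is too crude, and the bulk/tail split above (which is effectively an anti-concentration argument at the threshold) is essential for passing from statistic-level closeness to count-level closeness.
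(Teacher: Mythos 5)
Your proposal is correct and the core mechanism (bound $\|\hat\bbf(\Bb)-\bar\bbf\|_2$ via Proposition~\ref{prop2} with $t=\log n$, bound the variance plug-in via adaptive-Huber concentration, combine into a uniform additive closeness of $T_j(\Bb)$ to $T_j^\circ$, then sandwich into the counts and reuse the uniform-in-$z$ law of large numbers from Theorem~\ref{thm1}) matches the paper's proof. The one genuine difference is your bulk/tail split at the threshold, which you introduce to control $R(z;\Bb)$ over the alternatives where $|T_j^\circ|$ can be order $\sqrt n$. The paper never needs that: it establishes the additive bound $\max_{j\in\cH_0}|T_j(\Bb)-T_j^\circ|=o_\PP(1)$ only over the \emph{nulls}, where $\sqrt n\hat\mu_j=O_\PP(\sqrt{\log(np)})$, and then observes that in $\FDP(z;\Bb)-\AFDP(z;\Bb)=\{V(z;\Bb)-2p_0\Phi(-z)\}/R(z;\Bb)$ the denominator $R(z;\Bb)$ is common to both quantities and merely needs a lower bound (which already follows from $R\ge V\approx 2p_0\Phi(-z)$ and $p_0\ge ap$). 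You in fact arrive at the same cancellation in your final paragraph via $\FDP/\AFDP=\{p_0^{-1}V(z;\Bb)\}/\{2\Phi(-z)\}$, which makes the preceding sandwich on $R(z;\Bb)$ and the tail-side of your bulk/tail argument redundant; restricting the additive estimate to $j\in\cH_0$ from the outset, as the paper does, would let you drop that machinery entirely. So your argument is sound, just less economical than the paper's.
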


\section{Simulation studies}\label{sec:4}

\subsection{Selecting robustification parameters}
\label{sec:4.1}

The robustification parameter involved in the Huber loss plays an important role in the proposed procedures both theoretically and empirically.
In this section, we describe the use of cross-validation to calibrate robustification parameter in practice.
To highlight the main idea, we restrict our attention to the mean estimation problem.

Suppose we observe $n$ samples $X_1, \ldots, X_n$ from $X$ with mean $\mu$. For any given $\tau>0$, the Huber estimator is defined as $\hat{\mu}_\tau  = \arg\min_{\theta \in \RR } \sn \ell_{\tau} ( X_i- \theta )$, or equivalently, the unique solution of the equation $ \sn \psi_{\tau} ( X_i- \theta )=0$.
Our theoretical analysis suggests that the theoretically optimal $\tau$ is of the form $C_\sigma \omega_n$, where $\omega_n$ is a specified function of $n$ and $C_\sigma>0$ is a constant that scales with $\sigma$, the standard deviation of $X$.
This allows us to narrow down the search range by selecting $C_\sigma$ instead via the $K$-fold ($K=5$ or 10) cross-validation as follows.
First, we randomly divide the sample into $K$ subsets, ${\cal I}_1,\ldots, \cI_K$, with roughly equal sizes. The \textit{cross-validation} criterion for a given $C>0$ can be defined as
\#\label{cv_criterion}
 {\rm CV}(C )=\frac{1}{n}\sum\limits_{k=1}^K \sum\limits_{i \in {\cal I}_k }\{ X_i -\hat{\mu}^{(-k)}_{\tau_C} \}^2,
\#
where $\hat{\mu}^{(-k)}_{\tau_C}$ is  the Huber estimator using data not in the $k$-th fold, namely
$$
	\hat{\mu}^{(-k)}_{\tau_C} = \arg\min_{\theta \in \RR}\sum_{ \ell =1 ,\ell\neq k}^K
\sum_{i \in {\cal I}_\ell } \ell_{ \tau_C } ( X_i - \theta ) ,
$$
and  $\tau_C = C \omega_n$. In practice, let $\cal C$ be a set of grid points for $C$. We choose $C_\sigma$ and therefore $\tau$ by
$\hat{C}_\sigma =\arg\min_{C \in {\cal C}  } {\rm CV}(C )$ and $\hat{\tau} =\widehat{C}_\sigma \omega_n$.

The robustification parameters involved in the FarmTest procedure can be selected in a similar fashion by modifying the loss function and the {cross-validation} criterion (\ref{cv_criterion}) accordingly. The theoretical order $\omega_n$ can be chosen as the rate that guarantees optimal bias-robustness tradeoff. Based on the theoretical results in Section~\ref{sec:3}, we summarize the optimal rates for various robustification parameters in Table \ref{Sim_tab_cv}. Robust estimation of $\mu_j$'s and the adaptive Huber covariance estimator involve multiple robustification parameters. If $X_1, \ldots, X_p$ are homoscedastic, it is reasonable to assume $\tau_j=\tau_{\mu}$ in  (\ref{rest}) for all $j=1, \ldots , p$. {Then we can choose $\tau_{\mu}$ by applying the cross-validation over a small subset of the covariates $X_1, \ldots, X_p$.  Similarly, we can set $\tau_{jk}=\tau_{\Sigma}$ in (\ref{off-diagonal}) for all $j,k$ and calibrate $\tau_{\Sigma}$ by applying the cross-validation over a subset of the entries.}

\begin{table}[htbp]
\setlength{\tabcolsep}{1em}
\begin{center}
\caption{Optimal rates for robustification parameters}
\medskip
\label{Sim_tab_cv}
{\renewcommand{\arraystretch}{1.5}
\begin{tabular}{ccc}
\hline\hline
\text{Estimator} & \text{Parameter} & \text{Optimal Rate}
\\\hline
Robust estimator of $\mu_j$ &  $\tau_j$  in (\ref{rest}) & $\sqrt{n/\log(np)}$
\\
$U$-type covariance estimator  & $\tau$ in (\ref{u-type_cov}) & $p \sqrt{n /\log (p)}$
\\
Adaptive Huber covariance estimator & $\tau_{jk}$  in (\ref{off-diagonal})& $\sqrt{n/\log(np^2)}$
\\
Robust estimator of $\bar{\bbf}$ &  $\gamma$ in (\ref{oracle.factor.est}) &  $\sqrt{p/\log(n)}$
\\\hline
\end{tabular}
}
\end{center}

\end{table}

\subsection{Settings}\label{sec:4.2}

In the simulation studies, we take $(p_1, p)= (25,500)$ so that $\pi_1=p_1/p=0.05$, $n\in \{100, 150, 200 \}$ and use $t=0.01$ as the threshold value for $P$-values. Moreover, we set the mean vector $\bmu= (\mu_1,\ldots, \mu_p)^\T$ to be $\mu_j=0.5$ for $1 \leq j \leq 25$ and $\mu_j =0$ otherwise. We repeat 1000 replications in each of the scenarios below. The robustifications parameters are selected by five-fold cross-validation under the guidance of their theoretically optimal orders. The data-generating processes are as follows.

\medskip
\noindent
{\bf Model 1: Normal factor model}. Consider a three-factor model $\bX_i  = \bmu + \Bb \bbf_i + \bvarepsilon_i$, $i=1, \ldots , n$, where $\bbf_i \sim \mathcal{N}( \mathbf{0} , \Ib_3)$, $\Bb=(b_{j \ell})_{1\leq j\leq p, 1\leq \ell \leq 3}$ has IID entries $b_{j\ell}$'s generated from the uniform distribution $\mathcal{U}(-2, 2)$.


\medskip
\noindent
{\bf Model 2: Synthetic factor model}. Consider a similar three-factor model as in Model~1, except that $\bbf_i$'s and $ \bb_j$'s are generated independently from $\mathcal{N}( \mathbf{0}, \bSigma_f)$ and $\mathcal{N}(\bmu_B, \bSigma_B)$, respectively, where $\bSigma_f$, $\bmu_B$ and $\bSigma_B$ are calibrated from the daily returns of S\&P 500's top 100 constituents (ranked by the market cap) between July 1st, 2008 and June 29th, 2012.

\medskip
\noindent
{\bf Model 3: Serial dependent factor model}. Consider a similar three-factor model as in Model 1, except that $\bbf_i$'s are generated from a stationary VAR$(1)$ model $\bbf_i={\bf \Pi} \bbf_{i-1} + {\bxi}_i$ for $i=1, \ldots , n$, with $\bbf_0={\bf 0}$ and $\bxi_i$'s IID drawn from $\mathcal{N}({\bf 0}, \Ib_3)$.
The $(j,  k)$-th entry of ${\bf \Pi}$ is set to be 0.5 when $j=k$ and $0.4^{|j-k|}$ otherwise.

The idiosyncratic errors in these three models are generated from one of the following four distributions. Let $\bSigma_{\bvarepsilon}$ be a sparse matrix whose diagonal entries are 3 and off-diagonal entries are drawn from IID $0.3 \times {\rm Bernoulli}(0.05)$;
\begin{itemize}
\item[(1)] Multivariate normal distribution $\mathcal{N}( \mathbf{0} , \bSigma_{\bvarepsilon})$;
\item[(2)] Multivariate $t$-distribution $t_{3}(\mathbf{0}  ,\bSigma_{\bvarepsilon})$ with 3 degrees of freedom;
\item[(3)] IID Gamma distribution with shape parameter 3 and scale parameter 1;
\item[(4)] IID re-scaled log-normal distribution $a \{ \exp(1+1.2 Z ) - b\}$, where $Z \sim \mathcal{N}(0, 1)$ and $a, b >0$ are chosen such that it has mean zero and variance $3$.
\end{itemize}

\subsection{FDP estimation}

In our robust testing procedure, the covariance matrix is either estimated by the entry-wise adaptive Huber method or by the $U$-type robust covariance estimator. The corresponding tests are labeled as {\sf FARM-H} and {\sf FARM-$U$}, respectively.

In this section, we compare {\sf FARM-H} and {\sf FARM-$U$} with three existing non-robust tests.
The first one is a factor-adjusted procedure using the sample mean and sample covariance matrix, denoted by {\sf FAM}. The second one is the {\sf PFA} method, short for principal factor approximation, proposed by \cite{FH2017}. In contrast to {\sf FAM}, {\sf PFA} directly uses the unadjusted test statistics and only accounts for the effect of latent factors in FDP estimation.  The third non-robust procedure is the {\sf Naive} method, which completely  ignores the factor dependence.


We first examine the accuracy of FDP estimation, which is assessed by the median of the relative  absolute error (RAE) between the estimated FDP and $\FDP_{{\rm orc}}(t):=\frac{ \sum_{j\in \cH_0} I( P_j \leq t) } { \max\{ 1, \sum_{j=1}^p I(P_j\leq t) \}}$, where $P_j = 2\Phi(-|T_j^{{\rm o}}|)$ and $T_j^{{\rm o}}$ are the oracle test statistics given in (\ref{oracle.stat}). For a given threshold value $t$, RAE for $k$th simulation is defined as
$$
	\mathrm{RAE}(k)= |\widehat{ \FDP}(t, k) - \FDP_{{\rm orc}}(t, k)|/\FDP_{{\rm orc}}(t, k),  \quad k=1,  \ldots ,1000,
$$
where  $\widehat{ \FDP}(t, k)$ is the estimated FDP in the $k$th simulation using one of the five competing methods and $\FDP_{{\rm orc}}(t,k)$ is the oracle FDP in the $k$th experiment. The median of RAEs are presented in Table \ref{Sim_tab_FDP_relative}. We see that, although the {\sf PFA} and {\sf FAM} methods achieve the smallest estimation errors in the normal case, {\sf FARM-H} and {\sf FARM-$U$} perform comparably well. In other words, a high level of efficiency is achieved if the underlying distribution is normal.
The {\sf Naive} method performs worst as it ignores the impact of the latent factors. In heavy-tailed cases, both {\sf FARM-H} and {\sf FARM-$U$} outperform the non-robust competitors by a wide margin, still with the {\sf Naive} method being the least favorable. In summary, the proposed methods achieve high degree of robustness against heavy-tailed errors, while losing little or no efficiency under normality.

\begin{table}[htbp]
\small
\begin{center}
\caption{Median relative  absolute error between estimated and oracle FDP}
\medskip
\label{Sim_tab_FDP_relative}
\begin{tabular}{c|c|c|ccccc}
\hline\hline
& &  &  \multicolumn{3}{c}{$p=500$}
\\
&   $\bvarepsilon_i$ &  $n$
  & {\sf FARM-H} & {\sf FARM-$U$} & {\sf FAM} & {\sf PFA} & {\sf Naive}
\\\hline
\multirow{12}{*}{Model 1} &
\multirow{3}{*}{Normal}       & 100  & 0.8042	& 0.8063	& 0.7716	& 0.7487	& 1.789
\\
&                             & 150  & 0.7902	& 0.7925	& 0.7467	& 0.7790	& 1.599
\\
&                             & 200  & 0.7665	& 0.7743	& 0.7437	& 0.7363	& 1.538
\\\cline{2-8}
& \multirow{3}{*}{$t_3$}      & 100  & 0.7047	& 0.7539	& 1.3894	& 1.4676	& 2.061
\\
&                             & 150  & 0.6817	& 0.6002	& 1.1542	& 1.2490	& 1.801
\\
&                             & 200  & 0.6780	& 0.5244	& 0.9954	& 1.1306	& 1.579
\\\cline{2-8}
& \multirow{3}{*}{Gamma}      & 100  & 0.7034	& 0.7419	& 1.4986	& 1.7028	& 3.299
\\
&                             & 150  & 0.6844	& 0.6869	& 1.4396	& 1.5263	& 2.844
\\
&                             & 200  & 0.6393	& 0.6446	& 1.3911	& 1.4563	& 2.041
\\\cline{2-8}
& \multirow{3}{*}{LN}         & 100  & 0.6943	& 0.7104	& 1.5629	& 1.7255	& 3.292
\\
&                             & 150  & 0.6487	& 0.6712	& 1.6128	& 1.7742	& 3.092
\\
&                             & 200  & 0.6137	& 0.6469	& 1.4476	& 1.4927	& 2.510
\\\hline
\multirow{12}{*}{Model 2} &
\multirow{3}{*}{Normal}       & 100  & 0.6804	& 0.7079	& 0.6195	& 0.6318	& 1.676
\\
&                             & 150  & 0.6928	& 0.6873	& 0.6302	& 0.6136	& 1.573
\\
 &                            & 200  & 0.6847	& 0.6798	& 0.6037	& 0.6225	& 1.558
\\\cline{2-8}
& \multirow{3}{*}{$t_3$}      & 100  & 0.6438	& 0.6641	& 1.3939	& 1.4837	& 2.206
\\
&                             & 150  & 0.6258	& 0.6466	& 1.2324	& 1.2902	& 1.839
\\
&                             & 200  & 0.6002	& 0.6245	& 1.0368	& 1.0811	& 1.481
\\\cline{2-8}
& \multirow{3}{*}{Gamma}      & 100  & 0.6404	& 0.6493	& 1.6743	& 1.7517	& 3.129
\\
&                             & 150  & 0.5979	& 0.5991	& 1.3618	& 1.4405	& 2.657
\\
&                             & 200  & 0.5688	& 0.5746	& 1.0803	& 1.1595	& 2.035
\\\cline{2-8}
& \multirow{3}{*}{LN}         & 100  & 0.7369	& 0.7793	& 2.0022	& 2.0427	& 3.664
\\
&                             & 150  & 0.6021	& 0.6122	& 1.7935	& 1.8796	& 3.056
\\
&                             & 200  & 0.5557	& 0.5588	& 1.6304	& 1.8059	& 2.504
\\\hline
\multirow{12}{*}{Model 3} &
\multirow{3}{*}{Normal}       & 100  & 0.7937	& 0.8038	& 0.7338	& 0.7651	& 1.991
\\
&                             & 150  & 0.7617	& 0.7750	& 0.7415	& 0.7565	& 1.888
\\
 &                            & 200  & 0.7544	& 0.7581	& 0.7428	& 0.7440	& 1.858
\\\cline{2-8}
& \multirow{3}{*}{$t_3$}      & 100  & 0.7589	& 0.7397	& 1.4302	& 1.6053	& 2.105
\\
&                             & 150  & 0.6981	& 0.7010	& 1.2980	& 1.3397	& 1.956
\\
&                             & 200  & 0.6596	& 0.6846	& 1.1812	& 1.1701	& 1.847
\\\cline{2-8}
& \multirow{3}{*}{Gamma}      & 100  & 0.7134	& 0.7391	& 1.7585	& 1.9981	& 3.945
\\
&                             & 150  & 0.6609	& 0.6744	& 1.5449	& 1.7437	& 3.039
\\
&                             & 200  & 0.6613	& 0.6625	& 1.4650	& 1.4869	& 2.295
\\\cline{2-8}
& \multirow{3}{*}{LN}         & 100  & 0.7505	& 0.7330	& 1.8019	& 1.9121	& 3.830
\\
&                             & 150  & 0.6658	& 0.7015	& 1.7063	& 1.7669	& 3.278
\\
&                             & 200  & 0.6297	& 0.6343	& 1.5944	& 1.6304	& 2.937
\\\hline

\end{tabular}
\end{center}

\end{table}

\subsection{Power performance}

In this section, we compare the powers of the five methods under consideration. The empirical power is defined as the average ratio between the number of correct rejections and $p_1$. The results are displayed in Table \ref{Sim_tab_Power}. In the normal case, {\sf FAM} has a higher power than {\sf PFA}. This is because {\sf FAM} adjusts the effect of latent factors for each individual hypothesis so that the signal-to-noise ratio is higher.
Again, both {\sf FARM-H} and {\sf FARM-$U$} tests only pay a negligible price in power under normality. In heavy-tailed cases, however, these two robust methods achieve much higher empirical powers than their non-robust counterparts. Moreover, to illustrate the relationship between the empirical power and signal strength, Figure~\ref{Sim_fig_Power} displays the empirical power versus signal strength ranging from 0.1 to 0.8 for Model~1 with $(n,p)=(200,500)$ and $t_3$-distributed errors.

\begin{table}[htbp]
\small
\begin{center}
\caption{Empirical powers}
\medskip
\label{Sim_tab_Power}
\begin{tabular}{c|c|c|ccccc}
\hline\hline
& &  &  \multicolumn{3}{c}{$p=500$}
\\
&   $\bvarepsilon_i$ &  $n$
  & {\sf FARM-H} & {\sf FARM-$U$} & {\sf FAM} & {\sf PFA} & {\sf Naive}
\\\hline
\multirow{12}{*}{Model 1} &
\multirow{3}{*}{Normal}       & 100  & 0.853 & 0.849  & 0.872  & 0.863  & 0.585
\\
&                             & 150  & 0.877 & 0.870  & 0.890  & 0.882  & 0.624
\\
 &                            & 200  & 0.909 & 0.907  & 0.924  & 0.915  & 0.671
\\\cline{2-8}
& \multirow{3}{*}{$t_3$}      & 100  & 0.816 & 0.815  & 0.630  & 0.610  & 0.442
\\
&                             & 150  & 0.828 & 0.826  & 0.668  & 0.657  & 0.464
\\
&                             & 200  & 0.894 & 0.870  & 0.702  & 0.691  & 0.502
\\\cline{2-8}
& \multirow{3}{*}{Gamma}      & 100  & 0.816 & 0.813  & 0.658  & 0.639  & 0.281
\\
&                             & 150  & 0.830 & 0.825  & 0.684  & 0.663  & 0.391
\\
 &                            & 200  & 0.889 & 0.873  & 0.712  & 0.707  & 0.433
\\\cline{2-8}
&    \multirow{3}{*}{LN}      & 100  & 0.798 & 0.786  & 0.566  & 0.534  & 0.242
\\
&                             & 150  & 0.817 & 0.805  & 0.587  & 0.673  & 0.292
\\
&                             & 200  & 0.844 & 0.835  & 0.613  & 0.605  & 0.369
\\\hline
\multirow{12}{*}{Model 2} &
\multirow{3}{*}{Normal}       & 100  & 0.801 & 0.799  & 0.864  & 0.855  & 0.584
\\
&                             & 150  & 0.856 & 0.846  & 0.880  & 0.870  & 0.621
\\
 &                            & 200  & 0.904 & 0.900  & 0.911  & 0.904  & 0.659
\\\cline{2-8}
& \multirow{3}{*}{$t_3$}      & 100  & 0.810 & 0.802  & 0.612  & 0.601  & 0.402
\\
&                             & 150  & 0.825 & 0.814  & 0.638  & 0.632  & 0.457
\\
&                             & 200  & 0.873 & 0.859  & 0.695  & 0.683  & 0.484
\\\cline{2-8}
& \multirow{3}{*}{Gamma}      & 100  & 0.804 & 0.798  & 0.527  & 0.509  & 0.216
\\
&                             & 150  & 0.821 & 0.819  & 0.594  & 0.557  & 0.289
\\
 &                            & 200  & 0.885 & 0.875  & 0.638  & 0.606  & 0.379
\\\cline{2-8}
&    \multirow{3}{*}{LN}      & 100  & 0.763 & 0.757  & 0.463  & 0.434  & 0.206
\\
&                             & 150  & 0.799 & 0.795  & 0.495  & 0.479  & 0.228
\\
&                             & 200  & 0.826 & 0.819  & 0.529  & 0.511  & 0.312
\\\hline
\multirow{12}{*}{Model 3} &
\multirow{3}{*}{Normal}       & 100  & 0.837 & 0.832  & 0.848  & 0.833  & 0.535
\\
&                             & 150  & 0.856 & 0.848  & 0.864  & 0.857  & 0.594
\\
 &                            & 200  & 0.875 & 0.871  & 0.902  & 0.896  & 0.628
\\\cline{2-8}
& \multirow{3}{*}{$t_3$}      & 100  & 0.801 & 0.796  & 0.606  & 0.591  & 0.403
\\
&                             & 150  & 0.818 & 0.816  & 0.640  & 0.612  & 0.426
\\
&                             & 200  & 0.881 & 0.872  & 0.675  & 0.643  & 0.501
\\\cline{2-8}
& \multirow{3}{*}{Gamma}      & 100  & 0.792 & 0.785  & 0.385  & 0.329  & 0.205
\\
&                             & 150  & 0.818 & 0.809  & 0.472  & 0.435  & 0.281
\\
 &                            & 200  & 0.874 & 0.867  & 0.581  & 0.565  & 0.367
\\\cline{2-8}
&    \multirow{3}{*}{LN}      & 100  & 0.783 & 0.776  & 0.355  & 0.336  & 0.187
\\
&                             & 150  & 0.804 & 0.795  & 0.442  & 0.406  & 0.231
\\
&                             & 200  & 0.859 & 0.849  & 0.514  & 0.487  & 0.326
\\\hline
\end{tabular}
\end{center}

\end{table}

\begin{figure}[htbp]
 \centering
 \includegraphics[scale=0.5]{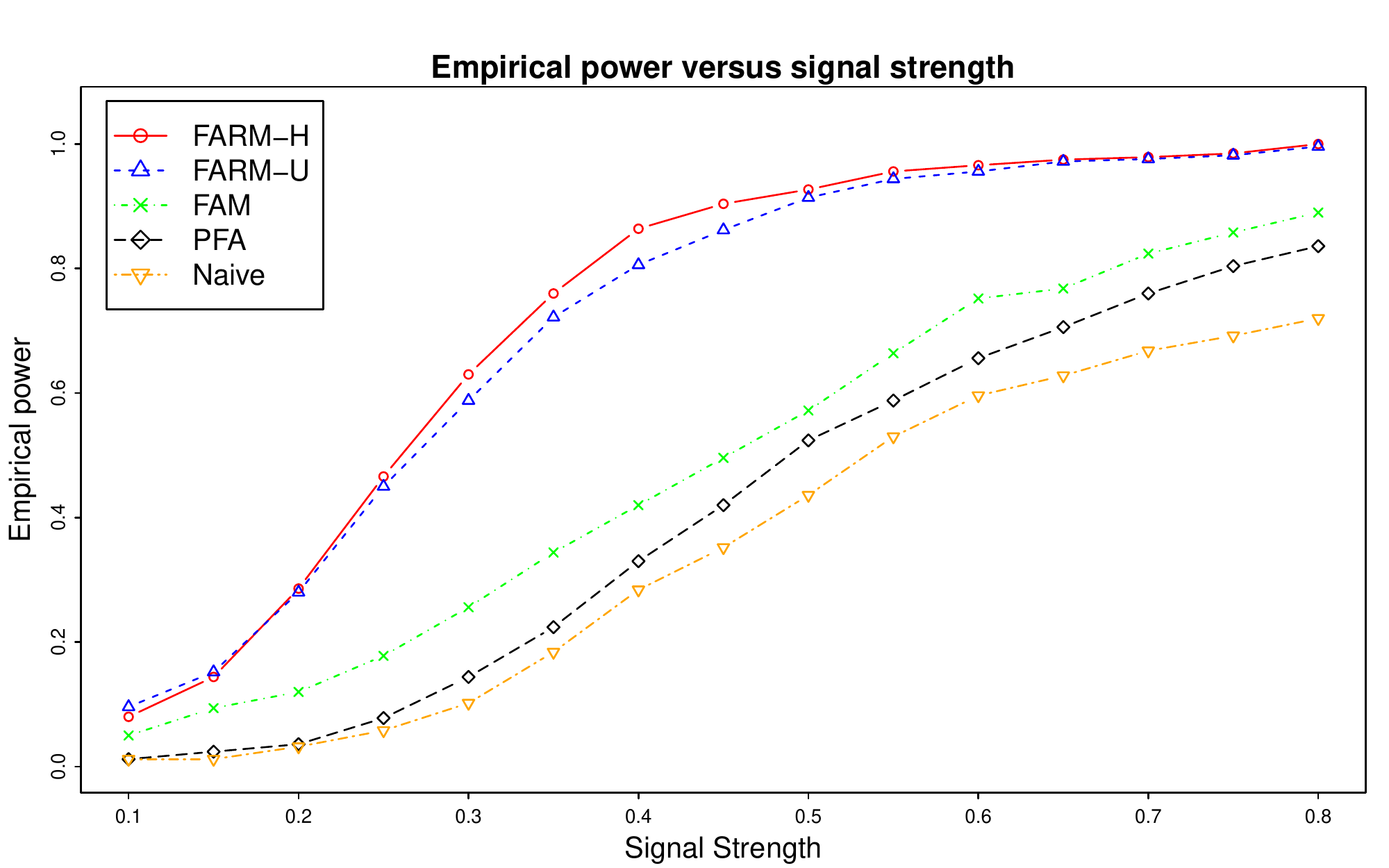}
  \begin{singlespace}
  \caption{Empirical power versus signal strength. The data are generated from Model 1 with $(n,p)=(200,500)$ and $t_3$-distributed noise. } \label{Sim_fig_Power}
  \end{singlespace}
\end{figure}

\subsection{FDP/FDR control}\label{sec:4.5}

In this section, we compare the numerical performance of the five tests in respect of FDP/FDR control.
We take $p=500$ and let $n$ gradually increase from 100 to 200.
The empirical FDP is defined as the average false discovery proportion based on 200 simulations.
At the prespecified level $\alpha=0.05$, Figure \ref{Sim_fig_FDP} displays the empirical FDP versus the sample size under Model~1.
In the normal case, all the four factor-adjusted  tests, {\sf FARM-H}, {\sf FARM-$U$}, {\sf FAM} and {\sf PFA}, have empirical FDPs controlled around or under $\alpha$.
For heavy-tailed data, {\sf FARM-H} and {\sf FARM-$U$} manage to control the empirical FDP under $\alpha$ for varying sample sizes; while {\sf FAM} and {\sf PFA} lead to much higher empirical FDPs, indicating more false discoveries. This phenomenon is in accord with our intuition that outliers can sometimes be mistakenly regarded as discoveries.
The {\sf Naive} method performs worst throughout all models and settings. Due to limitations of space, numerical results for Models~2 and 3 are given in Appendix E of the online supplement.

\begin{figure}[htbp]
 \centering
 \includegraphics[scale=0.4]{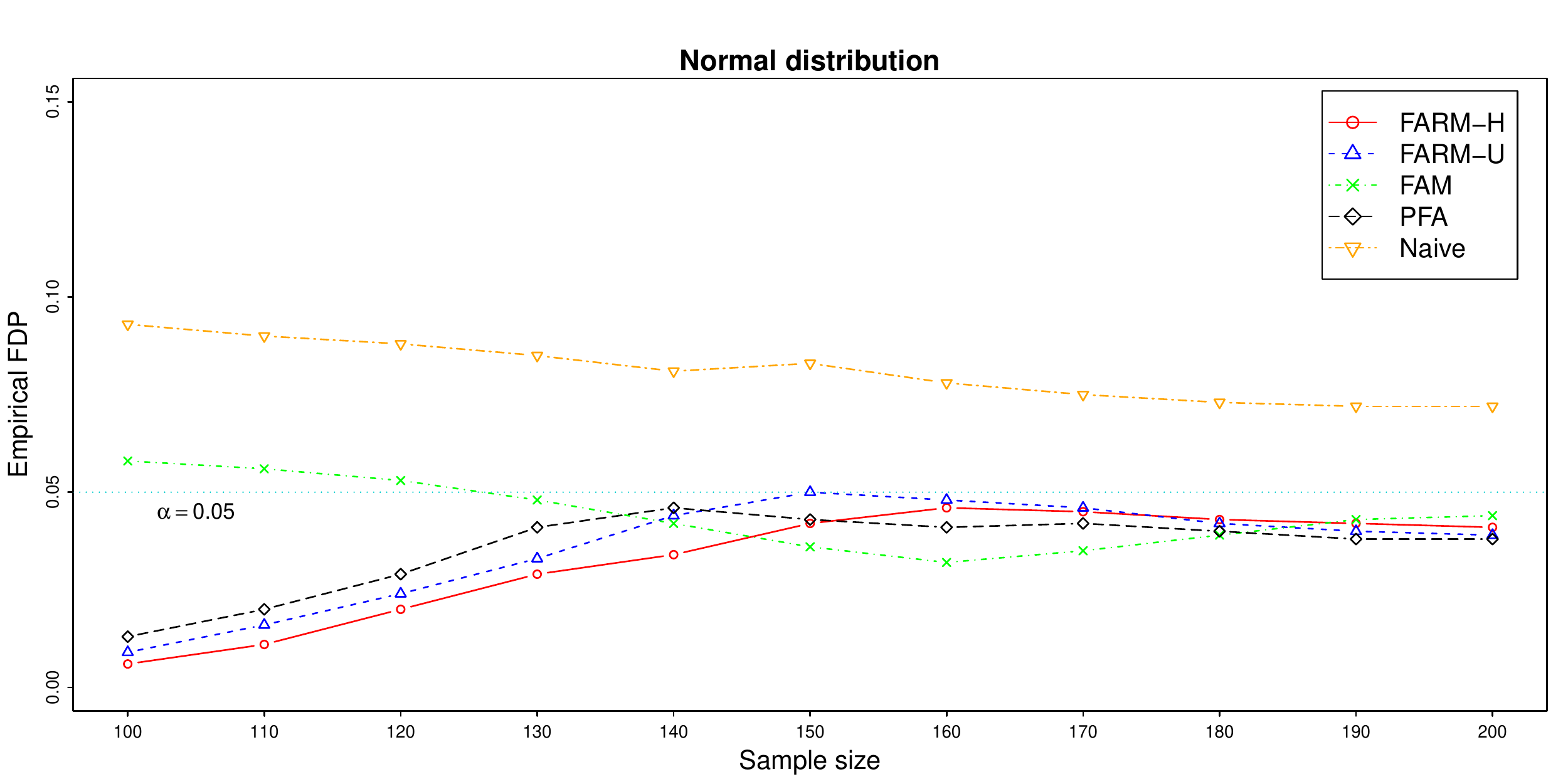}
 \includegraphics[scale=0.4]{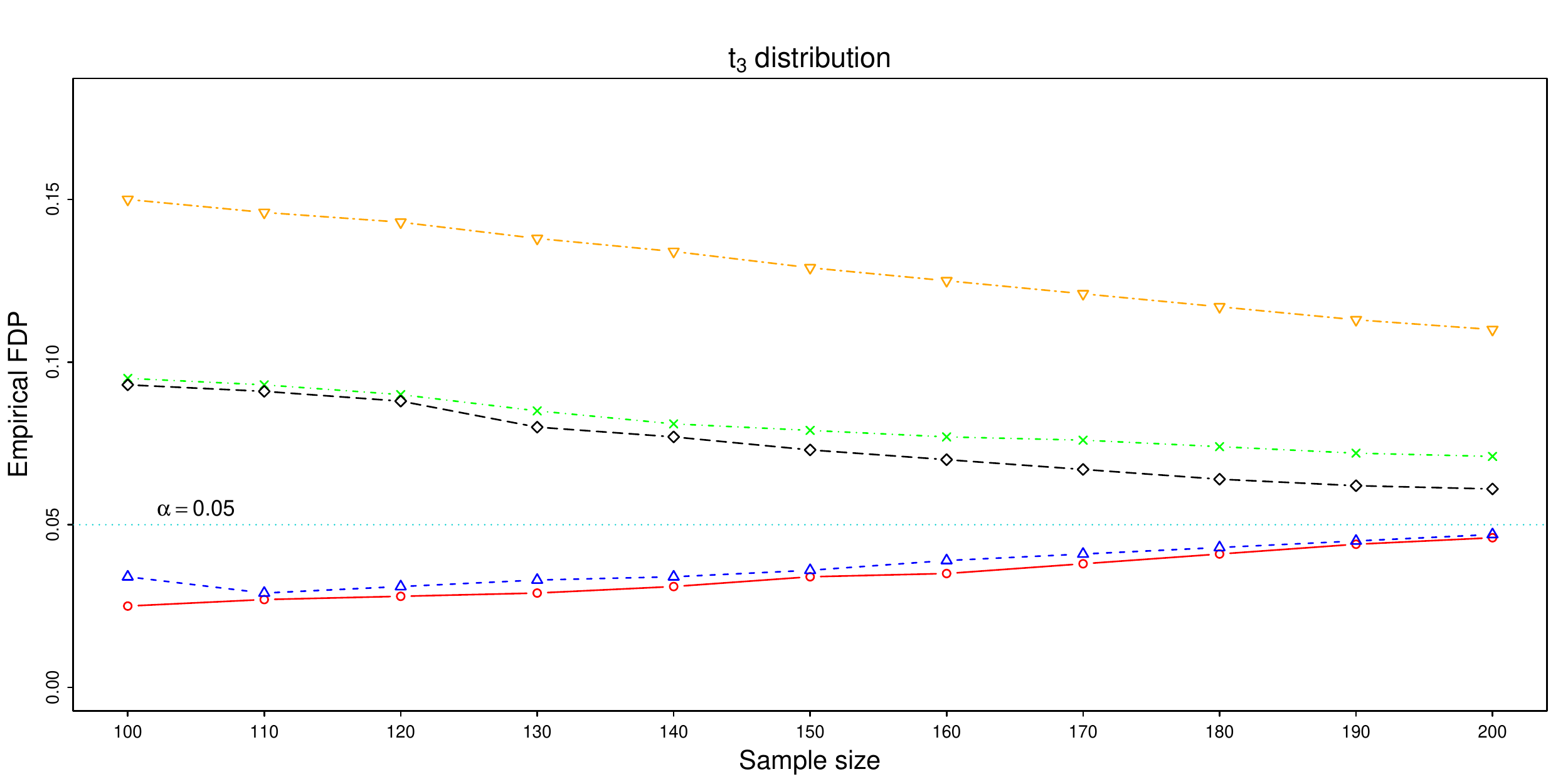}
 \includegraphics[scale=0.4]{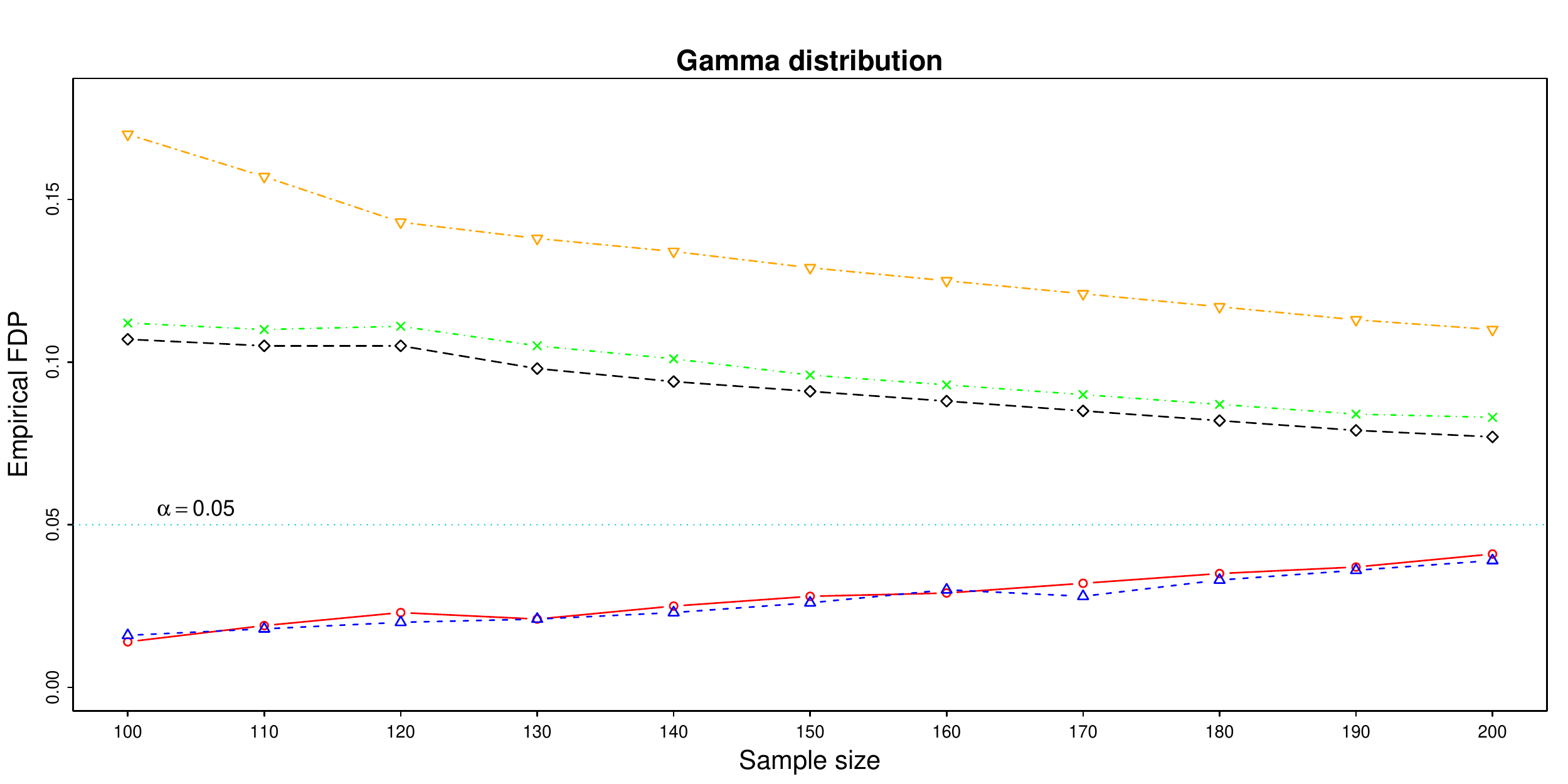}
 \includegraphics[scale=0.4]{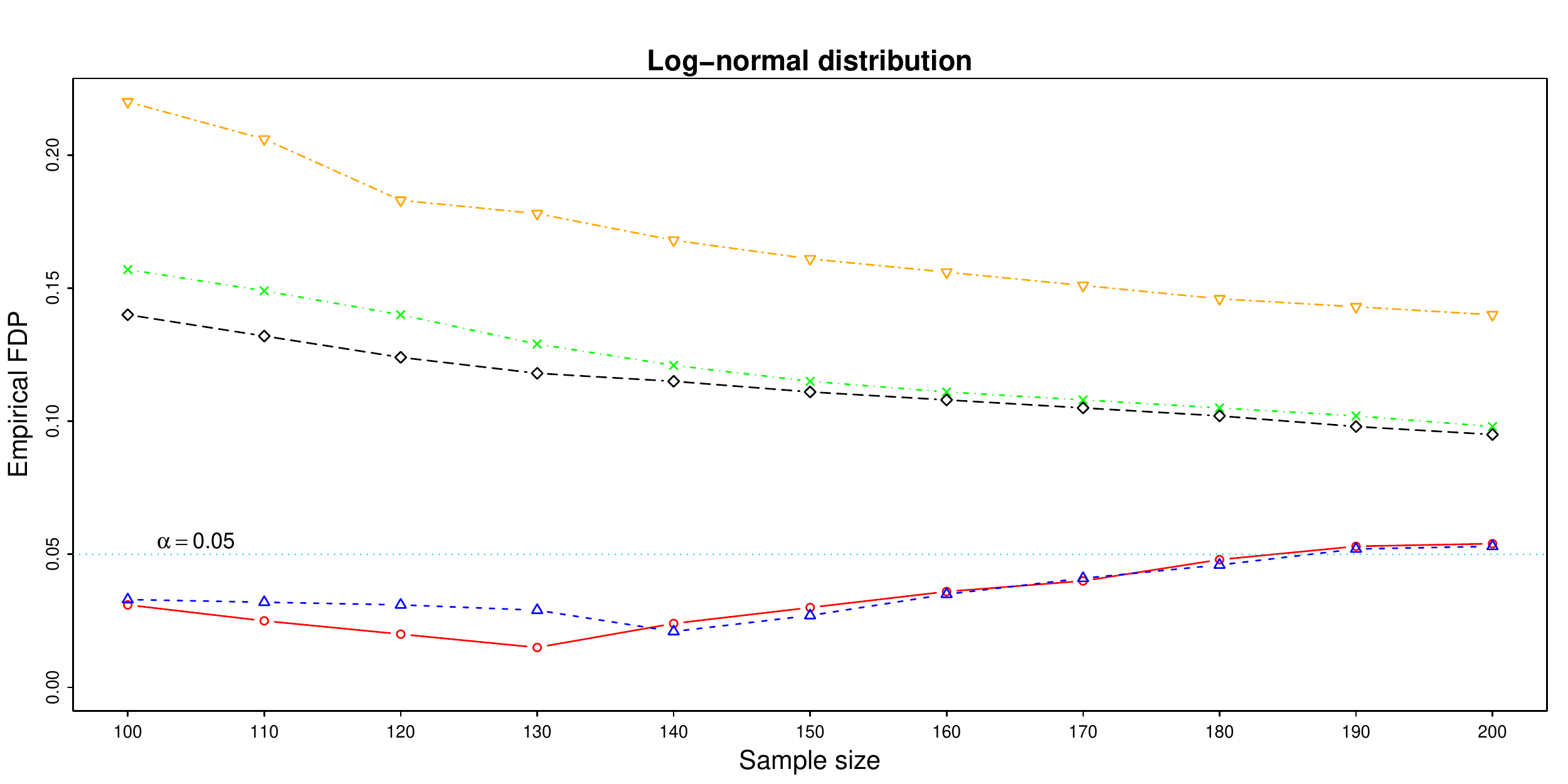}
  \begin{singlespace}
  \caption{Empirical FDP versus sample size for the five tests at level $\alpha=0.05$. The data are generated from Model 1 with $p=500$ and sample size $n$ ranging from 100 to 200 with a step size of 10. The panels from top to bottom correspond to the four error distributions in Section \ref{sec:4.2}. } \label{Sim_fig_FDP}
  \end{singlespace}
\end{figure}

\section{Real data analysis}
\label{sec:realdata}

\cite{Oberthuer_06} analyzed the German Neuroblastoma Trials NB90-NB2004 (diagnosed between 1989 and 2004) and developed a gene expression based classifier.
For 246 neuroblastoma patients, gene expressions over 10,707 probe sites were measured. The binary response variable is the 3-year event-free survival information of the patients (56 positive and 190 negative).
We refer to  \cite{Oberthuer_06} for a detailed description of the dataset.
In this study, we divide the data into two groups, one with positive responses and the other with negative responses, and test the equality of gene expression levels at all the 10,707 probe sites simultaneously.  To that end, we generalize the proposed FarmTest to the two-sample case by defining the following two-sample $t$-type statistics
$$
T_j=\frac{ (\hat{\mu}_{1j}- \hat{{\bb}}_{1j}^\T \hat{\bbf}_{1} )-
(\hat{\mu}_{2j}- \hat{{\bb}}_{2j}^\T \hat{\bbf}_{2} )}
{(\hat \sigma_{1\varepsilon , jj}/56 + \hat \sigma_{2\varepsilon , jj}/190)^{1/2}},
~~ j=1, \ldots , 10707,
$$
where the subscripts 1 and 2 correspond to the positive and negative groups, respectively. Specifically, $\hat{\mu}_{1j}$ and  $\hat{\mu}_{2j}$ are the robust mean estimators obtained from minimizing the empirical Huber risk (\ref{rest}), and $\hat{{\bb}}_{1j}$, $\hat{{\bb}}_{2j}$, $\hat{\bbf}_{1}$ and $\hat{\bbf}_{2}$ are robust estimators of the factors and loadings based on the $U$-type covariance estimator. In addition, $\hat\sigma_{1\varepsilon , jj}$ and $\hat\sigma_{2\varepsilon , jj}$ are the variance estimators defined in (\ref{dd.test}). As before, the robustification parameters are selected via five-fold cross-validation with their theoretically optimal orders taking into account.

\begin{figure}[htbp]
 \centering
 \includegraphics[width=5.3 in]{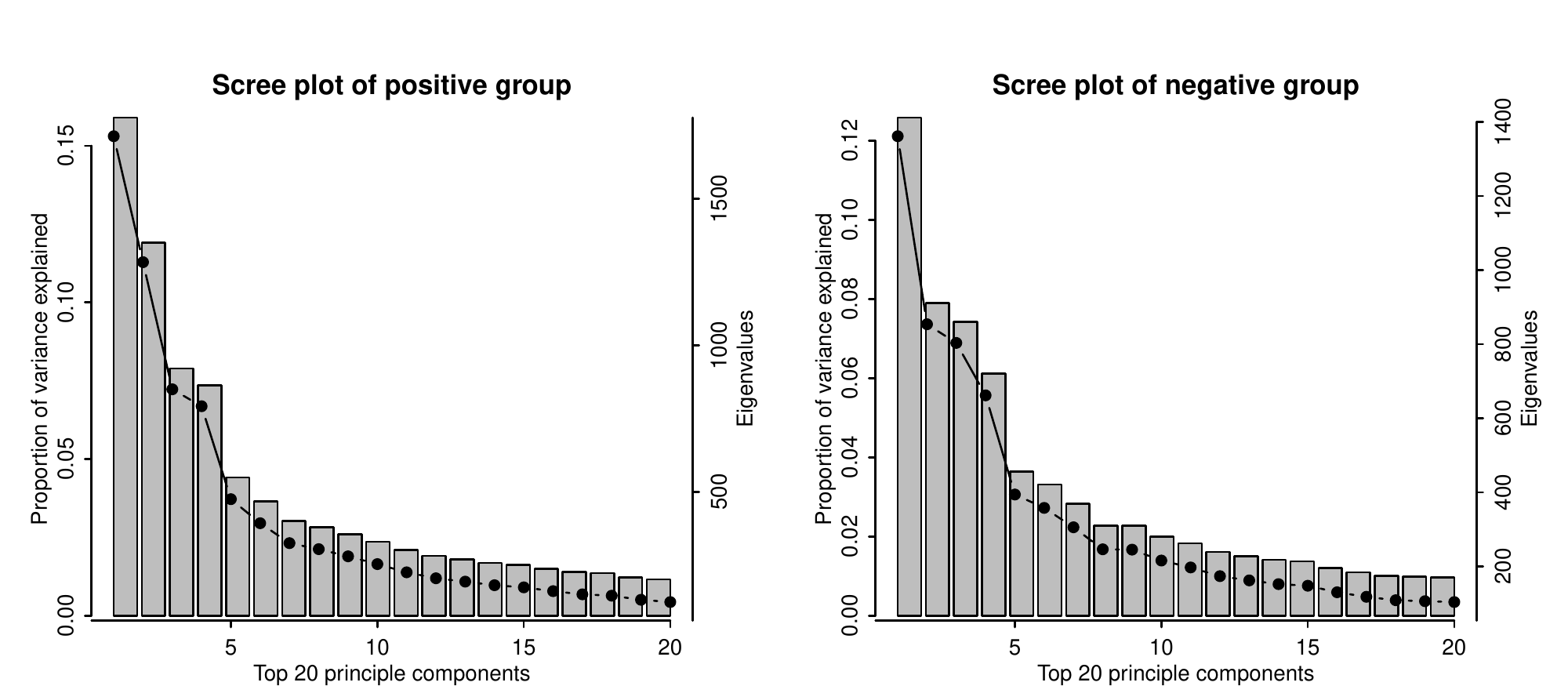}
 \begin{singlespace}
  \caption{{\bf Scree plots for positive and negative groups}.
  The bars represent the proportion of variance explained by the top 20 principal components. The dots represent the corresponding eigenvalues in descending order.}
 \label{Fig_Scree}
 \end{singlespace}
\end{figure}

\begin{figure}[htbp]
 \centering
 \includegraphics[width=5.3 in]{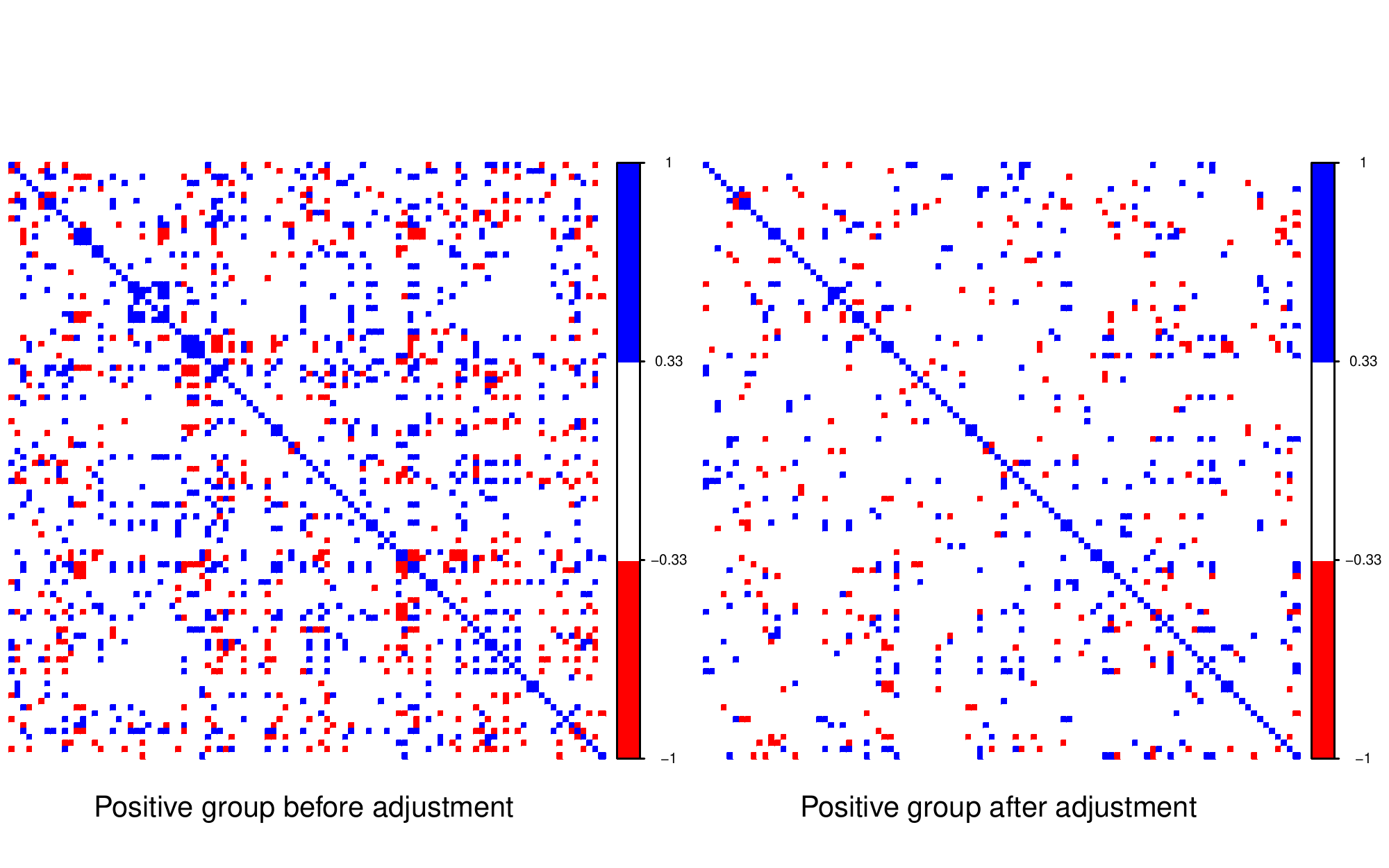}

 \vspace*{-0.15in}

 \includegraphics[width=5.3 in]{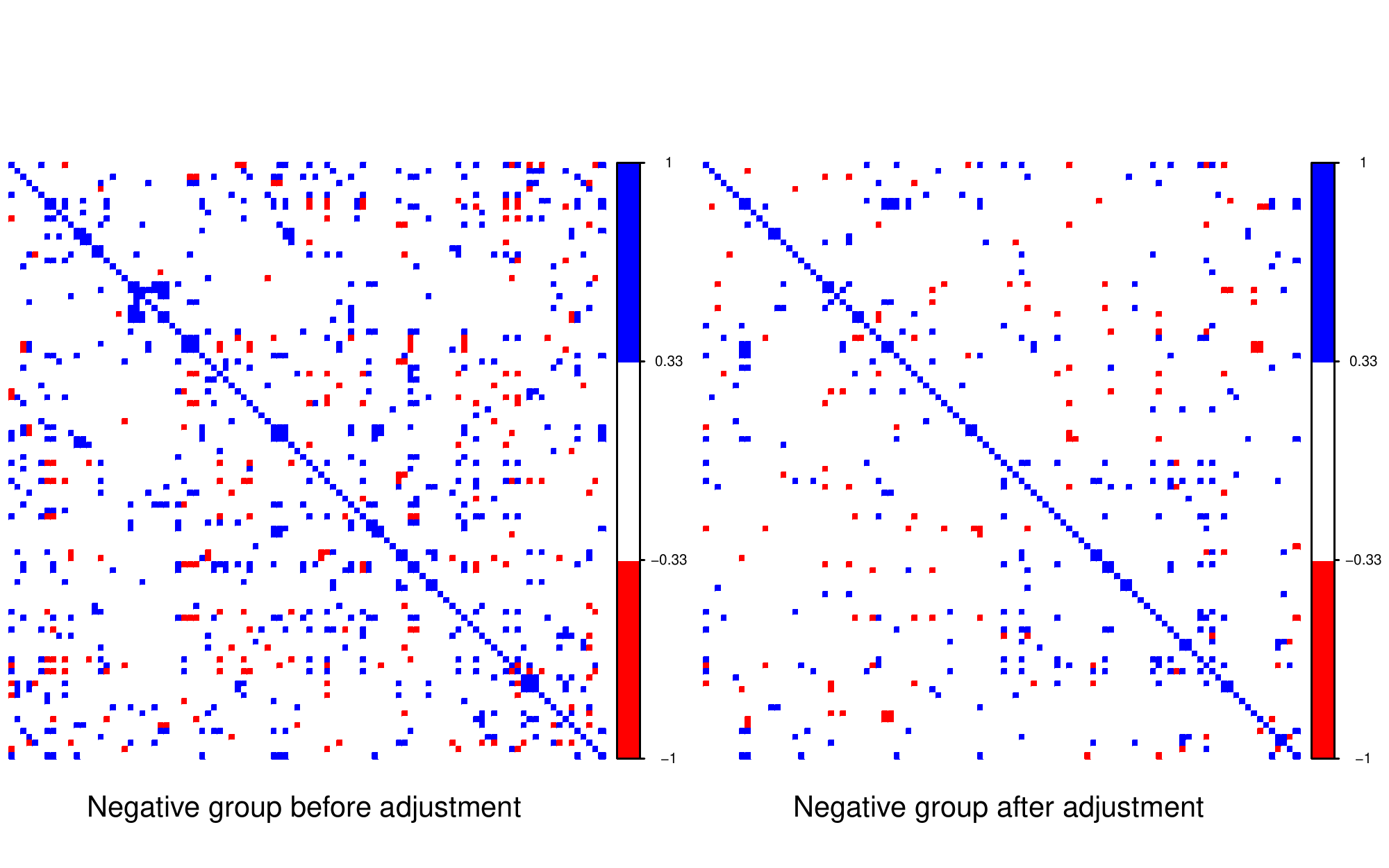}
  \begin{singlespace}
  \caption{{\bf Correlations among the first 100 genes before and after factor-adjustment}. The pixel plots are the correlation matrices of the first 100 gene expressions.
  In the plots, the blue pixels represent the entries with correlation greater than 1/3 and the red pixels represent the entries with correlation smaller than -1/3.}
 \label{Fig_corr}
 \end{singlespace}
\end{figure}

\begin{figure}[htbp]
 \centering
 \includegraphics[width=5.3 in]{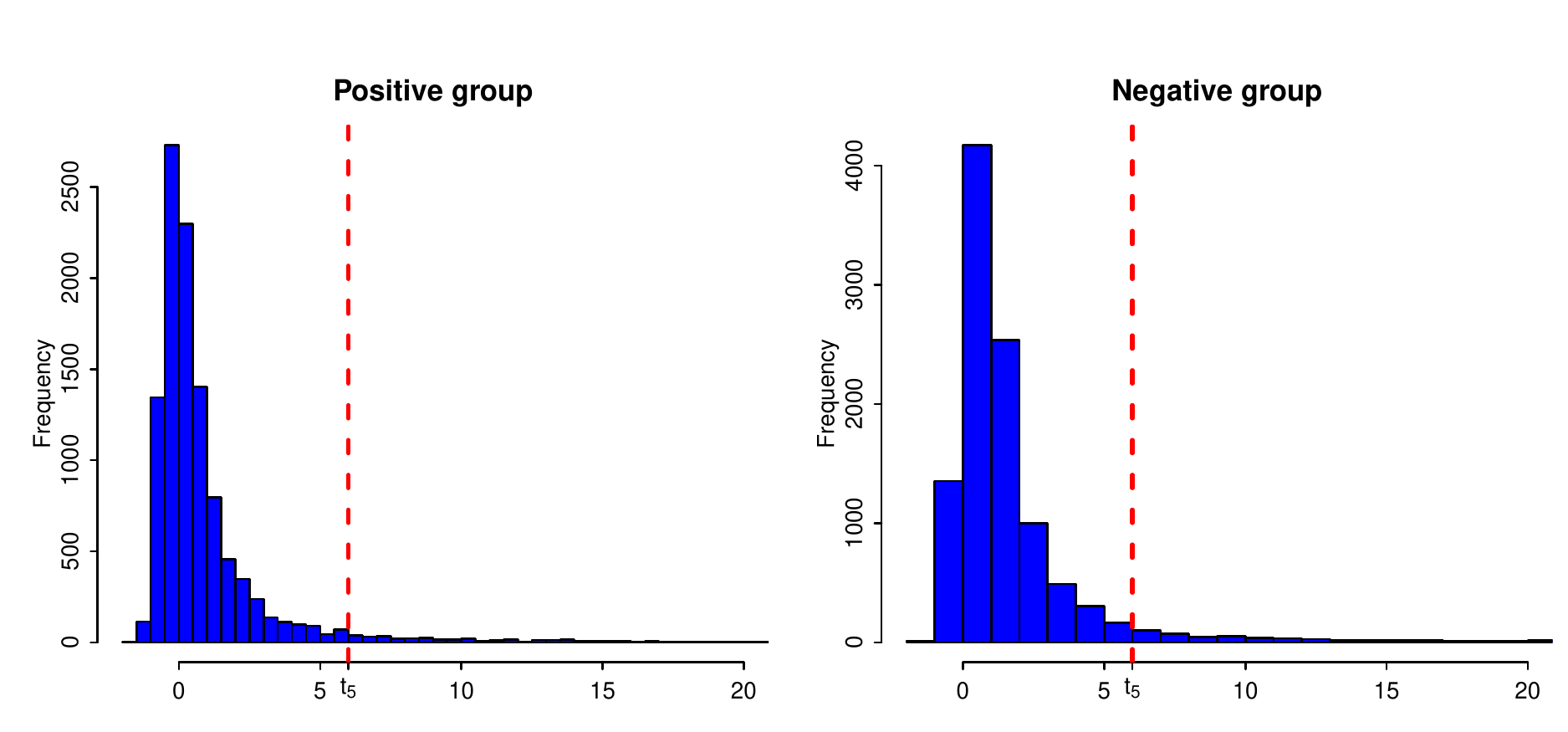}
   \begin{singlespace}
  \caption{{\bf Histogram of excess kurtosises for the gene expressions in positive and negative groups}. The dashed line at 6 is the excess kurtosis of $t_5$-distribution.}
 \label{Fig_EK}
 \end{singlespace}
\end{figure}

We use the eigenvalue ratio method \citep{LY2012, Ahn_Horen_2013} to determine the number of factors.  Let  $\lambda_k(\hat{\bSigma})$ be the
$k$-th largest eigenvalue of $\hat{\bSigma}$ and $K_{\max}$ a prespecified upper bound. The number of factors can then be estimated by
$$
\hat{K}=\arg\max\limits_{ 1\leq k \leq K_{\max}} \ {\lambda_k(\hat{\bSigma})}/{\lambda_{k+1}(\hat{\bSigma})}.
$$
The eigenvalue ratio method suggests $K=4$ for both positive and negative groups. Figure~\ref{Fig_Scree} depicts scree plots of the top 20 eigenvalues for each group. The gene expressions in both groups are highly correlated. As an evidence, the top 4 principal components (PCs) explain 42.6\% and 33.3\% of the total variance for the positive and negative groups, respectively.

To demonstrate the importance of the factor-adjustment procedure, for each group, we plot the correlation matrices of the first 100 gene expressions before and after adjusting the top 4 PCs; see Figure~\ref{Fig_corr}. The blue and red pixels in Figure~\ref{Fig_corr} represent the pairs of gene expressions whose absolute correlations are greater than 1/3. Therefore, after adjusting the top 4 PCs,  the number of off-diagonal entries with strong correlations is significantly reduced in both groups. To be more specific, the number drops from 1452 to 666 for the positive group and from 848 to 414 for the negative group.

Another stylized feature of the data is that distributions of many gene expressions are heavy-tailed. To see this, we plot histograms of  the excess kurtosis of the gene expressions in Figure~\ref{Fig_EK}. The left panel of the Figure \ref{Fig_EK} shows that 6518 gene expressions have positive excess kurtosis with 420 of them greater than 6. In other words, more than 60\% of the gene expressions in the positive group have tails heavier than the normal distribution and about 4\% are severely heavy tailed as their tails are fatter than the $t$-distribution with 5 degrees of freedom. Similarly, in the negative group, 9341 gene expressions exhibit positive excess kurtosis with 671 of them greater than 6. Such a heavy-tailed feature indicates the necessity of using robust methods to estimate the mean and covariance of the data.

{We apply four tests, the two-sample {\sf FARM-H} and {\sf FARM-$U$}, the {\sf FAM} test and the naive method, to this dataset.
At level $\alpha=0.01$, the two-sample {\sf FARM-H} and {\sf FARM-$U$} methods identify, respectively, 3912 and 3855 probes with different gene expressions, among which 3762 probes are identical. This shows an approximately 97\% similarity in the two methods.
The {\sf FAM} and naive methods discover  3509 and 3236 probes, respectively. For this dataset, accounting for latent factor dependence indeed leads to different statistical conclusions. This visible discrepancy between the two robust methods and {\sf FAM} highlights the importance of robustness and reflects the difference in power of detecting differently expressed probes. The effectiveness of factor adjustment is also highlighted in the discovery of significant genes}.

\section{Discussion and extensions}\label{sec:discuss}

In this paper, we have developed a factor-adjusted multiple testing procedure (FarmTest) for large-scale simultaneous inference with dependent and heavy-tailed data, the key of which lies in a robust estimate of the false discovery proportion. The procedure has two attractive features: First, it incorporates dependence information to construct marginal test statistics. Intuitively, subtracting common factors out leads to higher signal-to-noise ratios, and therefore makes the resulting FDP control procedure more efficient and powerful. Second, to achieve robustness against heavy-tailed errors that may also be asymmetric, we used the adaptive Huber regression method \citep{FLW2017, ZBFL2017} to estimate the realized factors, factor loadings and variances. We believe that these two properties will have further applications to higher criticism for detecting sparse signals with dependent and non-Gaussian data; see \cite{DHJ2011} for the independent case.

In other situations, it may be more instructive to consider the mixed effects regression modeling of the data \citep{FKC2009, WZHO2015}, that is, $X_j  = \mu_j  + \bbeta_j^\T \bZ + \bb_j^\T \bbf + \varepsilon_j$ for $j =1,\ldots, p$, where $\bZ \in \RR^q$ is a vector of explanatory variables (e.g., treatment-control, phenotype, health trait), $\bbeta_j$'s are $q\times 1$ vectors of unknown slope coefficients, and $\bbf$, $\bb_j$'s and $\varepsilon_j$'s have the same meanings as in \eqref{obs}. Suppose we observe independent samples $ (\bX_1, \bZ_1),\ldots, (\bX_n, \bZ_n)$ from $(\bX,\bZ)$ satisfying
\#
	\bX_i  =  \bmu + \bTheta \bZ_i + \Bb \bbf_i + \bvarepsilon_i , \ \ i=1,\ldots, n,  \nn
\#
where $\bTheta = (\bbeta_1, \ldots, \bbeta_p)^\T \in \RR^{p\times q}$. In this case, we have $\EE(\bX_i | \bZ_i) = \bmu + \bTheta \bZ_i$ and $\cov(\bX_i | \bZ_i ) = \Bb \bSigma_f \Bb^\T + \bSigma_\varepsilon$. The main issue in extending our methodology to such a mixed effects model is the estimation of $\bTheta$. For this, we construct robust estimators $(\hat{\mu}_j , \hat \bbeta_j)$ of $(\mu_j, \bbeta_j)$, defined as
$$
	(\hat{\mu}_j , \hat \bbeta_j) \in \arg\min_{\mu\in \RR, \, \bbeta_j \in \RR^q} \sn \ell_{\tau_j} (X_{ij} - \mu - \bbeta_j^\T \bZ_i ),  \ \ 1\leq  j\leq p,
$$
where $\tau_j$'s are robustification parameters. 
Taking $\hat{\bTheta}= (\hat{\bbeta}_1,\ldots, \hat{\bbeta}_p)^\T$, the FarmTest procedure in Section~\ref{sec:2.2} can be directly applied with $\{ \bX_i \}_{i=1}^n$ replaced by $\{ \bX_i - \hat{\bTheta} \bZ_i \}_{i=1}^n$. {However, because $\hat{\bTheta}$ depends on $\{ (\bX_i, \bZ_i) \}_{i=1}^n$, the adjusted data $\bX_1 - \hat{\bTheta} \bZ_1, \ldots, \bX_n - \hat{\bTheta} \bZ_n$ are no longer independent, which causes the main difficulty of extending the established theory in Section~\ref{sec:3} to the current setting.  One way to bypass this issue and to facilitate the theoretical analysis is the use of sample splitting as discussed in Appendix~A of the online supplement.
The FarmTest procedure for mixed effects models was also implemented in the {\sf R}-package {\sf FarmTest} (\href{https://cran.r-project.org/web/packages/FarmTest}{https://cran.r-project.org/web/packages/FarmTest}).}

\section*{Appendix}
\appendix

\section{Sample splitting}
\label{sec:sample.split}

Our procedure described in Sections~3.2 and 3.3 consists of two parts, the calibration of a factor model (i.e. estimating $\Bb$ in equation (1)) and multiple inference. The construction of the test statistics, or equivalently, the $P$-values, relies on a ``fine" estimate of $\bar{\bbf}$ based on the linear model in (25). In practice, $\bb_j$'s are replaced by the fitted loadings $\hat{\bb}_j$'s using the methods in Section~3.2.

To avoid mathematical challenges caused by the reuse of the sample, we resort to the simple idea of sample splitting \citep{H1969,C1975}: half the data are used for calibrating a factor model and the other half are used for multiple inference. We refer to \cite{R2016} for a modern look at inference based on sample splitting. Specifically, the steps are summarized below.

\begin{itemize}
\item[(1)] Split the data $\mathcal{X}= \{ \bX_1, \ldots, \bX_n \}$ into two halves $\mathcal{X}_1$ and $\mathcal{X}_2$. For simplicity, we assume that the data are divided into two groups of equal size $m=n/2$.

\item[(2)] We use $\mathcal{X}_1$ to estimate $\bb_1, \ldots, \bb_p$ using either the $U$-type method (Section~3.2.1) or the adaptive Huber method (Section~3.2.2). For simplicity, we focus on the latter and denote the estimators by $\hat{\bb}_1(\mathcal{X}_1),\ldots, \hat{\bb}_p(\mathcal{X}_1)$.

\item[(3)] Proceed with the remain steps in the FarmTest procedure using the data in $\mathcal{X}_2$. Denote the resulting test statistics by ${T}_1, \ldots, {T}_p$. For a given threshold $z \geq 0$, the
corresponding FDP and its asymptotic expression are defined as
\#
	 {\FDP}_{\rm sp}(z ) = {V}(z ) /  {R}(z ) ~\mbox{ and }~  \AFDP_{\rm sp}(z ) =  2 p \Phi(-z) /    R(z), \nn
\#
respectively, where $ {V}(z ) =  \sum_{j \in \mathcal{H}_0}  I ( |  {T}_j |\geq z )$,  ${R}(z ) = \sum_{1\leq j \leq p}  I (  |  {T}_j  | \geq z  )$ \textnormal{and} the subscript ``sp'' stands for sample splitting.
\end{itemize}

The purpose of sample splitting employed in the above procedure is to facilitate the theoretical analysis. The following result shows that the asymptotic FDP $\AFDP_{\rm sp}(z)$, constructed via sample splitting, provides a consistent estimate of ${\FDP}(z)$.

\begin{theorem} \label{thm.all}
Suppose that Assumptions~1\,(i)--(iv), Assumptions 2--4 hold. Let $\tau_j = a_j \omega_{n,p}$, $\tau_{jj} = a_{jj} \omega_{n,p}$ with $a_j \geq \sigma_{jj}^{1/2}$, $a_{jj} \geq \var(X_j^2)^{1/2}$ for $j=1,\ldots, p$, and let $\gamma = \gamma_0\{ p/  \log(np) \}^{1/2}$ with $\gamma_0 \geq \overline{\sigma}_\varepsilon$. Then, for any $z\geq 0$, $|  \AFDP_{\rm sp}(z ) - {\FDP}_{\rm sp}(z) | = o_{\PP} (1)$ as $n, p \to \infty$.
\end{theorem}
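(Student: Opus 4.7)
\medskip
\noindent\textbf{Proof plan.} The point of sample splitting is to decouple the loading estimation step from the inference step through conditional independence: once we condition on $\mathcal{X}_1$, the estimators $\hat{\bb}_j := \hat{\bb}_j(\mathcal{X}_1)$ become deterministic with respect to the independent half-sample $\mathcal{X}_2$, so the analysis of the statistics $\{T_j\}_{j=1}^p$ on $\mathcal{X}_2$ can be carried out as in Theorem~\ref{thm2} with these deterministic-looking loadings. The strategy therefore is to combine the concentration bound for $\hat{\bb}_j(\mathcal{X}_1)$ from Theorem~\ref{thm4} with the FDP-approximation argument of Theorem~\ref{thm2} (and the perturbation statement of Proposition~\ref{prop.obs}) applied in the conditional probability given $\mathcal{X}_1$.

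The proof proceeds in three steps. First, I apply Theorem~\ref{thm4} to $\mathcal{X}_1$, which has size $m = n/2$, with the robustification parameters $\tau_j = a_j \omega_{m,p}$ and $\tau_{jk} = a_{jk} \omega_{m,p^2}$ rescaled to $m$. This yields
\begin{align*}
\max_{1\leq j\leq p} \|\hat{\bb}_j(\mathcal{X}_1) - \bb_j\|_2 \leq C(\omega_{n,p}^{-1} + p^{-1/2})
\end{align*}
with probability at least $1 - 4/n$. Under Assumption~\ref{cond.sparsity}, which enforces $n\log n = o(p)$, the right-hand side is $o_\PP\{(\log n)^{-1/2}\}$, matching the rate required by Proposition~\ref{prop.obs}. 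A parallel Huber-based estimate on $\mathcal{X}_1$ controls $\max_{j} |\hat{\sigma}_{jj}(\mathcal{X}_1) - \sigma_{jj}|$ at the same rate.

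Second, I condition on $\mathcal{X}_1$ and treat $\hat{\bb}_j$ as given input. The statistics computed from $\mathcal{X}_2$ are exactly of the form $T_j(\hat{\bb})$ analyzed in Section~\ref{sec:3.3}: $\hat{\mu}_j$ is the Huber mean estimator on $\mathcal{X}_2$, $\hat{\bbf}$ solves $\arg\min_{\bbf} \sum_j \ell_\gamma(\bar{X}_j^{(2)} - \hat{\bb}_j^\T \bbf)$ where $\bar{X}_j^{(2)}$ averages over $\mathcal{X}_2$, and $\hat{\sigma}_{\varepsilon,jj} = \hat{\theta}_j - \hat{\mu}_j^2 - \|\hat{\bb}_j\|_2^2$. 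Because $\mathcal{X}_2$ is independent of $\hat{\bb}_j$, the Huber concentration used in \eqref{stat.dec1} and Proposition~\ref{prop2} applies verbatim on $\mathcal{X}_2$ with $\Bb$ replaced by $\hat{\bb}$, giving a conditional analogue of $\|\hat{\bbf} - \bar{\bbf}^{(2)}\|_2 = O_\PP(\{p/\log n\}^{-1/2})$. Assembling these pieces, the decomposition
\begin{align*}
T_j = \sqrt{n/\hat{\sigma}_{\varepsilon,jj}}\,\{(\hat{\mu}_j - \mu_j - \bb_j^\T\bar{\bbf}^{(2)}) + \bb_j^\T(\bar{\bbf}^{(2)} - \hat{\bbf}) + (\bb_j - \hat{\bb}_j)^\T \hat{\bbf}\} + \sqrt{n}\,\mu_j/\sqrt{\hat{\sigma}_{\varepsilon,jj}}
\end{align*}
produces a uniform (in $j$) Gaussian approximation with mean $\sqrt{n}\mu_j/\sqrt{\sigma_{\varepsilon,jj}}$. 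The conditional law of large numbers then gives the analogues of \eqref{V.LLN} and \eqref{R.LLN} in Theorem~\ref{thm1}, which yields $|\FDP_{\rm sp}(z) - \AFDP_{\rm sp}(z)| = o_\PP(1)$ conditionally on $\mathcal{X}_1$ on a high-probability event. Integrating over $\mathcal{X}_1$ removes the conditioning.

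The main obstacle is the bookkeeping in the second step: the additional perturbation $(\hat{\bb}_j - \bb_j)^\T \hat{\bbf}$ in the decomposition above, together with the propagation of $\hat{\bb}_j - \bb_j$ through the factor-estimation objective $\sum_j \ell_\gamma(\bar{X}_j^{(2)} - \hat{\bb}_j^\T \bbf)$, must be controlled uniformly in $j$. The bound from Step~1 is $o_\PP\{(\log n)^{-1/2}\}$, which is exactly what Proposition~\ref{prop.obs} requires, so these perturbations contribute only $o_\PP(1)$ to each $T_j$ and do not disrupt the Gaussian approximation. Once this is verified, the argument of Theorem~\ref{thm2} applies with $\bb_j$ replaced by $\hat{\bb}_j$ throughout, and the sparsity plus rate conditions in Assumption~\ref{cond.sparsity} ensure that the factor-estimation bias coming from the nonzero $\mu_j$'s remains negligible. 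The conclusion then follows.
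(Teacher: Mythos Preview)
Your strategy matches the paper's: condition on $\mathcal{X}_1$, control $\max_j\|\hat{\bb}_j-\bb_j\|_2$ via Theorem~\ref{thm4}, then rerun the Theorem~\ref{thm2} argument on $\mathcal{X}_2$ with $\hat{\bb}_j$ treated as fixed, and reduce to the oracle comparison of Theorem~\ref{thm1}. The decomposition you write for $T_j$ is essentially the one the paper uses.

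There is one step you gloss over that the paper carries out explicitly. You assert that Proposition~\ref{prop2} ``applies verbatim on $\mathcal{X}_2$ with $\Bb$ replaced by $\hat{\bb}$.'' But Proposition~\ref{prop2} is stated under Assumption~\ref{cond.mineigen}, i.e.\ it needs $\lambda_{\min}(p^{-1}\Bb^\T\Bb)\geq c_l$ and $\|\Bb\|_{\max}\leq c_u$; these are structural conditions on the \emph{input} loading matrix, not just closeness to the truth. To invoke it with $\hat{\Bb}$ you must show that, on the high-probability event from Step~1, $\hat{\Bb}$ itself satisfies analogous bounds. The paper does this by proving $\|p^{-1}(\hat{\Bb}^\T\hat{\Bb}-\Bb^\T\Bb)\|\lesssim \omega_{n,p}^{-1}+p^{-1/2}$ (via $\max_j\|\hat{\bb}_j-\bb_j\|_2$) and $\|\hat{\Bb}\|_{\max}\lesssim 1$ (via $\tilde\lambda_\ell^{1/2}\lesssim\sqrt{p}$ and $\|\hat{\bv}_\ell\|_\infty\lesssim p^{-1/2}$), which yields $\lambda_{\min}(p^{-1}\hat{\Bb}^\T\hat{\Bb})\geq c_l/2$ on $\mathcal{E}_{\max}$. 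Without this verification, the factor-estimation bound $\|\hat{\bbf}(\mathcal{X}_2)-\bar{\bbf}^{(2)}\|_2=O_\PP(\{K\log n/p\}^{1/2})$ is not justified, since the constants in Proposition~\ref{prop2} depend on $c_l,c_u$ for the matrix actually plugged in. Add this check and your argument is complete.
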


\section{Derivation of (6)}\label{appendix:A}

For any $t$  and $a_j \geq  \sigma_{jj}^{1/2} $,  Lemma~\ref{lemBR} in Section~\ref{sec.pre} shows that, conditionally on $\bbf_i$'s, the rescaled robust estimator $\sqrt{n} \,\hat{\mu}_j$ with $\tau_j  = a_j (n/t)^{1/2}$ satisfies
\#
	\sqrt{n} \, (\hat{\mu}_j-\mu_j-\bb^\T_j \bar\bbf ) & =\bigg\{ {  \frac{1}{\sqrt{n}} \sum_{i=1}^n \ell'_{\tau_j}(\bb_j^\T \bbf_i + \varepsilon_{ij})-  \sqrt{n} \, \bb^\T_j\bar\bbf} \bigg\} + R_{1j} , \label{stat.dec1.app}
\#
where the remainder $R_{1j}$ satisfies $\PP ( |R_{1j}| \lesssim a_j  n^{-1/2} t  ) \geq  1-  3 e^{-t}$.
The stochastic term $n^{-1/2} \sn \ell'_{\tau_j}(\bb_j^\T \bbf_i + \varepsilon_{ij})-\sqrt{n}\,\bb^\T_j\bar\bbf $ in (6) can be decomposed as
\begin{align} \label{stat.dec2}
	 \frac{1}{\sqrt{n}} \sum_{i=1}^n \ell'_{\tau_j}(\bb_j^\T \bbf_i + \varepsilon_{ij})   -\sqrt{n}\, \bb_j^\T \bar{\bbf} & = \underbrace{  \frac{1}{\sqrt{n}}\sn   \{  \ell'_{\tau_j}(\bb_j^\T \bbf_i + \varepsilon_{ij}) - \EE_{\bbf_i}  \ell'_{\tau_j}(\bb_j^\T \bbf_i + \varepsilon_{ij}) \}  }_{S_j} \nn \\
 & \quad  +  \underbrace{ \frac{1}{\sqrt{n}} \sum_{i=1}^n \{\EE_{\bbf_i}  \ell'_{\tau_j}(\bb_j^\T \bbf_i + \varepsilon_{ij}) - \bb_j^\T \bbf_i  \} }_{R_{2j}} ,
\end{align}
where $\bar{\bbf} = n^{-1} \sum_{i=1}^n \bbf_i$ and $\EE_{\bbf_i}(\cdot) = \EE (\cdot | \bbf_i )$ denotes the conditional expectation given $\bbf_i$. Under the finite fourth moment condition $\upsilon_{j}  := ( \EE  \varepsilon_j^4 )^{1/4} <\infty$, it follows from Lemma~\ref{app.meanvar} that as long as $n \geq  4 a_j^{-2}  \max_{1\leq i\leq n} (\bb_j^\T \bbf_i )^2 t$,
\#  \label{Ej.ubd}
	| R_{2j} | \leq  8  a_j^{-3} \upsilon_{j}^4 \,  n^{-1} t^{3/2}  .
\#
Given $\{ \bbf_i \}_{i=1}^n$, $S_j$ in \eqref{stat.dec2} is a sum of (conditionally) independent random variables with (conditional) mean zero. In addition, we note from \eqref{approxi.var} in Lemma~\ref{app.meanvar} that the (conditional) variance of $ \ell'_{\tau_j}(\bb_j^{\rm T} \bbf_i + \varepsilon_{ij})$ given $\bbf_i$ satisfies $|  \var_{\bbf_i} \{   \ell'_{\tau_j}(\bb_j^{\rm T} \bbf_i + \varepsilon_{ij}) \}  - \sigma_{\varepsilon ,jj} | \lesssim n^{-1} t$. Therefore, by the central limit theorem, the conditional distribution of $S_j$ given $\{ \bbf_i \}_{i=1}^n$ is asymptotically normal with mean zero and variance $\sigma_{\varepsilon,jj}$ as long as $t=t(n,p)=o(n)$. This, together with \eqref{Ej.ubd} implies that, conditioning on $\{ \bbf_i \}_{i=1}^n$, the distribution of $\sqrt{n} \,\hat{\mu}_j$ is close to a normal distribution with mean $\sqrt{n} \, ( \mu_j +   \bb_j^{{\rm T}} \bar{\bbf}  )$ and variance $\sigma_{\varepsilon,jj}$. Under the identifiability condition (2), $ \sigma_{\varepsilon,jj} = \sigma_{jj} - \| \bb_j \|_2^2$ for $j=1,\ldots, p$. We complete the proof.

\section{Proofs of main results}
\label{sec.proof}
In this section, we present the proofs for Theorems~1--5 and Theorem~\ref{thm.all}, starting with some preliminary results whose proofs can be found in Section~\ref{app:B}. Recall that
$$
	w_{n,p} = \sqrt{\frac{n}{\log(np)}},
$$
which will be frequently used in the sequel. Also, we use $c_1, c_2, \ldots$ and $C_1, C_2, \ldots$ to denote constants that are independent of $(n,p)$, which may take different values at each occurrence.

\subsection{Preliminaries}
\label{sec.pre}

For each $1\leq j\leq p$, define the zero-mean error variable $\xi_j = X_j - \mu_j$ and let $\mu_{j,\tau} =   \argmin_{\theta \in \RR }  \e  \ell_\tau(X_j - \theta ) $ be the approximate mean, where $\ell_\tau(\cdot)$ is the Huber loss given in (5). Throughout,  we use $\psi_\tau$ to denote the derivative of $\ell_\tau$, that is,
\$
	\psi_\tau(u) = \ell_\tau'(u) =  \min(|u|, \tau) \sgn(u) , \ \  u  \in \RR.
\$
Lemma~\ref{lem2} provides an upper bound on the approximation bias $|\mu_j - \mu_{j,\tau}|$, whose proof is given in Section \ref{secC3}.

\begin{lemma}  \label{lem2}
Let $1\leq j\leq p$ and assume that $\upsilon_{\kappa,j} = \e  (  | \xi_j |^{\kappa} ) <\infty$ for some $\kappa \geq   2$. Then, as long as $\tau > \sigma_{jj}^{1/2}$, we have
\begin{align}  \label{lem2.1}
 | \mu_{j, \tau} - \mu_j  |    \leq  ( 1- \sigma_{jj}  \tau^{-2}  )^{-1} \upsilon_{\kappa, j}  \tau^{1-\kappa}   .
\end{align}
\end{lemma}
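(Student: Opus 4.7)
The plan is to exploit the first-order optimality condition for the Huber $M$-estimator. Writing $\delta := \mu_{j,\tau} - \mu_j$ and $\xi_j := X_j - \mu_j$ (so that $\e \xi_j = 0$ and $\e \xi_j^2 = \sigma_{jj}$), the condition that $\mu_{j,\tau}$ minimizes $\theta \mapsto \e \ell_\tau(X_j - \theta)$ becomes $\e \psi_\tau(\xi_j - \delta) = 0$. Using the pointwise identity $\psi_\tau(u) = u - \sgn(u)(|u|-\tau)_+$ together with $\e \xi_j = 0$, this rearranges to the fixed-point representation
$$\delta = -\e\bigl[\sgn(\xi_j - \delta)(|\xi_j - \delta| - \tau)_+\bigr],$$
which immediately yields $|\delta| \leq \e(|\xi_j - \delta| - \tau)_+$.

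The next step is to replace the shifted error $\xi_j - \delta$ by $\xi_j$ on the right-hand side. Writing $g(u) = (|u|-\tau)_+$ and applying the fundamental theorem of calculus to $s \mapsto g(\xi_j - s\delta)$ (whose weak derivative in $s$ equals $-\delta\,\sgn(\xi_j - s\delta)\,I\{|\xi_j - s\delta| > \tau\}$), one obtains
$$\e\,g(\xi_j - \delta) \leq \e\,g(\xi_j) + |\delta|\int_0^1 \PP(|\xi_j - s\delta| > \tau)\,ds.$$
The first term is controlled by $\e g(\xi_j) \leq \upsilon_{\kappa,j}\tau^{1-\kappa}$, using the pointwise bound $(|u|-\tau)_+ \leq |u|^\kappa \tau^{1-\kappa}$ (valid on $\{|u|>\tau\}$ since $|u|^{\kappa-1} \geq \tau^{\kappa-1}$ there). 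The second term is handled by Chebyshev's inequality, $\PP(|\xi_j - s\delta| > \tau) \leq (\sigma_{jj} + s^2\delta^2)/\tau^2$.

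Combining and integrating in $s$ gives the self-bounding inequality
$$|\delta|\Bigl(1 - \frac{\sigma_{jj}}{\tau^2} - \frac{\delta^2}{3\tau^2}\Bigr) \leq \upsilon_{\kappa,j}\tau^{1-\kappa}.$$
Because $\tau > \sigma_{jj}^{1/2}$, the leading coefficient $1 - \sigma_{jj}/\tau^2$ is strictly positive. Rearranging and absorbing the lower-order $\delta^2$ term---either via a short bootstrap (first deduce a crude bound $|\delta| = O(\tau^{1-\kappa})$ and reinsert) or by sharpening the equivalent monotonicity identity $\e\psi_\tau(\xi_j) = \int_0^\delta \PP(|\xi_j - s| \leq \tau)\,ds$ (valid for $\delta > 0$, with the symmetric statement for $\delta < 0$)---recovers the claimed estimate $|\delta| \leq (1 - \sigma_{jj}\tau^{-2})^{-1}\upsilon_{\kappa,j}\tau^{1-\kappa}$.

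The main technical obstacle is exactly this sharp-constant matter: a direct Chebyshev bound on $\xi_j - s\delta$ introduces the spurious $s^2\delta^2/\tau^2$ term, which must be absorbed carefully to retain the clean factor $(1 - \sigma_{jj}\tau^{-2})^{-1}$ rather than a strictly weaker $(1 - C\sigma_{jj}\tau^{-2})^{-1}$. The hypothesis $\tau > \sigma_{jj}^{1/2}$ is precisely what keeps the contracting factor positive and renders the final rearrangement meaningful.
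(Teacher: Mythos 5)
Your setup is essentially the same as the paper's: both start from the first-order condition for $\mu_{j,\tau}$, both pass to the ``shifted'' representation (you via the fundamental theorem of calculus / the identity $\e\psi_\tau(\xi_j) = \int_0^\delta \PP(|\xi_j - s|\leq\tau)\,ds$, the paper via the mean value theorem applied to $h(\theta)=\e\ell_\tau(X_j-\theta)$), and both bound $|\e\psi_\tau(\xi_j)|\leq \upsilon_{\kappa,j}\tau^{1-\kappa}$ correctly. Up to that point the arguments are parallel.

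The gap is exactly where you flag it, and neither of your proposed fixes closes it. After Chebyshev on the shifted variable you are left with the self-bounding inequality $|\delta|\bigl(1-\sigma_{jj}\tau^{-2}-\delta^2/(3\tau^2)\bigr)\leq\upsilon_{\kappa,j}\tau^{1-\kappa}$. A bootstrap can only produce a fixed point $|\delta|\leq\upsilon_{\kappa,j}\tau^{1-\kappa}/\bigl(1-\sigma_{jj}\tau^{-2}-O(\tau^{2-2\kappa})\bigr)$, which is strictly weaker than the claimed $(1-\sigma_{jj}\tau^{-2})^{-1}$ factor. ``Sharpening the monotonicity identity'' doesn't help either: the integrand $\PP(|\xi_j-s|\leq\tau)$ at $s$ between $0$ and $\delta$ is minimized near $s=\delta$, and Chebyshev on $\xi_j-\delta$ reintroduces the same $\delta^2/\tau^2$ error. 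The paper avoids this entirely by \emph{not} bounding the tail probability with Chebyshev on the shifted variable. Instead it bounds $h''(\widetilde\mu_{j,\tau})=\PP(|X_j-\widetilde\mu_{j,\tau}|\leq\tau)$ via a convexity argument: since $\widetilde\mu_{j,\tau}$ lies on the segment between $\mu_j$ and the minimizer $\mu_{j,\tau}$, convexity of $h$ gives $h(\widetilde\mu_{j,\tau})\leq h(\mu_j)\leq\sigma_{jj}/2$; pairing this with the pointwise lower bound $\ell_\tau(u)\geq(\tau|u|-\tau^2/2)I(|u|>\tau)$ and a Markov step yields $\PP(|X_j-\widetilde\mu_{j,\tau}|>\tau)\leq\sigma_{jj}\tau^{-2}$ with no quadratic-in-$\delta$ correction, which is what delivers the exact factor $(1-\sigma_{jj}\tau^{-2})^{-1}$. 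That is the missing idea in your argument.
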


The following concentration inequality is from Theorem~5 in \cite{FLW2017}, showing that  $\hat{\mu}_j$ with a properly chosen robustification parameter $\tau$ exhibits sub-Gaussian tails when the underlying distribution has heavy tails with only finite second moment.

\begin{lemma} \label{lem3}
For every $1\leq j\leq p$ and $t>0$, the estimator $\hat{\mu}_j$ in (5) with $\tau = a (n/t)^{1/2}$ for $a \geq  \sigma_{jj}^{1/2}$ satisfies $\PP \{  | \hat{\mu}_j  - \mu_j  |  \geq   4 a (t/n)^{1/2} \} \leq 2e^{-t}$ as long as $n\geq  8t$.
\end{lemma}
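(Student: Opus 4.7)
The plan is to exploit convexity of the Huber loss to convert the two-sided tail bound on $\hat{\mu}_j - \mu_j$ into one-sided tail bounds on a sum of \emph{bounded} random variables, and then apply Bernstein's inequality. Since $\ell_\tau$ is convex, the score function $G_n(\theta) := \sum_{i=1}^n \psi_\tau(X_{ij} - \theta)$ is non-increasing in $\theta$, and $\hat{\mu}_j$ is a zero of $G_n$. Therefore, for any $r>0$,
$$
\{\hat{\mu}_j - \mu_j \geq r\} \subseteq \{G_n(\mu_j+r) \geq 0\}, \qquad \{\hat{\mu}_j - \mu_j \leq -r\} \subseteq \{G_n(\mu_j-r) \leq 0\},
$$
so it suffices to bound the right-hand events (a union bound then absorbs the factor $2$).

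I would focus on the upper tail and set $Y_i = \psi_\tau(X_{ij} - \mu_j - r)$. These variables satisfy $|Y_i| \leq \tau$ almost surely, and since $|\psi_\tau(u)| \leq |u|$,
$$
\var(Y_i) \leq \EE[(X_{ij} - \mu_j - r)^2] = \sigma_{jj} + r^2.
$$
To control the drift $\mu^\star := \EE Y_i$, I would use that the population risk $L(\theta) = \EE \ell_\tau(X_j - \theta)$ is convex with $L'(\mu_{j,\tau})=0$ and $L''(\theta) = \PP(|X_j-\theta| \leq \tau)$. Writing $\mu^\star = -L'(\mu_j+r) = -\int_{\mu_{j,\tau}}^{\mu_j+r} L''(s)\,ds$ shows $\mu^\star$ is negative and its magnitude grows linearly in $r - |\mu_{j,\tau}-\mu_j|$ with slope bounded below by $\inf_s \PP(|X_j-s| \leq \tau)$.

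Next I would calibrate the parameters. With $\tau = a(n/t)^{1/2}$, $a \geq \sigma_{jj}^{1/2}$, and $n \geq 8t$, one has $\sigma_{jj}/\tau^2 \leq 1/8$, so Lemma~\ref{lem2} yields
$$
|\mu_{j,\tau} - \mu_j| \leq \tfrac{8}{7}\,\sigma_{jj}/\tau \leq \tfrac{8}{7}\,a(t/n)^{1/2}.
$$
Choosing $r = 4 a(t/n)^{1/2}$ gives $r - |\mu_{j,\tau}-\mu_j| \gtrsim a(t/n)^{1/2}$, and because $r/\tau \leq 4t/n \leq 1/2$, Chebyshev's inequality shows $\inf_s \PP(|X_j-s|\leq \tau) \geq 1/2$ over the relevant range of $s$. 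Consequently $-\mu^\star \gtrsim a(t/n)^{1/2}$. Applying Bernstein's inequality to $\sum_i (Y_i - \mu^\star)$ with variance proxy $n(\sigma_{jj}+r^2) \lesssim n a^2$ and uniform bound $\tau = a(n/t)^{1/2}$ produces
$$
\PP\{G_n(\mu_j+r) \geq 0\} \leq \exp\!\left\{-\frac{n(\mu^\star)^2/2}{\sigma_{jj}+r^2+\tfrac{1}{3}\tau\,|\mu^\star|}\right\} \leq e^{-t},
$$
once the constants are checked: the sub-Gaussian term contributes an exponent $\gtrsim n a^2 (t/n)/a^2 = t$, and the linear term $\tau|\mu^\star| \lesssim a^2$ does not dominate. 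Combining with the symmetric lower tail bound gives the target probability $2e^{-t}$.

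The main obstacle will be the bookkeeping that produces the precise constant $4$ in $r = 4a(t/n)^{1/2}$. One has to simultaneously ensure (i) the bias $|\mu_{j,\tau}-\mu_j|$ leaves enough ``margin'' in $r$ for $-\mu^\star$ to be large, (ii) the local curvature $L''$ of the population risk is bounded away from zero on the relevant interval, and (iii) Bernstein's inequality is used in the sub-Gaussian regime rather than the linear regime. All three require the assumption $n \geq 8t$, which essentially guarantees $\tau$ is much larger than both $\sigma_{jj}^{1/2}$ and $r$, so the Huber loss is effectively quadratic near $\mu_j$.
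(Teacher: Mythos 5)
The paper does not give its own proof of this lemma; it simply cites Theorem~5 of \cite{FLW2017}, so your attempt is necessarily constructing an argument from scratch. Your overall strategy is sound: localize the event $\{|\hat\mu_j-\mu_j|\geq r\}$ through the monotone score $G_n$, split $\sum_i\psi_\tau(X_{ij}-\mu_j-r)$ into a negative drift plus centered fluctuation, and invoke a variance-aware tail bound. The Chebyshev control of the curvature $L''(s)=\PP(|X_j-s|\leq\tau)$, the use of Lemma~\ref{lem2} for the bias, and the observation that $r/\tau\leq 1/2$ when $n\geq 8t$ are all correct.

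The genuine gap is that Bernstein's inequality does \emph{not} deliver the stated constants, and the looseness is not a bookkeeping nuisance but a factor that cannot be absorbed. Tracking the numbers: with $\tau=a\sqrt{n/t}$ and $r=4a\sqrt{t/n}$ one can show $|\mu^\star|\in[\tfrac{25}{14}\,a\sqrt{t/n},\,\tfrac{36}{7}\,a\sqrt{t/n}]$, so the Bernstein exponent
\[
\frac{n(\mu^\star)^2/2}{\sigma_{jj}+r^2+\tfrac{1}{3}(\tau+|\mu^\star|)|\mu^\star|}
\]
has numerator at least $\tfrac{1}{2}(\tfrac{25}{14})^2 a^2 t\approx 1.6\,a^2t$ but denominator as large as $a^2+2a^2+\tfrac{1}{3}(\tfrac{36}{7}+\tfrac{36^2}{7^2\cdot 8})a^2\approx 5.8\,a^2$, so the exponent is only about $0.27\,t$. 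You would prove $\PP\{|\hat\mu_j-\mu_j|\geq 4a\sqrt{t/n}\}\leq 2e^{-ct}$ with $c\approx 1/4$, which is strictly weaker than the lemma, and no amount of tightening the drift estimate closes a factor of four in the exponent; you would need to enlarge the constant $4$ in $r$ or strengthen $n\geq 8t$.

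The precise constant $4$ under $n\geq 8t$ comes instead from a Catoni-type exponential moment bound for the Huber score, namely the inequality $\psi_1(u)\leq\log(1+u+u^2)$ that the paper itself records as Lemma~\ref{lemma:log} in the supplement (and uses in the proof of Theorem~\ref{thm:u-type}). Applying it gives
\[
\PP\Big\{\sum_i\psi_\tau(X_{ij}-\mu_j-r)\geq 0\Big\}\leq\EE\exp\Big\{\sum_i\psi_\tau(X_{ij}-\mu_j-r)/\tau\Big\}\leq\Big(1-\tfrac{r}{\tau}+\tfrac{\sigma_{jj}+r^2}{\tau^2}\Big)^n,
\]
and with $r=4a\sqrt{t/n}$ one computes $r/\tau-(\sigma_{jj}+r^2)/\tau^2\geq 4t/n-t/n-16t^2/n^2\geq t/n$ exactly when $n\geq 8t$, yielding $e^{-t}$ on the nose. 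This exponential moment bound treats the quadratic term in $u/\tau$ at face value, whereas Bernstein replaces the bounded-support correction by a worse constant ($M u/3$), and that is where the lost factor in the exponent comes from. If you rewrite the concentration step using Lemma~\ref{lemma:log} rather than Bernstein, the rest of your argument goes through with the stated constants.
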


The next result provides a nonasymptotic Bahadur representation for $\hat{\mu}_j$. In particular, we show that when the second moment is finite, the remainder of the Bahadur representation for $\hat{\mu}_j$ exhibits sub-exponential tails.  The proof of Lemmas~\ref{lemBR}--\ref{infinity.perturbation} can be found respectively in Sections \ref{secC4}--\ref{secC7}.

\begin{lemma} \label{lemBR}
For every $1\leq j \leq p$ and for any $t\geq  1$, the estimator $\hat{\mu}_j$ in (5) with $\tau = a (n/t)^{1/2}$ and $a \geq  \sigma_{jj}^{1/2}$ satisfies that as long as $n \geq  8t$,
\begin{align}  \label{Bahadur.representation}
	   \bigg|  \sqrt{n} \, (\hat{\mu}_j - \mu_j  ) - \frac{1}{\sqrt{n}}  \sn \psi_\tau(  \xi_{ij}  )   \bigg|     \leq C  \frac{a t}{\sqrt{n}}
\end{align}
with probability greater than $1- 3 e^{-t}$, where $\xi_{ij} = X_{ij} - \mu_j$ and $C>0$ is an absolute constant.
\end{lemma}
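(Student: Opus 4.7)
The plan is to start from the first-order condition for the Huber $M$-estimator, restrict to the high-probability event on which $\hat\mu_j - \mu_j$ is small (given by Lemma~\ref{lem3}), and then linearize the score using a localized empirical process argument.

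First, I would invoke Lemma~\ref{lem3} with the same choice $\tau=a(n/t)^{1/2}$ to obtain the event
$$\mathcal{E}_1 = \{|\hat\mu_j - \mu_j| \leq 4a(t/n)^{1/2}\}, \qquad \PP(\mathcal{E}_1) \geq 1 - 2e^{-t}.$$
Set $\hat\Delta = \sqrt{n}\,(\hat\mu_j - \mu_j)$, so that on $\mathcal{E}_1$, $|\hat\Delta| \leq r := 4a\sqrt{t}$. By convexity of $\ell_{\tau}$ and the Karush–Kuhn–Tucker condition, $\hat\Delta$ satisfies $\sum_{i=1}^n \psi_\tau(\xi_{ij} - \hat\Delta/\sqrt{n}) = 0$, hence
$$\frac{1}{\sqrt{n}}\sum_{i=1}^n \psi_\tau(\xi_{ij}) = \frac{1}{\sqrt{n}}\sum_{i=1}^n \big\{\psi_\tau(\xi_{ij}) - \psi_\tau(\xi_{ij} - \hat\Delta/\sqrt{n})\big\}.$$

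Next, I would decompose the right-hand side, for generic $\delta \in \RR$, into a deterministic part $D(\delta) = \sqrt{n}\,\EE\{\psi_\tau(\xi_j) - \psi_\tau(\xi_j - \delta/\sqrt{n})\}$ and a centered empirical process part
$$G(\delta) = \frac{1}{\sqrt{n}}\sum_{i=1}^n \Big[\{\psi_\tau(\xi_{ij}) - \psi_\tau(\xi_{ij} - \delta/\sqrt{n})\} - \EE\{\psi_\tau(\xi_{ij}) - \psi_\tau(\xi_{ij} - \delta/\sqrt{n})\}\Big].$$
For $D(\delta)$, since $\psi_\tau$ is $1$-Lipschitz and $\psi_\tau(u)=u$ on $|u|\leq\tau$, integration and the bound $\PP(|\xi_j|>\tau)\leq \sigma_{jj}\tau^{-2}$ yield $D(\delta) = \delta\,\{1+O(\sigma_{jj}\tau^{-2})\} + O(|\delta|\,\sigma_{jj}\tau^{-2})$, which with $\tau = a(n/t)^{1/2}$ and $|\delta|\leq r$ gives $|D(\delta) - \delta| \lesssim |\delta|\,t/n \lesssim at/\sqrt{n}$.

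For $G(\delta)$, the key observation is the small-variance property: $|\psi_\tau(\xi-\delta/\sqrt{n}) - \psi_\tau(\xi)| \leq |\delta|/\sqrt{n}$ almost surely and the second moment is $O(|\delta|/\sqrt{n})$ because the Lipschitz increment is supported on a set of probability $O(|\delta|/\sqrt{n})$ (apart from the region where both arguments exceed $\tau$, which contributes only $O(\sigma_{jj}\tau^{-2})$). Applying Bernstein's inequality pointwise gives $|G(\delta)| \lesssim \sqrt{|\delta|\,t/\sqrt{n}} + |\delta|\,t/\sqrt{n}$ with probability $\geq 1-2e^{-t}$. To upgrade to a bound uniform over $|\delta|\leq r$, I would use a peeling argument over dyadic shells $\{2^{-(k+1)}r < |\delta| \leq 2^{-k}r\}$ and take a union bound over $O(\log n)$ levels, then combine with the monotonicity of $\delta \mapsto \sum \psi_\tau(\xi_{ij}-\delta/\sqrt{n})$ to interpolate between grid points. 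This yields $\sup_{|\delta|\leq r}|G(\delta)| \lesssim at/\sqrt{n}$ with probability $\geq 1-e^{-t}$, on an event $\mathcal{E}_2$.

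Finally, on $\mathcal{E}_1 \cap \mathcal{E}_2$, setting $\delta=\hat\Delta$ and combining the bounds on $D$ and $G$ gives
$$\Big|\hat\Delta - \tfrac{1}{\sqrt{n}}\sum_{i=1}^n \psi_\tau(\xi_{ij})\Big| \leq |\hat\Delta - D(\hat\Delta)| + |G(\hat\Delta)| \lesssim \frac{at}{\sqrt{n}},$$
with overall probability $\geq 1-3e^{-t}$, which is the claim. The main obstacle is the uniform control of $G(\delta)$: one must exploit both the Lipschitz bound (for the sup-norm) and the small variance (for the Bernstein term) in tandem, since a naive sub-Gaussian bound would give the wrong rate $\sqrt{t/n}$ rather than $at/\sqrt{n}$. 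The peeling/monotonicity trick, together with the observation that the variance of the increment scales \emph{linearly} (not quadratically) in $\delta$, is what yields the advertised $t/\sqrt{n}$ rate for the remainder of the Bahadur expansion under only a second-moment condition.
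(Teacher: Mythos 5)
Your overall architecture — localize $\hat\mu_j$ with Lemma~\ref{lem3}, then linearize the score and control the centered process uniformly on a small ball — is the right strategy, and it parallels the paper's proof, which differs chiefly in outsourcing the uniform control to Proposition~3.1 of Spokoiny (2013) after verifying his conditions $(\mathcal{L}_0)$ (local smoothness of $w^2(\theta)=\mathrm{Var}$-type Hessian) and $(ED_2)$ (exponential moments of $\zeta''(\theta)$). Your explicit Bernstein-plus-peeling route would also work, but it has a genuine error as written.

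The error is in the variance estimate for the increment. You assert that $\psi_\tau(\xi-h)-\psi_\tau(\xi)$ (with $h=\delta/\sqrt n$) has second moment $O(|h|)$ because ``the Lipschitz increment is supported on a set of probability $O(|h|)$.'' That would be true for a jump-type score like a sign function, but $\psi_\tau$ is Lipschitz and equals the identity on $[-\tau,\tau]$; consequently $\psi_\tau(\xi)-\psi_\tau(\xi-h)$ equals \emph{exactly} $h$ whenever both $\xi$ and $\xi-h$ lie in $[-\tau,\tau]$, an event of probability close to $1$, not of order $h$. So the raw second moment is $\asymp h^2$, and the centered variance is the much smaller $\mathrm{Var}\big(\psi_\tau(\xi)-\psi_\tau(\xi-h)\big)\le \EE\big(\psi_\tau(\xi)-\psi_\tau(\xi-h)-h\big)^2 \le h^2\,\PP(|\xi|>\tau-|h|)\lesssim h^2\sigma_{jj}\tau^{-2}\le h^2 t/n$ (using $\tau^2=a^2n/t$ and $a^2\ge\sigma_{jj}$). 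Your bound $O(|h|)$ both overcounts and, crucially, does not close: plugging it into Bernstein gives the leading term $\sqrt{|\delta|t/\sqrt n}$, which for $|\delta|\le r\asymp a\sqrt t$ is $\asymp a^{1/2}t^{3/4}n^{-1/4}$; this exceeds the target $a t/\sqrt n$ whenever $n\gtrsim a^2 t$, i.e.\ in exactly the regime of interest ($n\ge 8t$, $a$ bounded). With the correct variance $\lesssim h^2 t/n$, Bernstein instead yields $|G(\delta)|\lesssim |\delta|t/n\lesssim a t^{3/2}/n\le at/\sqrt n$, which closes. Once this is repaired, the peeling over dyadic shells and the monotonicity of the uncentered sum (which you must use separately for the mean and the empirical process, since their difference is not monotone) carry through, and your deterministic bound $|D(\delta)-\delta|\lesssim|\delta|\sigma_{jj}\tau^{-2}\lesssim at/\sqrt n$ is correct. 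The paper's route via Spokoiny's $(ED_2)$ condition encodes precisely this small-variance phenomenon through an exponential-moment bound on the Hessian process $\zeta''$, which is why it delivers the $at/\sqrt n$ rate without an explicit peeling argument.
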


Under factor model (1), note that $\xi_j = \bb_j^\T \bbf + \varepsilon_j $ for every $j$. The following conclusion reveals that the differences between the first two (conditional) moments of $\xi_{j} $ and $\psi_\tau(\xi_j)$ given $\bbf$ vanish faster if higher moments of $\varepsilon_j$ exist.

\begin{lemma} \label{app.meanvar}
Assume that $ \e (   | \varepsilon_j  |^{\kappa}  ) <\infty$ for some $1 \leq j\leq p$ and $\kappa \geq  2$.
\begin{itemize}
\item[(1)] On the event $\mathcal{G}_j :=  \{ |\bb_j^\T \bbf| < \tau \}$,
\# \label{approxi.mean}
	 | \e_{\bbf}  \psi_\tau( \xi_j)  -  \bb_j^\T \bbf  | \leq \min\bigg\{    \frac{\sigma_{\varepsilon , jj}}{ \tau - |\bb_j^{{\rm T}} \bbf|}  ,   \frac{ \e | \varepsilon_j |^\kappa }{ (\tau - |\bb_j^{{\rm T}} \bbf|)^{\kappa - 1} }  \bigg\}
\#
almost surely. In addition, if $\kappa>2$, we have
\# \label{approxi.var}
   \sigma_{\varepsilon , jj}  -    \frac{\e ( | \varepsilon_j |^\kappa ) }{(\tau - |\bb_j^\T \bbf|)^{\kappa-2}}  \bigg\{ \frac{2}{\kappa-2}   +     \frac{ \e ( | \varepsilon_j |^\kappa ) }{ (\tau - |\bb_j^\T \bbf|)^{ \kappa  }} \bigg\}   \leq  {\rm var}_{\bbf} \{ \psi_{\tau}(\xi_j)  \}    \leq \sigma_{\varepsilon ,jj}
\#
almost surely on $\mathcal{G}_j $.

\item[(2)] Assume that $\upsilon_{jk } := \e  ( |\varepsilon_j|^\kappa )  \vee \e  ( | \varepsilon_k|^\kappa  ) <\infty$ for some $1\leq j\neq k \leq p $ and $\kappa>2$. Then
\begin{align} \label{approxi.cov}
	 |  \cov_{\bbf}  (\psi_\tau(\xi_j), \psi_\tau(\xi_k)  )   - \cov( \varepsilon_j , \varepsilon_k) | \leq C  \max (  \upsilon_{jk} \tau^{2-\kappa} , \upsilon_{jk}^2 \tau^{2-2\kappa}  )
\end{align}
almost surely on $\mathcal{G}_{jk} := \{  | \bb_j^\T \bbf |  \vee  | \bb_k^\T \bbf | \leq \tau/2  \}$, where $C>0$ is an absolute constant.
\end{itemize}
\end{lemma}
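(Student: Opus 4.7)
The plan is to condition on $\bbf$, set $u_\ell = \bb_\ell^\T \bbf$ so that $\xi_\ell = u_\ell + \varepsilon_\ell$, and exploit the elementary identity
\[
\psi_\tau(y) - y = -(y-\tau)_+ + (-y-\tau)_+,
\]
which gives $\psi_\tau(\xi_\ell) - \xi_\ell = -(\varepsilon_\ell - a_\ell)_+ + (-\varepsilon_\ell - b_\ell)_+$ with $a_\ell := \tau - u_\ell$ and $b_\ell := \tau + u_\ell$, both positive on $\mathcal{G}_j$ (and on $\mathcal{G}_{jk}$ with $a_\ell, b_\ell \geq \tau/2$). For part (1), taking the conditional expectation and using $\EE\varepsilon_j = 0$ yields
\[
\EE_{\bbf}\psi_\tau(\xi_j) - u_j = \EE(-\varepsilon_j - b_j)_+ - \EE(\varepsilon_j - a_j)_+,
\]
a difference of nonnegative quantities, so $|\EE_\bbf\psi_\tau(\xi_j) - u_j| \leq \max\{\EE(\varepsilon_j - a_j)_+, \EE(-\varepsilon_j - b_j)_+\}$. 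I would bound each one-sided truncated moment in two ways: first by $(\varepsilon_j - a_j)_+ \leq \varepsilon_j^2/a_j$ on $\{\varepsilon_j > a_j\}$, giving $\EE(\varepsilon_j - a_j)_+ \leq \sigma_{\varepsilon, jj}/a_j$; and second by $(\varepsilon_j - a_j)_+ \leq |\varepsilon_j|^\kappa/a_j^{\kappa-1}$, giving $\EE(\varepsilon_j - a_j)_+ \leq \EE|\varepsilon_j|^\kappa/a_j^{\kappa-1}$. Combined with the analogous bounds for $b_j$ and $\min(a_j, b_j) = \tau - |u_j|$, this yields the two candidates inside the $\min$ in \eqref{approxi.mean}.

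For the variance statement \eqref{approxi.var}, let $Z := \psi_\tau(\xi_j) - u_j$. Partitioning on $\{|\xi_j| \leq \tau\}$, $\{\xi_j > \tau\}$, $\{\xi_j < -\tau\}$ and a direct computation give
\[
\EE_\bbf Z^2 = \sigma_{\varepsilon, jj} - \EE[(\varepsilon_j^2 - a_j^2)\mathbb{1}(\varepsilon_j > a_j)] - \EE[(\varepsilon_j^2 - b_j^2)\mathbb{1}(\varepsilon_j < -b_j)].
\]
Both subtracted terms are nonnegative, so $\var_\bbf(\psi_\tau(\xi_j)) \leq \EE_\bbf Z^2 \leq \sigma_{\varepsilon, jj}$, which is the upper bound. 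For the lower bound, set $c = \tau - |u_j|$ so that $a_j, b_j \geq c$. The subtle step is to replace $a_j, b_j$ by the uniform bound $c$ \emph{inside} the integrand rather than at the end, merging the two one-sided tails:
\[
\EE[(\varepsilon_j^2 - a_j^2)\mathbb{1}(\varepsilon_j > a_j)] + \EE[(\varepsilon_j^2 - b_j^2)\mathbb{1}(\varepsilon_j < -b_j)] \leq \EE[(\varepsilon_j^2 - c^2)\mathbb{1}(|\varepsilon_j| > c)].
\]
Integration by parts rewrites the right-hand side as $2\int_c^\infty s\, \PP(|\varepsilon_j| > s)\,ds$, and the Markov bound $\PP(|\varepsilon_j|>s) \leq \EE|\varepsilon_j|^\kappa/s^\kappa$ delivers $\frac{2\EE|\varepsilon_j|^\kappa}{(\kappa-2) c^{\kappa-2}}$. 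Combining with $(\EE_\bbf Z)^2 \leq (\EE|\varepsilon_j|^\kappa)^2 / c^{2(\kappa-1)}$ from the mean bound, and factoring out $\EE|\varepsilon_j|^\kappa / c^{\kappa-2}$, produces exactly the bracketed right-hand side in \eqref{approxi.var}.

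For the covariance statement \eqref{approxi.cov}, again write $Z_\ell = \psi_\tau(\xi_\ell) - u_\ell$, so that $\cov_\bbf(\psi_\tau(\xi_j), \psi_\tau(\xi_k)) = \EE[Z_j Z_k] - \EE Z_j \cdot \EE Z_k$. The key pointwise observation, valid on $\mathcal{G}_{jk}$ because $a_\ell, b_\ell \geq \tau/2$, is $|Z_\ell - \varepsilon_\ell| \leq |\varepsilon_\ell|\, \mathbb{1}(|\varepsilon_\ell| \geq \tau/2)$ for $\ell \in \{j,k\}$. Expanding
\[
\EE Z_j Z_k - \EE \varepsilon_j \varepsilon_k = \EE[\varepsilon_j(Z_k-\varepsilon_k)] + \EE[(Z_j-\varepsilon_j)\varepsilon_k] + \EE[(Z_j-\varepsilon_j)(Z_k-\varepsilon_k)],
\]
each cross term is controlled by Cauchy--Schwarz combined with $|\varepsilon_\ell|\mathbb{1}(|\varepsilon_\ell| \geq \tau/2) \leq 2^{\kappa-1}|\varepsilon_\ell|^\kappa/\tau^{\kappa-1}$ and a H\"older step with exponents $(\kappa, \kappa/(\kappa-1))$ chosen so that the resulting moment collapses to $\EE|\varepsilon_j|^{1/\kappa \cdot \kappa}\cdot\EE|\varepsilon_k|^{(\kappa-1)/\kappa \cdot \kappa} \leq \upsilon_{jk}$, yielding terms of order $\upsilon_{jk}\tau^{2-\kappa}$. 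The bilinear correction $\EE Z_j \cdot \EE Z_k$ is bounded via the $\kappa$-moment choice in part (1) by order $\upsilon_{jk}^2\tau^{2-2\kappa}$, supplying the second argument of the max. The main obstacle throughout is careful constant tracking: in particular, the constant $2/(\kappa-2)$ in \eqref{approxi.var} degrades to $4/(\kappa-2)$ unless the ``merge the tails first'' step is carried out, and in (2) the H\"older exponents must be chosen precisely so that each term involves $\upsilon_{jk}$ rather than a product of $\kappa$-norms that would not cleanly match the maximum on the right-hand side.
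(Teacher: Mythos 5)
Your parts (1) and the variance bound in (2) match the paper essentially step for step: both condition on $\bbf$, both compute $\EE_\bbf\{\psi_\tau(\xi_j)-\bb_j^\T\bbf\}^2$ by partitioning on where $|\xi_j|$ falls relative to $\tau$, both merge the two one-sided tails into the single set $\{|\varepsilon_j|>\tau-|\bb_j^\T\bbf|\}$ before integrating (the paper phrases the merge via $u^2=2\int_0^u t\,dt$, you via $(\varepsilon^2-c^2)\mathbb{1}(|\varepsilon|>c)$), and both subtract the squared bias from part (1) to get the bracketed lower bound. Where you genuinely diverge is the covariance in (2): the paper expands $\{\psi_\tau(\xi_j)-\bb_j^\T\bbf\}\{\psi_\tau(\xi_k)-\bb_k^\T\bbf\}$ into a nine-term product over the three regions for each coordinate and bounds each term by hand, whereas your identity $Z_\ell-\varepsilon_\ell=\psi_\tau(\xi_\ell)-\xi_\ell$ plus the pointwise bound $|Z_\ell-\varepsilon_\ell|\leq|\varepsilon_\ell|\mathbb{1}(|\varepsilon_\ell|\geq\tau/2)$ reduces this to a clean three-term expansion; that is a tidier route to the same estimate. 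One slip to fix in a full write-up: the displayed bound $|\varepsilon_\ell|\mathbb{1}(|\varepsilon_\ell|\geq\tau/2)\leq 2^{\kappa-1}|\varepsilon_\ell|^\kappa/\tau^{\kappa-1}$ is too aggressive to pair with H\"older at exponents $(\kappa,\kappa/(\kappa-1))$ — after H\"older you are left with $\EE\big[|\varepsilon_k|^{\kappa/(\kappa-1)}\mathbb{1}(|\varepsilon_k|\geq\tau/2)\big]$, and multiplying in another $|\varepsilon_k|^{\kappa-1}$ there would demand a moment of order $\kappa^2/(\kappa-1)>\kappa$ which is not assumed. The inequality you actually want is $\mathbb{1}(|\varepsilon_\ell|\geq\tau/2)\leq(2|\varepsilon_\ell|/\tau)^{\kappa-\kappa/(\kappa-1)}$ (equivalently $|\varepsilon_\ell|\mathbb{1}(|\varepsilon_\ell|\geq\tau/2)\leq(2/\tau)^{\kappa-2}|\varepsilon_\ell|^{\kappa-1}$, the same $\kappa-1$ exponent the paper uses); this lands the conditional moment exactly on $(\EE|\varepsilon_j|^\kappa)^{1/\kappa}(\EE|\varepsilon_k|^\kappa)^{(\kappa-1)/\kappa}\leq\upsilon_{jk}$ and yields the claimed $\upsilon_{jk}\tau^{2-\kappa}$. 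With that single exponent corrected, your argument goes through and is arguably the more economical of the two.
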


\begin{lemma} \label{factor.concentration}
Suppose that Assumption~1 holds. Then, for any $t>0$,
\begin{align}
		\PP \{  \| \sqrt{n} \bar{\bbf} \|_2 > C_1  A_f (K+t)^{1/2} \} \leq e^{-t} , \label{factor.concentration.ineq} \\
		\PP\bigg\{ \max_{1\leq i\leq n} \| \bbf_i \|_2 > C_1 A_f (K+\log n + t)^{1/2} \bigg\} \leq e^{-t} , \label{factor.l2.ineq} \\
		\mbox{ and }~\PP [  \|  \hat{\bSigma}_f - \Ib_K \|_2 > C_2 \max \{   A_f^2 n^{-1/2} ( K +t)^{1/2} , A_f^4 n^{-1} (K+t)  \}  ] \leq 2e^{-t} , \label{factor.cov.concentration}
\end{align}	
where $\bar{\bbf} = n^{-1}\sn \bbf_i$, $\hat{\bSigma}_f = n^{-1} \sn \bbf_i \bbf_i^\T$ and $C_1, C_2>0$ are absolute constants.
\end{lemma}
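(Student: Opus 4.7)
The plan is to observe that each of the three bounds is a fairly direct consequence of standard concentration for sub-Gaussian random vectors, applied to i.i.d.\ copies of $\bbf$, which by Assumption~1(iii) is mean-zero with $\cov(\bbf)=\Ib_K$ and $\|\bbf\|_{\psi_2}\le A_f$. My proof will treat the three bounds in order, reducing each to an off-the-shelf inequality from \cite{V2018}.

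First, for \eqref{factor.concentration.ineq}, I note that $\sqrt{n}\,\bar\bbf = n^{-1/2}\sum_{i=1}^n\bbf_i$ is a sum of independent mean-zero sub-Gaussian vectors in $\RR^K$. Because the sub-Gaussian norm is invariant under orthogonal rotations and behaves like an $\ell_2$ quantity under independent summation, $\|\sqrt{n}\,\bar\bbf\|_{\psi_2}\le c A_f$ for some absolute constant $c>0$ (Proposition~2.6.1 in \cite{V2018}). I would then apply the standard tail bound for the Euclidean norm of a sub-Gaussian vector (cf.\ Theorem~3.1.1 in \cite{V2018}), which says $\|\bY\|_2\le C'\,\|\bY\|_{\psi_2}\sqrt{K+t}$ with probability at least $1-e^{-t}$. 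Combining yields \eqref{factor.concentration.ineq}.

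Next, for \eqref{factor.l2.ineq}, the same sub-Gaussian vector tail bound applied to a single $\bbf_i$ gives $\PP\{\|\bbf_i\|_2>C'A_f\sqrt{K+s}\}\le e^{-s}$ for each $i$. A union bound over $i=1,\ldots,n$ followed by the substitution $s=\log n + t$ yields the claimed maximum bound.

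Finally, for \eqref{factor.cov.concentration}, I would invoke a Bernstein-type concentration inequality for sample covariance matrices of i.i.d.\ sub-Gaussian vectors (e.g., Theorem~4.7.3 or Exercise~4.7.3 in \cite{V2018}; also Theorem~6.5 in the same reference combined with a net argument on the unit sphere $S^{K-1}$). Since the coordinates of $\bbf_i$ are sub-Gaussian with norm $\lesssim A_f$, the entries of $\bbf_i\bbf_i^\T$ are sub-exponential with Orlicz $\psi_1$-norm $\lesssim A_f^2$, and the matrix variance proxy scales like $A_f^4$. Applying matrix Bernstein with an $\epsilon$-net of size $9^K$ over $S^{K-1}$ produces exactly the two-regime bound
\[
\|\hat\bSigma_f - \Ib_K\|\;\le\; C_2\max\Bigl\{A_f^2\sqrt{(K+t)/n}\,,\;A_f^4(K+t)/n\Bigr\}
\]
with probability at least $1-2e^{-t}$, where the $A_f^2\sqrt{\cdot}$ term is the ``Gaussian" regime from the variance proxy and the $A_f^4(\cdot)$ term is the ``Poisson" regime coming from the sub-exponential tail of $\bbf_i\bbf_i^\T$. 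The only mildly delicate point is to absorb the $K$ from the $9^K$ covering into the $K+t$ exponent, which is standard.

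The main obstacle is really bookkeeping in the third bound: getting the exact form $\max\{A_f^2 n^{-1/2}\sqrt{K+t},\,A_f^4 n^{-1}(K+t)\}$ rather than a slightly different arrangement of $A_f$ powers requires care in tracking which factor of $A_f$ comes from the sub-Gaussian vector norm and which comes from the variance of $\bbf_i\bbf_i^\T$. Parts (1) and (2) are essentially black-box applications of Vershynin's tail bound and a union bound, and should present no real difficulty.
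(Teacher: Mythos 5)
Your overall strategy—view $\sqrt{n}\bar\bbf$ and each $\bbf_i$ as sub-Gaussian vectors, bound their norms via a quadratic-form tail inequality, and then control $\hat\bSigma_f$ with a sub-Gaussian covariance concentration result proved by a net/Bernstein argument—is exactly what the paper does. For parts (34)--(35), the paper cites Theorem~2.1 of Hsu--Kakade--Zhang (2012) to get $\PP\{\|\bY\|_2^2 > c\,\|\bY\|_{\psi_2}^2(K+2\sqrt{Kt}+2t)\}\le e^{-t}$, and for (36) it cites Theorem~5.39 of Vershynin (2012), obtaining $\|\hat\bSigma_f-\Ib_K\|\le\max(\delta,\delta^2)$ with $\delta\asymp A_f^2 n^{-1/2}(K+t)^{1/2}$; your union-bound step with $s=\log n + t$ matches theirs.

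One imprecision worth noting: you cite Theorem~3.1.1 of Vershynin (2018) for the norm bound on a single sub-Gaussian vector, but that theorem requires the \emph{coordinates} of $\bY$ to be independent, which is not assumed here (Assumption~1(iii) only controls $\|\bbf\|_{\psi_2}$; the components of $\bbf$ need not be independent). The bound you state, $\|\bY\|_2 \le C'\|\bY\|_{\psi_2}\sqrt{K+t}$ w.h.p., is nonetheless correct for a general sub-Gaussian vector and follows from the Hsu--Kakade--Zhang inequality that the paper actually invokes. Similarly, your description of "matrix Bernstein with an $\epsilon$-net" is a bit muddled (a net argument is combined with \emph{scalar} Bernstein applied to $\langle\bbf_i,\bx\rangle^2$, which is precisely what Vershynin's Theorem~5.39 packages), but the $A_f$ bookkeeping you arrive at is right. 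So the proof is essentially the paper's, modulo swapping in the more directly applicable citations.
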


The following lemma provides an $\ell_\infty$-error bound for estimating the eigenvectors $\overline{\bv}_\ell$'s of $\Bb^\T \Bb$. The proof is based on an $\ell_\infty$ eigenvector perturbation bound developed in \cite{FWZ2016} and is given in Appendix~\ref{app:B}.

\begin{lemma} \label{infinity.perturbation}
Suppose Assumption~2 holds. Then we have
\begin{align}
	\max_{1\leq \ell \leq K }  |  \wt{\lambda}_\ell - \overline{\lambda}_\ell | \leq  p \| \hat{\bSigma}_{{\rm H}}  -\bSigma \|_{\max} + \| \bSigma_\varepsilon \| \label{eigenvalue.perturb} \\
	\mbox{ and }~ \max_{1\leq \ell \leq K }   \| \hat{\bv}_\ell - \overline{\bv}_\ell \|_\infty \leq  C (  p^{-1/2}  \| \hat{\bSigma}_{{\rm H}} - \bSigma\|_{\max}  +  p^{-1} \| \bSigma_\varepsilon \|  ) ,  \label{eigenvector.perturb}
\end{align}
where $C>0$ is a constant independent of $(n,p)$.
\end{lemma}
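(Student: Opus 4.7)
The eigenvalue estimate reduces to two applications of Weyl's inequality. First, writing $\bSigma = \Bb\Bb^\T + \bSigma_\varepsilon$, Weyl's theorem gives $|\lambda_\ell(\bSigma) - \overline{\lambda}_\ell| \leq \|\bSigma_\varepsilon\|$ for $\ell = 1,\ldots,K$ (this was already noted in the text just above the lemma). Second, Weyl applied to $\hat{\bSigma}_{\rm H}$ and $\bSigma$ yields $|\hat{\lambda}_\ell - \lambda_\ell(\bSigma)| \leq \|\hat{\bSigma}_{\rm H} - \bSigma\|$, and I bound the spectral norm by the max norm via $\|A\| \leq \max\{\|A\|_1,\|A\|_\infty\} \leq p\|A\|_{\max}$, the chain of inequalities that is explicitly recorded in the Notation section. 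Combining by the triangle inequality gives the stated bound for $\hat{\lambda}_\ell$, and the truncation $\wt{\lambda}_\ell = \max(\hat{\lambda}_\ell,0)$ does not increase the distance to $\overline{\lambda}_\ell \geq 0$, so the bound is preserved.

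For the $\ell_\infty$ eigenvector bound, the natural decomposition is to introduce the population eigenvectors $\bv_1,\ldots,\bv_K$ of $\bSigma$ and write
\[
 \|\hat{\bv}_\ell - \overline{\bv}_\ell\|_\infty \leq \|\hat{\bv}_\ell - \bv_\ell\|_\infty + \|\bv_\ell - \overline{\bv}_\ell\|_\infty,
\]
then apply an $\ell_\infty$ eigenvector perturbation bound to each term. Under Assumption~\ref{cond.pervasive} the top-$K$ eigengap of $\bSigma$ is of order $p$ and the entries of $\overline{\bv}_\ell = \overline{\bb}_\ell/\|\overline{\bb}_\ell\|_2$ are $O(p^{-1/2})$ thanks to the boundedness of the loadings; this ``delocalization'' is exactly the input needed by Theorem~3 of \cite{FWZ2016} (invoked as Lemma in the standalone appendix; recall \texttt{infinity.perturbation} uses these tools). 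That bound produces
\[
 \|\hat{\bv}_\ell - \bv_\ell\|_\infty \lesssim \lambda_\ell(\bSigma)^{-1}\bigl(\sqrt{p}\,\|\hat{\bSigma}_{\rm H}-\bSigma\|_{\max} \cdot \|\bv_\ell\|_\infty \cdot \sqrt{p}\bigr) \asymp p^{-1/2}\|\hat{\bSigma}_{\rm H}-\bSigma\|_{\max},
\]
and analogously, viewing $\bSigma_\varepsilon$ as the perturbation of $\Bb\Bb^\T$,
\[
 \|\bv_\ell - \overline{\bv}_\ell\|_\infty \lesssim \overline{\lambda}_\ell^{-1}\,\|\bSigma_\varepsilon\,\overline{\bv}_\ell\|_\infty \lesssim p^{-1}\|\bSigma_\varepsilon\|,
\]
using that $\|\bSigma_\varepsilon \overline{\bv}_\ell\|_\infty \leq \|\bSigma_\varepsilon\|\cdot \|\overline{\bv}_\ell\|_2 = \|\bSigma_\varepsilon\|$. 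Summing the two stages yields the claim.

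The main obstacle is the first perturbation inequality. One cannot simply pass through Davis-Kahan in $\ell_2$ and then use $\|v\|_\infty \leq \|v\|_2$: that loses a factor of $\sqrt{p}$ and gives a rate $\|\hat{\bSigma}_{\rm H}-\bSigma\|_{\max}$ rather than $p^{-1/2}\|\hat{\bSigma}_{\rm H}-\bSigma\|_{\max}$, which would be too coarse for the downstream theorems. The sharper $\ell_\infty$ rate requires the delocalization of $\bv_\ell$ (so that each coordinate of $(\hat{\bSigma}_{\rm H}-\bSigma)\bv_\ell$ is a sum of $p$ terms each of size $\|\hat{\bSigma}_{\rm H}-\bSigma\|_{\max}\cdot p^{-1/2}$, rather than being crudely bounded by the operator norm). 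Consequently, the bookkeeping in the proof amounts to (a) verifying the delocalization $\|\overline{\bv}_\ell\|_\infty \lesssim p^{-1/2}$ from $\|\Bb\|_{\max}\leq c_u$ and $\overline{\lambda}_\ell \asymp p$, and (b) carefully invoking the entrywise perturbation machinery of \cite{FWZ2016}; the rest is a clean eigengap calculation under Assumption~\ref{cond.pervasive}.
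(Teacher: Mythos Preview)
Your eigenvalue argument matches the paper's exactly: Weyl twice (once for $\bSigma$ versus $\Bb\Bb^\T$, once for $\hat{\bSigma}_{\rm H}$ versus $\bSigma$), the spectral-to-max bound $\|\hat{\bSigma}_{\rm H}-\bSigma\|\leq p\|\hat{\bSigma}_{\rm H}-\bSigma\|_{\max}$, and the observation that truncation at zero cannot move $\wt\lambda_\ell$ further from $\overline\lambda_\ell\geq 0$.

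For the eigenvector bound your two-stage route through the eigenvectors $\bv_\ell$ of $\bSigma$ differs from the paper, which applies the Fan--Wang--Zhong $\ell_\infty$ bound in a \emph{single} step: writing $\hat{\bSigma}_{\rm H}=\Bb\Bb^\T+(\hat{\bSigma}_{\rm H}-\bSigma)+\bSigma_\varepsilon$, they treat $\hat{\bSigma}_{\rm H}$ directly as a perturbation of $\Bb\Bb^\T$ and obtain $\|\hat{\bv}_\ell-\overline{\bv}_\ell\|_\infty\lesssim p^{-3/2}(\|\hat{\bSigma}_{\rm H}-\bSigma\|_\infty+\|\bSigma_\varepsilon\|_\infty)$, then bound the two $\|\cdot\|_\infty$ pieces by $p\|\hat{\bSigma}_{\rm H}-\bSigma\|_{\max}$ and $p^{1/2}\|\bSigma_\varepsilon\|$ respectively. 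This is cleaner because the base matrix $\Bb\Bb^\T$ has eigenvectors $\overline{\bv}_\ell=\overline{\bb}_\ell/\|\overline{\bb}_\ell\|_2$ whose delocalization $\|\overline{\bv}_\ell\|_\infty\lesssim p^{-1/2}$ is immediate from $\|\Bb\|_{\max}\leq c_u$ and $\overline\lambda_\ell\asymp p$. Your decomposition, by contrast, needs the delocalization of $\bv_\ell$ (the eigenvectors of $\bSigma$) as input to the first FWZ step, and that is not given a priori---it only follows once you have already established your \emph{second} step $\|\bv_\ell-\overline{\bv}_\ell\|_\infty\lesssim p^{-1}\|\bSigma_\varepsilon\|$. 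So the argument as written has a circularity: you must run the second stage first to certify $\|\bv_\ell\|_\infty\lesssim p^{-1/2}$, and only then may you invoke FWZ on the pair $(\hat{\bSigma}_{\rm H},\bSigma)$. This is fixable, but the paper's one-shot approach sidesteps the issue entirely and is what buys the short proof.
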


\subsection{Proof of Theorem~1}

To prove (15) and (16), we will derive the following stronger results that
\#
		p_0^{-1}V(z)&  = 2\Phi(-z) + O_{\PP}\{p^{-\kappa_1/2} + w_{n,p}^{-1/2} + n^{-1/2} \log(np)\}  \label{V.LLN+} \\
\mbox{ and }~ p^{-1} R(z) & =  \frac{1}{p}  \sum_{j=1}^p  \bigg\{ \Phi\bigg( -z+  \frac{  \sqrt{n}  \mu_j}{ \sqrt{ \sigma_{\varepsilon, jj} } }  \bigg) +     \Phi\bigg(-z - \frac{\sqrt{n} \mu_j}{\sqrt{ \sigma_{\varepsilon, jj} }}  \bigg) \bigg\}  \nn \\
&  \quad~ + O_{\PP}\{p^{-\kappa_1/2} + w_{n,p}^{-1/2} + n^{-1/2} \log(np)\}   \label{R.LLN+}
\#
uniformly over $z\geq 0$ as $n, p \to \infty$, where $w_{n,p} = \sqrt{n/\log(np)}$.

For $1\leq j\leq p$ and $t\geq  1$, set $\tau_j = a_j \sqrt{n/t}$ with $a_j \geq \sigma^{1/2}_{jj}$. By Lemma~\ref{lemBR}, for every $j \in \cH_0$ so that $\mu_j=0$,
\begin{align}
 	 |  T_j^\circ  - \sigma_{\varepsilon ,jj}^{-1/2} ( S_j + R_{2j} )  |  \leq c_1    \frac{ a_jt}{\sqrt{\sigma_{\varepsilon ,jj} n}}  \label{pf1.1}
\end{align}
with probability greater than $1- 3 e^{-t}$ as long as $n \geq  8t$, where
\begin{align}	
	 S_j = \frac{1}{\sqrt{n}} \sn S_{ij}    ~\mbox{ with }~ S_{ij} : =  \psi_{\tau_j}(\bb_j^\T \bbf_i + \varepsilon_{ij}) - \EE_{\bbf_i}   \psi_{\tau_j}(\bb_j^\T \bbf_i + \varepsilon_{ij})    , \label{T0j.decom}
\end{align}
$R_{2j} = n^{-1/2} \sn   \{\EE_{\bbf_i}   \psi_{\tau_j}(\bb_j^\T \bbf_i +\varepsilon_{ij})  - \bb_j^\T \bbf_i   \}$. For $j=1,\ldots, p$, denote by $\mathcal{E}_{1j}(t)$ the event that \eqref{pf1.1} holds. Define $\mathcal{E}_1(t) = \bigcap_{j=1}^p \mathcal{E}_{1j}(t)$, on which it holds
\begin{align}
	\sum_{j \in \mathcal{H}_0}  I\bigg( |T_{0j} | \geq  z +  \frac{c_1 a_j t}{\sqrt{\sigma_{\varepsilon , jj} n}}  \bigg) \leq 	V( z )  \leq  \sum_{j \in \mathcal{H}_0} I\bigg( |T_{0j} | \geq  z -    \frac{ c_1 a_j t}{\sqrt{\sigma_{\varepsilon , jj} n }} \bigg), \label{pf1.2}
\end{align}
where $T_{0j} := \sigma_{\varepsilon , jj}^{-1/2} ( S_j + R_{2j} )$. Next, let $\mathcal{E}_2(t)$ denote the event on which the following hold:
\$
	\| \sqrt{n} \bar{\bbf} \|_2 \leq  C_1  A_f (K+t)^{1/2} , \ \ \max_{1\leq i\leq n} \| \bbf_i \|_2 \leq C_1 A_f (K+\log n + t)^{1/2}, \nn \\
	\mbox{and }~  \|  \hat{\bSigma}_f - \Ib_K \|_2 \leq  C_2 \max \{   A_f^2 n^{-1/2} ( K +t)^{1/2} , A_f^4 n^{-1} (K+t)  \}   . \nn
\$
From Lemmas~\ref{lemBR}, \ref{factor.concentration} and the union bound, it follows that
$$
	\PP \{  \mathcal{E}_1(t)^{ {\rm c} } \} \leq p e^{-t} ~~\mbox{ and }~~ \PP\{ \mathcal{E}_2(t)^{ {\rm c} } \} \leq 4 e^{-t}.
$$

With the above preparations, we are ready to prove \eqref{V.LLN+}. The proof of \eqref{R.LLN+} follows the same argument and therefore is omitted. Note that, on the event $\cE_2(t)$,
\$
	\max_{1\leq i\leq n} | \bb_j^\T \bbf_i | \leq  C_1  A_f \| \bb_j \|_2  (K+\log n + t)^{1/2}  ~\mbox{ for all }  1\leq j \leq p.
\$
By the definition of $\tau_j$'s,
\#
	\max_{1\leq i\leq n} | \bb_j^\T \bbf_i |  \leq \tau_j /2 ~\mbox{ for all } j =1,\ldots, p, \label{pf1.3}
\#
as long as $n\geq 4 (C_1 A_f)^2 (K + \log n + t ) t$. This, together with Lemma~\ref{factor.concentration}, implies $|R_{2j}| \leq 8 a_j^{-3}  \upsilon_j^4\, n^{-1} t^{3/2} $ almost surely on $\mathcal{E}_2(t)$ for all sufficiently large $n$. Moreover, taking \eqref{pf1.2} into account we obtain that, almost surely on the event $\mathcal{E}_1(t)  \cap \mathcal{E}_2(t)$,
\begin{align}
	\sum_{j \in \mathcal{H}_0} I( | \sigma_{\varepsilon , jj}^{-1/2} S_j  | \geq  z + c_2   n^{-1/2} t ) \leq 	V( z )  \leq  \sum_{j \in \mathcal{H}_0} I( | \sigma_{\varepsilon , jj}^{-1/2} S_j   | \geq  z - c_2n^{-1/2} t )  \label{pf1.4}
\end{align}
as long as $n \gtrsim (K+ t)t$. For $x\in \RR$, define
\begin{align}
\widetilde{V}_+(x) = \sum_{j \in \mathcal{H}_0} I(  \sigma_{\varepsilon , jj}^{-1/2} S_j   \geq  x   ) ~\mbox{ and }~ \widetilde{V}_-(x) = \sum_{j \in \mathcal{H}_0}  I( \sigma_{\varepsilon , jj}^{-1/2} S_j \leq -x   ) ,  \nn
\end{align}
so that  \eqref{pf1.4} can be written as
\begin{align}
	& 	p_0^{-1}  \{ \widetilde V_+ ( z + c_2 n^{-1/2} t ) +  \widetilde V_- ( z + c_2 n^{-1/2} t )  \} \nn \\
	& \leq  p_0^{-1} V(z ) \leq  p_0^{-1} \{ \widetilde V_+ ( z - c_2 n^{-1/2} t ) +  \widetilde V_- ( z - c_2 n^{-1/2} t ) \}   . \label{pf1.5}
\end{align}
Therefore, to prove \eqref{V.LLN+} it suffices to focus on $\widetilde{V}_+$ and $\widetilde{V}_-$.

Observe that, conditional on $\mathcal{F}_n := \{ \bbf_1, \ldots, \bbf_n \}$, $I(  \sigma_{\varepsilon ,11}^{-1/2} S_1 \geq  z  ) , \ldots,  I( \sigma_{\varepsilon ,pp}^{-1/2} S_p \geq  z  )$ are weakly correlated random variables. Define
$$
	Y_j = I ( \sigma_{\varepsilon , jj}^{-1/2} S_j \geq  z ) ~\mbox{ and } ~ P_j = \PP ( \sigma_{\varepsilon , jj}^{-1/2} S_j \geq  z   |\mathcal{F}_n  )
$$
for $j=1,\ldots, p$, and note that
\begin{align}
	\var\bigg( \frac{1}{p_0} \sum_{j \in \mathcal{H}_0 } Y_j  \bigg| \mathcal{F}_n \bigg) & = \frac{1}{p_0^2}  \sum_{j \in \mathcal{H}_0 } \var(Y_j  | \mathcal{F}_n) + \frac{1}{p_0^2} \sum_{j,k\in \mathcal{H}_0: j\neq k} \cov(Y_j, Y_k |\mathcal{F}_n)  \nn \\
	& \leq \frac{1}{4 p_0} +  \frac{1}{p_0^2}  \sum_{j,k\in \mathcal{H}_0: j\neq k}  \{  \e(Y_j Y_k | \mathcal{F}_n)  - P_j P_k \}   \label{pf1.6}
\end{align}
almost surely. In the following, we will study $P_j$ and $\EE(Y_j Y_k | \cF_n)$ separately, starting with the former. Conditional on $\mathcal{F}_n$, $S_j$ is a sum of independent zero-mean random variables with conditional variance $s^2_{j} := \var( S_j | \mathcal{F}_n ) = n^{-1} \sn s_{ij}^2$ where $s_{ij}^2 := \var( S_{ij}  | \mathcal{F}_n  )$. Let $G \sim \cN(0,1)$ be a standard normal random variable independent of the data. By the Berry-Esseen inequality,
\begin{align}
	 & \sup_{x\in \RR}  |   \PP (   \sigma_{\varepsilon , jj}^{-1/2} S_j \leq x | \mathcal{F}_n  )   - \PP ( s_j  \sigma_{\varepsilon , jj}^{-1/2}    G\leq x | \mathcal{F}_n ) | \nn \\
	& \lesssim   \frac{1}{  (  n s_{j} )^{3/2}}\sn  \e_{\bbf_i} | \psi_{\tau_j}(\bb_j^\T \bbf_i + \varepsilon_{ij})  |^3  \lesssim \frac{1}{     ( n s_j )^{3/2}}\sn  (   | \bb_j^\T \bbf_i  |^3+  \e | \varepsilon_{ij}|^3 )   \label{pf1.7}
\end{align}
almost surely, where conditional on $\mathcal{F}_n$, $s_j  \sigma_{\varepsilon , jj}^{-1/2}   G   \sim \cN (0 , s_j^2  \sigma_{\varepsilon , jj}^{-1}  )$. Since $\max_{1\leq i\leq n}| \bb_j^\T \bbf_i |\leq \tau_j/2$ for all $1\leq j\leq p$ on $\mathcal{E}_2(t)$, applying Lemma~\ref{app.meanvar} with $\kappa=4$ yields
\begin{align}
	   \sigma_{\varepsilon , jj} -    4 a_j^{-2}\upsilon_j^4  ( 1 +   16 a_j^{-4} \upsilon_j^4  \, n^{-2}t^2  ) n^{-1} t \leq    s^2_{j}      \leq  \sigma_{\varepsilon , jj}   \label{var.bound}
\end{align}
almost surely on the event $\mathcal{E}_2(t)$. Using \eqref{var.bound} and Lemma~A.7 in the supplement of \cite{SZ2015}, we get
\begin{align}
	 \sup_{x\in \RR}  | \PP ( s_j  \sigma_{\varepsilon , jj}^{-1/2}    G\leq x  | \mathcal{F}_n  )  - \Phi(x) | \lesssim   a_j^{-2}\upsilon_j^4  \, n^{-1} t  \label{Gaussian.comparison}
\end{align}
almost surely on $\mathcal{E}_2(t)$ as long as $n \gtrsim  (K+t)t$. Putting \eqref{pf1.7} and \eqref{Gaussian.comparison} together we conclude that, almost surely on $\cE_2(t)$,
\begin{align}
	 \max_{1\leq j\leq p }| P_j -  \Phi (-z)  | \lesssim     n^{-1/2}  (K+ \log n + t)^{1/2}	  \label{pf1.8}
\end{align}
uniformly for all $z\geq 0$ as long as $ n \gtrsim (K+t) t$.

Next we consider the joint probability $\e(Y_j Y_k | \mathcal{F}_n) = \PP ( \sigma_{\varepsilon , jj}^{-1/2} S_j  \geq  z , \sigma_{\varepsilon , kk}^{-1/2} S_k \geq  z  | \mathcal{F}_n  )$ for a fixed pair $(j,k)$ satisfying $1\leq j\neq k\leq p$. Define bivariate random vectors $\bxi_i = ( s_{j}^{-1} S_{ij}, s_{k}^{-1} S_{ik} )^\T$ for $i=1,\ldots, n$. Observe that $\bxi_1,\ldots,\bxi_n$ are conditionally independent random vectors given $\mathcal{F}_n$. Denote by $\Ab =(a_{uv})_{1\leq u, v \leq 2}$ the covariance matrix of $n^{-1/2} \sn \bxi_i = (s_{j}^{-1}S_j, s_{k}^{-1}S_k)^\T$ given $\mathcal{F}_n$ such that
$$
	a_{11} = a_{22}=1 ~~\mbox{ and }~~	a_{12} = a_{21} = \frac{1}{ n s_{j}s_{k} } \sn \cov_{\bbf_i} ( S_{ij}, S_{ik} ).
$$
By Lemma~\ref{app.meanvar} and \eqref{var.bound}, we have $ | a_{12} - r_{\varepsilon , jk} |  \lesssim  n^{-1} t$ almost surely on $\mathcal{E}_2(t)$. Therefore, the matrix $\Ab$ is positive definite almost surely on $ \mathcal{E}_2(t)$ whenever $ n \gtrsim t$. Let $\bG = (G_1, G_2)^\T$ be a Gaussian random vector with $\EE (\bG) = {\mathbf 0}$ and $\cov(\bG) = \Ab$. Then, applying Theorem~1.1 in \cite{B2005}, a multivariate Berry-Esseen bound, to $n^{-1/2} \sn \bxi_i$ gives
\begin{align}
	& \sup_{x, y \in \RR} | \PP ( s_{j}^{-1} S_j \geq   x,  s_{k}^{-1} S_k \geq  y | \mathcal{F}_n  ) -  \PP ( G_1 \geq  x, G_2 \geq  y  )  |  \nn \\
	& \lesssim \frac{1}{n^{3/2}} \sn \e  (  \| \Ab^{-1/2} \bxi_i \|_2^3 )  \lesssim \frac{1}{\sqrt{n}} +   \frac{1}{n^{3/2}}\sn ( | \bb_j^\T \bbf_i  |^3 +  | \bb_k^\T \bbf_i  |^3 ) \nn
\end{align}
almost surely on $\mathcal{E}_2(t)$. Taking $x=s_j^{-1} \sigma_{\varepsilon , jj}^{1/2}\, z$ and $y=s_k^{-1} \sigma_{\varepsilon ,kk}^{1/2} \,z$ implies
\begin{align}
	 | \e(Y_j Y_k | \mathcal{F}_n)   -  \PP ( G_1 \geq   s_j^{-1} \sigma_{\varepsilon , jj}^{1/2} \, z , \, & G_2 \geq  s_k^{-1} \sigma_{\varepsilon ,kk}^{1/2} \,z  | \mathcal{F}_n )  | \nn \\
	 & \lesssim \frac{1}{\sqrt{n}} +   \frac{1}{n^{3/2}}\sn ( | \bb_j^\T \bbf_i  |^3 +  | \bb_k^\T \bbf_i  |^3 ) . \label{joint.approxi.1}
\end{align}
For the bivariate Gaussian random vector $ (G_1, G_2)^\T$ with $a_{12} = \cov(G_1, G_2)$, it follows from Corollary~2.1 in \cite{LS2002} that, for any $x, y \in \RR$,
\begin{align}
	  |  \PP ( G_1 \geq  x, G_2 \geq  y   ) - \{ 1- \Phi(x) \} \{ 1- \Phi(y) \}  | \leq \frac{|a_{12} |}{4}  \exp\bigg\{ - \frac{x^2 + y^2}{2(1+ |a_{12}|)} \bigg\} \leq \frac{|a_{12} |}{4}  . \nn
\end{align}
This, together with the Gaussian comparison inequality \eqref{Gaussian.comparison} gives
\begin{align}
	 | \PP ( G_1 \geq  s_j^{-1} \sigma_{\varepsilon , jj}^{1/2} \, z ,  G_2 \geq  s_k^{-1} \sigma_{\varepsilon , kk}^{1/2} \,z  | \mathcal{F}_n ) - \Phi(-z)^2 | \lesssim  | r_{\varepsilon , jk} | + n^{-1} t  \label{joint.approxi.2}
\end{align}
almost surely on $\mathcal{E}_2(t)$ as long as $n\gtrsim (K+t)t$.

Consequently, it follows from \eqref{pf1.6}, \eqref{pf1.8}, \eqref{joint.approxi.1}, \eqref{joint.approxi.2} and Assumption~1 that
\begin{align}
	 \e[  \{  p_0^{-1} \widetilde V_+(z)  & -   \Phi(-z)  \}^2  | \mathcal{F}_n ]   \lesssim    p^{ -\kappa_1}     +  n^{-1/2}   (K+ \log n + t)^{1/2}   \label{pf1.11}
\end{align}
almost surely on $\mathcal{E}_2(t)$ as long as $n\gtrsim (K+t)t$. A similar bound can be obtained for $\e [  \{  p_0^{-1} \widetilde V_-(z)    -   \Phi(-z)   \}^2  | \mathcal{F}_n ]$. Recall that $\PP\{ \mathcal{E}_1(t)\cap \mathcal{E}_2(t) \} \geq  1- (p +  4)e^{-t}$ whenever $n \geq 8 t$. Finally, taking $t= \log(np )$ in \eqref{pf1.5} and \eqref{pf1.11} proves \eqref{V.LLN+}.   \qed

\subsection{Proof of Proposition~1}

To begin with, observe that
\#
	\bigg|  \wt T_j  -  \sqrt{\frac{n}{\wt{\sigma}_{\varepsilon, jj}}} ( \hat{\mu}_j -  \bb_j^\T \bar{\bbf} \,)  \bigg|  =  \sqrt{\frac{n}{\wt{\sigma}_{\varepsilon, jj}}} | (\wt{\bb}_j - \bb_j )^\T \bar{\bbf} |   \leq  \sqrt{\frac{n}{\wt{\sigma}_{\varepsilon, jj}}} \| \bar{\bbf} \|_2 \|  \wt{\bb}_j - \bb_j \|_2  , \nn \\
\bigg|  \sqrt{\frac{n}{\wt{\sigma}_{\varepsilon, jj}}} ( \hat{\mu}_j - \bb_j^\T \bar{\bbf}  \,)  - T^\circ _j  \bigg|
  \leq \bigg| \frac{ 1}{ \sqrt{ \wt{\sigma}_{\varepsilon, jj} }  } - \frac{1}{\sqrt{ \sigma_{\varepsilon, jj} } } \bigg| (    |\sqrt{n} \,\hat{\mu}_j|  +  \| \bb_j \|_2 \| \sqrt{n} \bar \bbf \|_2 ) . \nn
\#
By Lemma~\ref{factor.concentration}, $\| \sqrt{n} \bar \bbf \|_2 \lesssim (K + \log n)^{1/2}$ with probability greater than $1-n^{-1}$.
Moreover, it follows from Lemma~\ref{lem3}  that $\max_{1\leq j\leq p}  | \hat{\mu}_j - \mu_j | \lesssim  \{\log (np)\}^{1/2} n^{-1/2}$ with probability at least $1- 2 n^{-1}$.
Putting the above calculations together, we conclude that
\#
	\max_{j \in \mathcal{H}_0 } |  \wt T_j - T_j^\circ  |  \lesssim \frac{\log(np)}{\sqrt{n}}  +  (K + \log n)^{1/2}  \max_{1\leq j\leq p} ( \| \wt{\bb}_j - \bb_j \|_2  + |\wt{\sigma}_{jj} - \sigma_{jj} | )  \nn
\#
with probability at least $1- 3 n^{-1}$. Combining this with the proof of Theorem~1 and condition (17) implies $p_0^{-1}   \wt V(z  ) = 2 \Phi(-z) + o_{\PP}(1)$. Similarly, it can be proved that \eqref{R.LLN+} holds with $R(z)$ replaced by $\wt R(z)$. The conclusion follows immediately. \qed

\subsection{Proof of Theorem~2}
We first note that the $\widehat\bSigma = \widehat\bSigma_U$ defined is a $U$-statistic of order two. For simplicity, let $\cC$ denote the set of $\binom{n}{2}$ distinct pairs $(i_1, i_2)$ satisfying $1\leq i_1<i_2\leq n$. Let $h(\bX_i, \bX_j)=2^{-1}(\bX_i-\bX_j)(\bX_i-\bX_j)^\T$ and $Y_{ij}=\psi_\tau(h(\bX_i, \bX_j))=\tau\psi_{1}(\tau^{-1}h(\bX_i, \bX_j))$, such that
\$
\widehat\bSigma= \frac{1}{ \binom{n}{2}}\sum_{(i,j) \in \cC}Y_{ij}.
\$

We now rewrite the $U$-statistic $\widehat\bSigma$ as an average of dependent averages of identically and independently  distributed random matrices. Define $k=[n/2]$, the greatest integer $\leq n/2$ and define  
\$
W_{(1,\ldots, n)}=   k^{-1} ( Y_{12}+ Y_{23}+\ldots+ Y_{2k-1, 2k} ).
\$
Let $\cP$ denote the class of all $n!$ permutations  of $(1,\ldots, n)$ and  $\pi=(i_1, \ldots, i_n): \{ 1,\ldots, n\} \mapsto \{ 1,\ldots, n\}$ be a permutation, i.e. $\pi(k) =i_k$ for $k=1,\ldots, n$. Then it can be shown that
\$
\widehat\bSigma =\frac{1}{n!}\sum_{\pi\in\cP} W_\pi.
\$
Using the convexity of maximum eigenvalue function $\lambda_{\max}(\cdot)$ along with  the convexity of the exponential function,  we obtain
\$
\exp \{ \lambda_{\max} (\widehat\bSigma -\bSigma  ) / \tau \} \leq \frac{1}{n!} \sum_{\pi\in\cP}\exp \{  \lambda_{\max}  (W_\pi-\bSigma  ) / \tau \}.
\$
Combining this with Chebyshev's inequality delivers
\$
	& \PP\{ \lambda_{\max} (\widehat\bSigma-\bSigma )\geq  t/\sqrt{n}\} \\ 
&=\PP\Big[ \exp\{ \lambda_{\max}(k\widehat\bSigma -k\bSigma  ) /\tau \} \geq  \exp \{kt/(\tau\sqrt{n} \,  ) \} \Big] \\
&\leq e^{-  {kt}/(\tau \sqrt{n}) } \frac{1}{n!} \sum_{\pi\in \cP}\EE\exp  \{ \lambda_{\max}(kW_\pi-k\bSigma  )/\tau \} \\
&\leq e^{-  {kt}/(\tau \sqrt{n}) } \frac{1}{n!} \sum_{\pi\in \cP}\EE \tr\exp\{  (kW_\pi-k\bSigma  ) / \tau \},
\$
where we use the property $e^{\lambda_{\max}(\Ab)}\leq \tr e^\Ab$ in the last inequality. For a given permutation $\pi=( i_1, \ldots, i_n)\in \cP$, we write  $Y_{\pi j}= Y_{i_{2j-1}i_{2j}}$ and $H_{\pi j}=h(\bX_{i_{2j-1}}, \bX_{i_{2j}})$ with $\EE H_{\pi j}=\bSigma$.   We then rewrite $W_\pi$ as
$W_\pi= k^{-1} ( Y_{\pi1}+\ldots+ Y_{\pi k} )$, where $Y_{\pi j}$'s are mutually independent. Before proceeding, we introduce the following lemma whose proof is based on elementary calculations.

\begin{lemma}\label{lemma:log}
For any $\tau>0$ and $x \in \RR$, we have $\psi_\tau(x)=\tau\psi_1(x/\tau)$ and
\$
  -\log (1-x+x^2)\leq \psi_1(x)\leq \log (1+x+x^2) ~\mbox{ for all } x\in \RR.
\$
\end{lemma}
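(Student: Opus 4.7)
The plan is to establish Lemma \ref{lemma:log} in three short steps, treating the scaling identity and the two-sided log bounds separately and then deriving the lower bound from the upper one by an odd-symmetry argument.

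First I would verify the scaling identity $\psi_\tau(x)=\tau\psi_1(x/\tau)$ by direct substitution into the definition $\psi_\tau(u)=\min(|u|,\tau)\sgn(u)$. Pulling the factor $\tau>0$ through the $\min$, one sees $\tau\psi_1(x/\tau)=\tau\min(|x|/\tau,1)\sgn(x)=\min(|x|,\tau)\sgn(x)=\psi_\tau(x)$, and this handles the identity with no further work.

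Next, the core of the proof is the upper bound $\psi_1(x)\le \log(1+x+x^2)$ for all $x\in\RR$. I would introduce the auxiliary function $f(x)=\log(1+x+x^2)-\psi_1(x)$ and split the real line into three regions, using that $\psi_1(x)=x$ on $[-1,1]$, $\psi_1(x)=1$ on $(1,\infty)$, and $\psi_1(x)=-1$ on $(-\infty,-1)$. On $[-1,1]$ one computes
\[
f'(x)=\frac{1+2x}{1+x+x^2}-1=\frac{x(1-x)}{1+x+x^2},
\]
which is nonnegative on $[0,1]$ and nonpositive on $[-1,0]$, so $f$ attains its minimum on $[-1,1]$ at $x=0$, where $f(0)=0$. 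For $x>1$, $f'(x)=(1+2x)/(1+x+x^2)>0$, so $f(x)>f(1)=\log 3-1>0$. For $x<-1$, $f'(x)<0$ since $1+2x<0$ and $1+x+x^2>0$ (its discriminant $-3$ is negative), so $f$ is decreasing there; hence $f(x)>f(-1)=\log 1+1=1>0$ as $x\to-\infty$ from $-1$. Combining the three regions gives $f\ge 0$ everywhere, i.e.\ the upper bound.

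For the lower bound $-\log(1-x+x^2)\le \psi_1(x)$, I would use the odd symmetry $\psi_1(-x)=-\psi_1(x)$: applying the already-proved upper bound at $-x$ yields $-\psi_1(x)=\psi_1(-x)\le \log(1+(-x)+(-x)^2)=\log(1-x+x^2)$, which rearranges to the desired inequality. The only place where anything nontrivial happens is the sign analysis of $f'$ on the interior interval $[-1,1]$, and since that reduces to the factorization $x(1-x)$ over the strictly positive denominator $1+x+x^2$, there is no real obstacle; the lemma is essentially a calculus exercise once the scaling identity is in hand.
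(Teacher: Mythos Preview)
Your proof is correct. The paper itself omits the argument entirely, remarking only that the lemma ``is based on elementary calculations,'' so your case analysis via $f(x)=\log(1+x+x^2)-\psi_1(x)$ together with the odd-symmetry reduction for the lower bound is exactly the kind of elementary verification the authors had in mind; there is nothing to compare.
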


From Lemma \ref{lemma:log} we see that the matrix $Y_{\pi j}$ can be bounded as
\$
-\log ( \Ib_p -H_{\pi j}/\tau+H^2_{\pi j}/\tau^2 ) \leq  Y_{\pi j}  /\tau \leq \log ( \Ib_p + H_{\pi j}/\tau+H_{\pi j}^2/\tau^2 ).
\$
Using this property we can bound $\EE\exp\{  \tr (kW_\pi-k\bSigma  ) / \tau \}$ by
\#\label{thm:utype:eq:1}
&\EE_{[k-1]}\EE_k\tr\exp \Bigg\{ \sum_{j=1}^{k-1}{Y_{\pi j}}- (k/\tau) \bSigma +{Y_{\pi k}}\Bigg\} \nn\\
&\hspace{1cm}\leq \EE_{[k-1]}\EE_k\tr\exp \Bigg\{ \sum_{j=1}^{k-1}{ Y_{\pi j}}- (k/\tau) \bSigma +\log (\Ib_p +H_{\pi j}/\tau+H_{\pi j}^2/\tau^2 )\Bigg\}
\#
To further bound the right-hand side of \eqref{thm:utype:eq:1}, we follow a similar argument as in \cite{M2016}. The following lemma, which is taken from \cite{lieb1973convex}, is commonly referred to as the Lieb's concavity theorem.
\begin{lemma}\label{lemma:lieb}
For any symmetric matrix $H \in \RR^{d\times d}$, the function
\$
f(A)=\tr\exp (H+\log A ) ,  \ \  A   \in \RR^{d \times d}
\$
is concave over the set of all positive definite matrices.
\end{lemma}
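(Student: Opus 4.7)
This is Lieb's classical concavity theorem from \cite{lieb1973convex}, and my plan would be to follow the standard complex-interpolation proof of the original paper rather than rely on more modern but less self-contained routes.

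First, by continuity it suffices to establish midpoint concavity: for positive definite $A, B \in \RR^{d \times d}$,
\[
2 \tr \exp\bigl(H + \log((A+B)/2)\bigr) \geq \tr\exp(H + \log A) + \tr\exp(H + \log B).
\]
The reduction to midpoints is justified because $A \mapsto \tr \exp(H + \log A)$ is continuous on the open cone of positive definite matrices, and midpoint concavity plus continuity implies full concavity.

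Second, the heart of the argument is the following joint concavity lemma, often called Lieb's ``triple matrix inequality'': for any $K \in \RR^{d\times d}$ and any $s \in [0,1]$, the map $(A, B) \mapsto \tr(K^* A^s K B^{1-s})$ is jointly concave on pairs of positive definite matrices. I would prove this by applying Hadamard's three-line theorem to the analytic family $z \mapsto \tr(K^* A^z K B^{1-z})$ on the strip $\{0 \leq \mathrm{Re}(z) \leq 1\}$, after verifying that on the boundary lines $\mathrm{Re}(z) = 0, 1$ the function satisfies sharp $L^\infty$ bounds in terms of $\tr(K^* K A)$ and $\tr(K^* K B)$ via Cauchy--Schwarz and the spectral calculus. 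Concavity in the pair $(A,B)$ then follows from Minkowski-type reasoning applied to the boundary norms.

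Third, to pass from this bilinear concavity to the nonlinear statement about $\tr \exp(H + \log A)$, I would invoke the Lie--Trotter product formula
\[
e^{H + \log A} = \lim_{n \to \infty} \bigl(e^{H/n} A^{1/n}\bigr)^n,
\]
so that $\tr \exp(H + \log A) = \lim_n \tr\bigl((e^{H/n} A^{1/n})^n\bigr)$. Expanding the trace cyclically, the right-hand side becomes an alternating product of the form $\tr(K_n A^{1/n} K_n A^{1/n} \cdots A^{1/n})$ with $K_n = e^{H/n}$, and iterated application of the joint concavity lemma (with $s = 1/n$) yields concavity in $A$ for each fixed $n$. Since pointwise limits of concave functions are concave, the conclusion transfers to the limit.

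The main obstacle is Step~2: proving the joint concavity lemma requires a delicate interplay between complex interpolation (Hadamard's three-line theorem and the Phragm\'en--Lindel\"of principle for matrix-valued analytic functions) and the spectral calculus for fractional powers. A cleaner alternative, which I would mention as a fallback, is Epstein's approach via the Pick--Nevanlinna integral representation of operator monotone functions; it delivers Lieb's concavity as a short corollary once one writes $\log t = \int_0^\infty \bigl(\tfrac{1}{\lambda+1} - \tfrac{1}{\lambda+t}\bigr) d\lambda$ and exploits the concavity of each pointwise integrand in $t$.
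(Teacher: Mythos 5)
The paper offers no proof of this lemma; it is stated as imported from Lieb (1973), so there is no in-paper argument to compare against. Your Steps~1 and~2 assemble the right ingredients for Lieb's original route: the reduction to midpoint concavity by continuity is standard, and the Hadamard three-line proof of joint concavity of $(A,B)\mapsto\tr(K^*A^sKB^{1-s})$ is indeed where the interpolation work happens in Lieb's paper and in most textbook accounts. The Epstein/Pick-representation fallback you mention is also a legitimate alternative.

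Step~3, however, contains a genuine gap. After Lie--Trotter you must show that $A\mapsto\tr\bigl((e^{H/n}A^{1/n})^n\bigr)$ is concave for each fixed $n$, and this expands to $\tr(K_nA^{1/n}K_nA^{1/n}\cdots K_nA^{1/n})$ with $n$ interleaved copies of $A^{1/n}$. The two-variable lemma from Step~2, even restricted to the diagonal $A=B$, only delivers concavity of $A\mapsto\tr(K^*A^sKA^{1-s})$, which is the case $n=2$. For $n\geq 3$ you need the multi-variable generalization that $(A_1,\dots,A_n)\mapsto\tr(K_1A_1^{p_1}K_2A_2^{p_2}\cdots K_nA_n^{p_n})$ is jointly concave whenever $p_j\geq 0$ and $\sum_j p_j\leq 1$, and then restrict to the diagonal $A_1=\dots=A_n=A$. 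That generalization does not follow by ``iterated application'' of the two-variable lemma: once you try to peel off one factor, the remaining product $K A^{1/n}\cdots K A^{1/n}$ cannot be written as $B^{1-1/n}$ for any $B$, nor absorbed into a new kernel $K$ without making the kernel $A$-dependent, which destroys the structure the three-line argument requires. The multi-variable statement needs its own interpolation (or a block-matrix reduction), and it is precisely the hard part of passing from the bilinear WYD concavity to the trace-exp-log statement. To repair the plan, either state and prove the multi-variable joint concavity explicitly before invoking Lie--Trotter, or swap Step~3 for a lifting that genuinely uses only the two-variable version, such as Tropp's variational representation of $\tr\exp(H+\log A)$ as a supremum over density matrices.
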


Applying Lemma \ref{lemma:lieb} repeatedly along with Jensen's inequality, we obtain
\$
\EE\{ \tr\exp (kW_\pi-k\bSigma  ) / \tau \}
&\leq \EE\tr\exp \Bigg\{\sum_{j=1}^{k-1}{ Y_{\pi j}}-(k/\tau) \bSigma +\log (\Ib_p+\EE H_{\pi k}/\tau+\EE H_{\pi k}^2/\tau^2 )\Bigg\}\\
&\leq \tr \exp\Bigg\{\sum_{j=1}^k\log (\Ib_p +\EE H_{\pi j}/\tau +\EE H_{\pi j}^2/\tau^2 )-( k/\tau ) \bSigma \Bigg\}\\
&\leq \tr \exp\Bigg(\sum_{j=1}^k\EE H_{\pi k}^2/\tau^2\Bigg),
\$
where we use the inequality $\log (1+x)\leq x$ for $x>-1$ in the last step.  The following lemma gives an explicit form for $v^2 :=\|\EE H_{\pi k}^2\|_2$.

\begin{lemma}\label{lemma:v2}
We have
\$
 \| \EE h^2(\bX_1, \bX_2) \|_2 =\frac{1}{2} \Big\|\EE \{(\bX-\bmu)(\bX-\bmu)^\T \}^2+\tr (\bSigma)\bSigma+2\bSigma^2\Big\|.
\$
\end{lemma}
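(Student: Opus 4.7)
\textbf{Proof proposal for Lemma \ref{lemma:v2}.}

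The plan is to reduce the computation of $\EE h^2(\bX_1,\bX_2)$ to a direct expansion after centering, then identify each surviving term. First I would write $\bZ_i = \bX_i - \bmu$, so that $\bX_1-\bX_2 = \bZ_1-\bZ_2$ and $\bZ_1, \bZ_2$ are IID copies of a zero-mean random vector with covariance $\bSigma$. Since
$$
h^2(\bX_1,\bX_2) = \tfrac{1}{4}\,\|\bZ_1-\bZ_2\|_2^2\,(\bZ_1-\bZ_2)(\bZ_1-\bZ_2)^\T,
$$
the task is to evaluate $\EE\bigl[\|\bZ_1-\bZ_2\|_2^2(\bZ_1-\bZ_2)(\bZ_1-\bZ_2)^\T\bigr]$.

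Next I would expand $\|\bZ_1-\bZ_2\|_2^2 = \|\bZ_1\|_2^2 + \|\bZ_2\|_2^2 - 2\bZ_1^\T\bZ_2$ and $(\bZ_1-\bZ_2)(\bZ_1-\bZ_2)^\T = \bZ_1\bZ_1^\T + \bZ_2\bZ_2^\T - \bZ_1\bZ_2^\T - \bZ_2\bZ_1^\T$ and multiply, yielding twelve matrix-valued summands. Using independence together with $\EE\bZ_i = {\mathbf 0}$, every term containing a single factor of $\bZ_1$ or $\bZ_2$ (after separating scalars) vanishes, e.g.\ $\EE[\|\bZ_1\|_2^2 \bZ_1\bZ_2^\T] = \EE[\|\bZ_1\|_2^2\bZ_1]\EE[\bZ_2^\T] = 0$, and similarly $\EE[(\bZ_1^\T\bZ_2)\bZ_1\bZ_1^\T]$ vanishes because $\EE[\bZ_2]=0$. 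What survives are only the four ``matched'' terms:
\begin{align*}
\EE[\|\bZ_1\|_2^2 \bZ_1\bZ_1^\T] &= \EE[\|\bZ_2\|_2^2 \bZ_2\bZ_2^\T] = \EE\bigl[(\bZ\bZ^\T)^2\bigr], \\
\EE[\|\bZ_1\|_2^2 \bZ_2\bZ_2^\T] &= \EE[\|\bZ_2\|_2^2 \bZ_1\bZ_1^\T] = \tr(\bSigma)\,\bSigma, \\
\EE[(\bZ_1^\T\bZ_2)\bZ_1\bZ_2^\T] &= \EE[(\bZ_1^\T\bZ_2)\bZ_2\bZ_1^\T] = \bSigma^2.
\end{align*}

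The only non-routine identifications are the first and third displayed lines. For the first, I would note the pointwise identity $\|\bZ\|_2^2\,\bZ\bZ^\T = (\bZ\bZ^\T)^2$, which follows by entrywise inspection, $[\|\bZ\|_2^2\bZ\bZ^\T]_{ij} = (\sum_k Z_k^2)Z_iZ_j = \sum_k Z_iZ_kZ_kZ_j = [(\bZ\bZ^\T)^2]_{ij}$. For the third, independence gives
$$
\EE[(\bZ_1^\T\bZ_2) Z_{1,i}Z_{2,j}] = \sum_k \EE[Z_{1,k}Z_{1,i}]\,\EE[Z_{2,k}Z_{2,j}] = \sum_k \Sigma_{ki}\Sigma_{kj} = (\bSigma^2)_{ij},
$$
using symmetry of $\bSigma$. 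Collecting the surviving terms with their signs ($+$ from the diagonal blocks, $+$ again from $-2\cdot(-\bSigma^2)$), I obtain
$$
\EE\bigl[\|\bZ_1-\bZ_2\|_2^2(\bZ_1-\bZ_2)(\bZ_1-\bZ_2)^\T\bigr] = 2\,\EE\bigl[(\bZ\bZ^\T)^2\bigr] + 2\tr(\bSigma)\bSigma + 4\bSigma^2.
$$
Multiplying by $1/4$ gives $\EE h^2(\bX_1,\bX_2) = \tfrac{1}{2}\bigl\{\EE[(\bX-\bmu)(\bX-\bmu)^\T]^2 + \tr(\bSigma)\bSigma + 2\bSigma^2\bigr\}$, after which taking spectral norms yields the claim.

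The main obstacle is mostly bookkeeping: carefully tracking the twelve cross-terms without sign errors. The only genuinely non-trivial step is recognizing the identity $\|\bZ\|_2^2\,\bZ\bZ^\T = (\bZ\bZ^\T)^2$, which converts a scalar-times-matrix moment into the desired fourth-moment matrix appearing in the statement.
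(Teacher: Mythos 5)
Your proof is correct and follows essentially the same route as the paper: center the data, expand the matrix product, kill all odd terms by independence and zero mean, and identify the four surviving moments as $\EE(\bZ\bZ^\T)^2$, $\tr(\bSigma)\bSigma$, and two copies of $\bSigma^2$. The only cosmetic difference is that the paper squares $(H_1+H_2-H_{12}-H_{21})$ as a sum of four rank-one matrices (sixteen products), whereas you first invoke $(\bv\bv^\T)^2=\|\bv\|_2^2\bv\bv^\T$ and then expand the scalar and matrix factors separately (twelve products); both reduce to the same bookkeeping.
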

\begin{proof}[Proof of Lemma \ref{lemma:v2}]
Write $\bX= \bX_1$ and $\bY=\bX_2$. Without loss of generality, assume that $\EE(\bX)\!=\!\EE(\bY)\!=\!{\bf 0}$.  Let $H_1=\bX\bX^\T,~H_2=\bY\bY^\T,~H_{12}=\bX\bY^\T$ and $H_{21}=\bY\bX^\T$.  Then
\$
\{(\bX-\bY)(\bX-\bY)^\T \}^2
&=(H_1+H_2-H_{12}-H_{21})^2\\
&=H_1^2+H_2^2+H_{12}^2+H_{21}^2 +H_1H_2+H_2H_1+H_{12}H_{21}+H_{21}H_{12}\\
&~~~~-H_{1}H_{12}-H_{12}H_1-H_1H_{21}-H_{21}H_1-H_2H_{12}-H_{12}H_2\\
&~~~~~~~~-H_2H_{21}-H_{21}H_2,
\$
which, by symmetry, implies that
\$
\EE \{(\bX-\bY)(\bX-\bY)^\T \}^2=2\EE H_1^2+2\EE H_{12}^2 +2\EE H_1H_2+2\EE H_{12}H_{21}.
\$
In the following we calculate the four expectations on the right-hand side of the above equality separately.  For the first term, note that
\$
\EE H_1^2=\EE (\bX\bX^\T\bX\bX^\T ).
\$
Let $A = (A_{jk})=H_{12}^2$ and we have
\$
\EE A_{jk}=\EE\bigg(\sum_{\ell =1}^p X_\ell Y_\ell X_j Y_k\bigg)=\EE\bigg(Y_k \sum_{\ell =1}^p X_j X_\ell Y_\ell \bigg) =\sum_{\ell =1}^p \sigma_{j \ell }\sigma_{\ell k},
\$
where $\sigma_{jk}$ is the $(j,k)$-th entry of $\bSigma$. Therefore, we have $\EE H_{12}^2=\bSigma^2$.  For $\EE H_1H_2$, using independence, we can show that $\EE H_1 H_2=\bSigma^2$. For $\EE H_{12}H_{21}$, we have
\$
\EE H_{12}H_{21}=\EE (\bX\bY^\T \bY\bX^\T )=\EE \{\EE (\bX\bY^\T\bY\bX^\T|\bY )\}=\tr(\bSigma)\bSigma.
\$
Putting the above calculations together completes the proof.
\end{proof}

For any $u>0$, putting the above calculations together and letting $\tau \geq  2v^2\sqrt{n}/u$ yield
\$
& \PP   \{ \lambda_{\max}(\widehat\bSigma-\bSigma  )\geq  u/\sqrt{n}\} \\
& \leq e^{-k u/(\sqrt{n}\tau)} \tr\exp\bigg(\sum_{j=1}^k\EE H_{\pi k}^2/\tau^2  \bigg)
\leq p \exp\bigg(-\frac{k u}{\sqrt{n}\tau}+\frac{k v^2}{\tau^2}\bigg)\\
&\leq p \exp\bigg(-\frac{k u^2}{4n v^2}\bigg)\leq p \exp\bigg(-\frac{u^2}{16 v^2}\bigg),
\$
where we use the fact that $k:= [n/2]\geq  n/4$ for $n\geq  2$ in the last inequality. On the other hand, it can be similarly shown that
\$
\PP \{ \lambda_{\min}(\widehat\bSigma-\bSigma )\leq -u /\sqrt{n}\} \leq p \exp\bigg(-\frac{u^2}{16 v^2}\bigg)
\$

Combining the above two inequalities and putting $u = 4v\sqrt{t}$ complete the proof. \qed

\subsection{Proof of Theorem~3}

First we bound $\max_{1\leq j\leq p}\|  \hat{\bb}_j -  \bb_j \|_2$. For any $t>0$, it follows from Theorem~2 that with probability greater than $1- 2p e^{-t}$, $\| \hat{\bSigma}_U - \bSigma \|  \leq 4v (t/n)^{1/2}$, where $v$ is as in (19).  Define $\widetilde \bb_j  = (\overline{\lambda}_1^{1/2} \hat{v}_{1 j}, \ldots, \overline{\lambda}_K^{1/2} \hat{v}_{K j} )^\T \in \RR^K$, such that $ \| \hat{\bb}_j - \bb_j \|_2 \leq   \| \hat{\bb}_j -  \widetilde \bb_j \|_2 +  \| \widetilde {\bb}_j -  \bb_j \|_2$. By Assumption~2, (20) and (21), we have
\begin{align}
 |  \hat{\lambda}_\ell^{1/2} - \overline{\lambda}_\ell^{1/2} | = |\hat{\lambda}_\ell  - \overline{\lambda}_\ell  | / (\hat{\lambda}_\ell^{1/2}+ \overline{\lambda}_\ell^{1/2})  \lesssim p^{-1/2} ( \| \hat{\bSigma}_U - \bSigma \| + \| \bSigma_\varepsilon \|  )  , \nn \\
 \| \overline{\bv}_\ell \|_{\infty}= \| \overline{\bb}_\ell \|_{\infty} / \| \overline{\bb}_\ell \|_2  \leq \| \Bb \|_{\max} / \| \overline{\bb}_\ell \|_2 \lesssim p^{-1/2}  \nn \\
  \mbox{ and }~  \| \hat{\bv}_\ell \|_\infty  \leq \| \hat{\bv}_\ell -  \overline \bv_\ell \|_2 +\| \overline \bv_\ell \|_\infty \lesssim  p^{-1}   \| \hat{\bSigma}_U - \bSigma \|  + p^{-1/2}  . \nn
\end{align}
On the event $ \{ \| \hat{\bSigma}_U - \bSigma \| \leq 4v (t/n)^{1/2} \}$, it follows that
\#
 |  \hat{\lambda}_\ell^{1/2} - \overline{\lambda}_\ell^{1/2} |  \lesssim    v  \sqrt{t} \,(np)^{-1/2} + p^{-1/2} ~\mbox{ and }~ \| \hat{\bv}_\ell \|_\infty \lesssim p^{-1/2}
\#
as long as $n \geq v^2  p^{-1}  t$. Write $\hat{\bv}_\ell = (\hat{v}_{\ell 1} , \ldots \hat{v}_{\ell p })^\T$. It follows that, with probability at least $1- 2p e^{-t}$,
\begin{align}
  \| \hat{\bb}_j -  \widetilde \bb_j \|_2  =   \bigg\{ \sum_{\ell=1}^K   (  \hat{\lambda}_\ell^{1/2} - \overline{\lambda}_\ell^{1/2}  )^2 \, \hat{v}_{\ell j}^2 \bigg\}^{1/2} \lesssim   p^{-1} ( v\sqrt{t} \, n^{-1/2} +1 )     \nn
\end{align}
for all $1\leq j\leq p$. Similarly,
\begin{align}
  \| \widetilde{\bb}_j -  \bb_j \|_2 = \bigg\{  \sum_{\ell =1}^K \overline{\lambda}_\ell (  \hat{v}_{\ell j}  - \overline v_{\ell j}  )^2 \bigg\}^{1/2} \leq  \max_{1\leq \ell \leq K }   \overline{\lambda}_\ell^{1/2} \cdot  \sqrt{K} \, \| \hat{\bv}_\ell - \overline{\bv}_\ell \|_\infty \lesssim  v \sqrt{t} \, (np)^{-1/2} + p^{-1/2}  . \nn
\end{align}
By taking $t=\log(np)$, the previous two displays together imply (22).

Next we consider $\max_{1\leq j\leq p} |\hat  {\sigma}_{ \varepsilon ,jj} - \sigma_{\varepsilon , jj}|$. Note that with probability at least $1- 4pe^{-t}$, $\max_{1\leq j\leq p} |\hat{\theta}_j - \EE (X_j^2) |   \lesssim (t/n)^{1/2}$ as long as $n\gtrsim t$. Therefore, it suffices to focus on $\|  \hat \bb_j \|_2^2 - \|  \bb_j \|_2^2$, which can be written as
$  \sum_{\ell=1}^K  ( \hat{\lambda}_\ell - \overline{\lambda}_\ell )  \hat{v}_{\ell j}^2 + \sum_{\ell=1}^K \overline{\lambda}_\ell  (  \hat{v}_{\ell j}^2 - \overline{v}_{\ell j}^2  )$. Under Assumption~2, it follows from (20) and (21) that on the event $\{ \| \hat{\bSigma}_U - \bSigma \| \leq 4v  (t/n)^{1/2}  \}$,
\begin{align}
 & | \| \hat\bb_j \|_2^2 - \|  \bb_j \|_2^2 | \nn \\
 & \leq \sum_{\ell=1}^K | \hat{\lambda}_\ell - \overline{\lambda}_\ell | \| \hat{\bv}_{\ell } \|_\infty^2 + \sum_{\ell=1}^K \overline{\lambda}_\ell ( \| \hat{\bv}_\ell \|_\infty + \| \overline{\bv}_\ell \|_\infty ) \| \hat{\bv}_\ell - \overline{\bv}_\ell \|_\infty  \nn \\
 & \lesssim  v \sqrt{t} \, (np)^{-1/2} + p^{-1/2}    \nn
\end{align}
as long as $n\geq  v^2  p^{-1} t$, which proves (23) by taking $t=\log(np)$. \qed

\subsection{Proof of Theorem~4}

For $\hat{\mu}_j$'s and $\hat{\theta}_{jk}$'s with $\tau_j = a_j (n/t_1)^{1/2}$ and $\tau_{jk} = a_{jk} (n/t_2)^{1/2}$, it follows from Lemma~\ref{lem3} and the union bound that as long as $n\geq  8\max(t_1, t_2)$,
\begin{align}
	\max_{1\leq j\leq p} | \hat{\mu}_j - \mu_j | \leq 4\max_{1\leq j\leq p} a_j   \sqrt{\frac{t_1}{n}}  ~\mbox{ and }~  \max_{1\leq j\leq k \leq p} | \hat{\theta}_{jk} - \e(X_j X_k) |   \leq  4\max_{1\leq j\leq k \leq p} a_{jk}  \sqrt{\frac{t_2}{n}} \nn
\end{align}
with probability at least $1- 2pe^{-t_1} - (p^2+p)e^{ -t_2 }$. In particular, taking $t_1=\log(n p)$ and $t_2 = \log( n p^2)$ implies that as long as $n \gtrsim \log(np)$, $\| \hat{\bSigma}_{{\rm H}} - \bSigma \|_{\max} \lesssim w_{n,p}^{-1}$ with probability greater than $1- 4n^{-1}$.

The rest of the proof is similar to that of Theorem~3, simply with the following modifications. Under Assumption~2, it follows from \eqref{eigenvalue.perturb} and \eqref{eigenvector.perturb} in Lemma~\ref{infinity.perturbation} that, with probability at least $1- 4n^{-1}$,
\begin{align}
 |  \wt{\lambda}_\ell^{1/2} - \overline{\lambda}_\ell^{1/2} | = |\wt{\lambda}_\ell  - \overline{\lambda}_\ell  | / (\wt{\lambda}_\ell^{1/2}+ \overline{\lambda}_\ell^{1/2}) \lesssim  \sqrt{p} \,( w_{n,p}^{-1} + p^{-1} )  , \nn \\
 \| \overline{\bv}_\ell \|_{\infty}= \| \overline{\bb}_\ell \|_{\infty} / \| \overline{\bb}_\ell \|_2  \leq \| \Bb \|_{\max} / \| \overline{\bb}_\ell \|_2  \lesssim p^{-1/2} , \nn \\
  \| \wt{\bv}_\ell - \overline \bv_\ell \|_\infty \lesssim   p^{-1/2}w_{n,p}^{-1} + p^{-1}  ~\mbox{ and }~  \| \wt{\bv}_\ell \|_\infty \lesssim   p^{-1/2}  . \nn
\end{align}
Plugging the above bounds into the proof of Theorem~3 proves the conclusions. \qed

\subsection{Proof of Theorem~5}
\label{proof.thm2}

The key of the proof is to show that ${T}_j(\Bb)$ provides a good approximation of $T_j^{\circ}$ uniformly over $1\leq  j\leq p$. To begin with, note that the estimator $\hat{\theta}_j$ with $\tau_{jj} = a_{jj} (n/t)^{1/2}$ for $a_{jj} \geq  \var(X_j^2)^{1/2}$ satisfies $\PP \{ | \hat{\theta}_j - \theta_j  | \geq  4a_{jj} (t/n)^{1/2} \} \leq 2e^{-t}$, where $\theta_j = \EE(X_j^2)$. Together with the union bound, this yields that with probability greater than $1- 2p  e^{-t}$,
\#
  \max_{1\leq j\leq p} | \hat{\theta}_j - \theta_j  | \leq 4 \max_{1\leq j \leq p} a_{jj}^{1/2} \sqrt{\frac{t}{n}}  \label{var.uniform.conv}
\#
as long as $n \geq  8 t$. Next, observe that
\#
	\bigg|  {T}_j(\Bb)  -  \sqrt{\frac{n}{\hat{\sigma}_{\varepsilon, jj}}} ( \hat{\mu}_j - \bb_j^\T \bar{\bbf} \,)  \bigg|  =  \sqrt{\frac{n}{\hat{\sigma}_{\varepsilon, jj}}} | \bb_j^\T  \{ \bar{\bbf} - \hat{\bbf}(\Bb) \} |   \leq  \sqrt{\frac{n}{\hat{\sigma}_{\varepsilon, jj}}} \| \bb_j \|_2 \| \hat{\bbf}(\Bb) - \bar{\bbf} \|_2  \label{Tj.approx.1}
\#
and
\#
\bigg|  \sqrt{\frac{n}{\hat{\sigma}_{\varepsilon, jj}}} ( \hat{\mu}_j - \bb_j^\T \bar{\bbf}  \,)  - T^\circ _j  \bigg|
\leq \bigg| \frac{ 1}{ \sqrt{ \hat{\sigma}_{\varepsilon, jj} }  } - \frac{1}{\sqrt{ \sigma_{\varepsilon, jj} } } \bigg| (    |\sqrt{n} \,\hat{\mu}_j|  +  \| \bb_j \|_2 \| \sqrt{n} \bar \bbf \|_2 ) .  \label{Tj.approx.2}
\#
Applying Proposition~3 with $t=\log n$ shows that, with probability at least $1-C_1 n^{-1}$,
\#
 	\| \hat{\bbf} (\Bb) - \bar{\bbf} \|_2 \lesssim   ( K \log n )^{1/2} p^{-1/2} . \label{Tj.approx.3}
\#
Moreover, it follows from Lemma~\ref{lem3}, \eqref{factor.concentration.ineq} and \eqref{var.uniform.conv} that, with probability greater than $1- 4pe^{-t_1} - e^{-t_2}$,
\#
	\max_{1\leq j\leq p}  | \hat{\mu}_j - \mu_j | \lesssim \sqrt{\frac{t_1}{n}},  \ \  \max_{1\leq j\leq p} \bigg| \frac{\hat{\sigma}_{\varepsilon, jj}}{\sigma_{\varepsilon,jj}} -1 \bigg| \lesssim \sqrt{\frac{t_1}{n}} ~\mbox{ and }~ \|\bar{\bbf}  \|_2 \lesssim \sqrt{\frac{ K + t_2 }{n}}. \nn
\#
Taking $t_1 = \log(np)$ and $t_2 =\log n$, we deduce from \eqref{Tj.approx.1}--\eqref{Tj.approx.3} that, with probability at least $1-C_2  n^{-1} $,
\#
	\max_{j \in \mathcal{H}_0 } |  T_j(\Bb) - T_j^\circ  |  \lesssim   \{ K  +  \log(np) \} n^{-1/2} +   ( K n\log n)^{1/2} p^{-1/2} . \label{Tj.unif.approxi}
\#
Based on \eqref{Tj.unif.approxi}, the rest of the proof is almost identical to that of Theorem~1 and therefore is omitted. \qed

\subsection{Proof of Theorem~\ref{thm.all}}

For convenience, we write $\hat{\bb}_j = \hat{\bb}_j(\mathcal{X}_1)$ for $j=1,\ldots,p$, which are the estimated loading vectors using the first half of the data. Let $\hat{\bbf}(\mathcal{X}_2)$ be the estimator of $\bar{\bbf}$ obtained by solving (26) using only the second half of the data and with $\bb_j$'s replaced by $\hat{\bb}_j$'s.

We keep the notation used in Section~3.2.2, but with all the estimators constructed from $\mathcal{X}_1$ instead of the whole data set. Recall that $\hat{\Bb} = (\hat{\bb}_1, \ldots, \hat{\bb}_p)^\T = (\wt \lambda_1^{1/2} \hat \bv_1, \ldots, \wt \lambda_K^{1/2} \hat \bv_K )$. Following the proof of Theorem~4, we see that as long as $n\gtrsim \log(np)$, the event $\mathcal{E}_{\max} :=  \{ \| \hat{\bSigma}_{{\rm H}} - \bSigma\|_{\max} \lesssim w_{n,p}^{-1}  \}$ occurs with probability at least $1-4n^{-1}$. On $\mathcal{E}_{\max}$, we have
$$
	\max_{1\leq \ell \leq K} | \wt \lambda_\ell^{1/2} - \overline{\lambda}^{1/2}_\ell | \lesssim \sqrt{p} \,( w_{n,p}^{-1}+ p^{-1} )   ~\mbox{ and }~ \max_{1\leq \ell \leq K } \| \hat \bv_\ell \|_\infty \lesssim p^{-1/2} ,
$$
which, combined with the pervasiveness assumption $\overline{\lambda}_\ell \asymp p$, implies $ \max_{1\leq \ell \leq K }  \wt \lambda_\ell^{1/2}  \lesssim \sqrt{p}$. Moreover, write $\bdelta_j= \hat \bb_j -   \bb_j$ for $1\leq j\leq p$ and note that
\#
  \hat \Bb^\T \hat \Bb - \Bb^\T \Bb = \sum_{j=1}^p (  \hat \bb_j \hat \bb_j^\T  - \bb_j  \bb_j^\T ) = \sum_{j=1}^p \bdelta_j \bdelta_j^\T + 2 \sum_{j=1}^p \bdelta_j   \bb_j^\T .\nn
\#
It follows that $ \| p^{-1} ( \hat \Bb^\T \hat \Bb -   \Bb^\T \Bb ) \| \leq \max_{1\leq j\leq p} ( \| \bdelta_j \|_2^2 + 2 \| \bb_j \|_2 \| \bdelta_j \|_2  )$. Again, from the proof of Theorem~4 we see that on the event $\mathcal{E}_{\max}$, $ \| p^{-1} ( \hat \Bb^\T \hat \Bb -   \Bb^\T \Bb )\| \lesssim w_{n,p}^{-1} + p^{-1/2}$. Under Assumption~3, putting the above calculations together yields that with probability greater than $1-4n^{-1}$,
$$
	\lambda_{\min}( p^{-1} \hat \Bb^\T \hat \Bb ) \geq  \frac{c_l}{2} ~\mbox{ and }~ \| \hat \Bb \|_{\max} \leq C_1
$$
as long as $n \gtrsim \log(np)$. By the independence between $\hat{\bb}_j$'s and $\mathcal{X}_2$, the conclusion of Proposition~3 holds for $\hat{\bbf}(\mathcal{X}_2)$.

Next, recall that
\begin{align}
	 {T}_j =  \sqrt{  \frac{n}{\hat \sigma_{\varepsilon , jj}} }   \{  \hat{\mu}_j - \hat{\bb}_j^\T \hat{\bbf}(\mathcal{X}_2)  \},   \nn
\end{align}
where $\hat{\mu}_j$'s and $\hat{\sigma}_{\varepsilon, jj}$'s are all constructed from $\mathcal{X}_2$. Note that
\$
 & |  \sqrt{n} \{ \hat{\mu}_j - \hat{\bb}_j^\T \hat{\bbf}(\mathcal{X}_2) \} - \sqrt{n} \{ \hat{\mu}_j - {\bb}_j^\T \bar{\bbf} \} |  \leq \sqrt{n}  \| \hat{\bb}_j \|_2 \|   \hat{\bbf}(\mathcal{X}_2)  - \bar{\bbf} \|_2  +  \sqrt{n} \| \bar{\bbf} \|_2 \|  \hat{\bb}_j - \bb_j \|_2.
\$
This, together with (28), Theorem~4 and \eqref{factor.concentration.ineq}, implies that with probability at least $1- C_2 n^{-1}$,
\$
& \max_{1\leq j\leq p}|  \sqrt{n} \{ \hat{\mu}_j -  \hat{\bb}_j^\T \hat{\bbf}(\mathcal{X}_2) \} - \sqrt{n} \{ \hat{\mu}_j - {\bb}_j^\T \bar{\bbf} \} | \\
&  \quad \quad \quad  \lesssim     ( K n \log n)^{1/2} p^{-1/2} +  ( K + \log n)^{1/2} ( w_{n,p}^{-1} + p^{-1/2} ).
\$
Following the proof of Theorem~5, it can be shown that with probability at least $1- C_3 n^{-1}$,
\#
	\max_{ j \in \mathcal{H}_0} | {T}_j - T^\circ_j | \lesssim    ( K  n \log n)^{1/2} p^{-1/2} +  \{ K + \log(np)\} n^{-1/2} . \nn
\#
The rest of the proof is almost identical to that of Theorem~1 and therefore is omitted. \qed

\section{Additional proofs}
\label{app:B}

In this section, we prove Propositions~2 and 3 in the main text, and Lemmas~\ref{lem2.1}--\ref{infinity.perturbation} in Section~\ref{sec.proof}.

\subsection{Proof of Proposition~2}

By Weyl's inequality and the decomposition that $\hat{\bSigma} = \Bb \Bb^\T + (\hat{\bSigma} - \bSigma) + \bSigma_\varepsilon$, we have
$$
	\max_{1\leq \ell\leq K} | \hat{\lambda}_\ell -  \overline{\lambda}_\ell  | \leq \| \hat{\bSigma} - \bSigma \|_2 + \| \bSigma_\varepsilon \|_2 ~\mbox{ and }~ \max_{ K +1\leq \ell \leq p} | \hat{\lambda}_\ell    | \leq \| \hat{\bSigma} - \bSigma \|_2   + \| \bSigma_\varepsilon \|_2 ,
$$
where $\hat \lambda_1, \ldots, \hat \lambda_p$ are the eigenvalues of $\hat{\bSigma}$ in a non-increasing order. Thus, (20) follows immediately. Next, applying Corollary~1 in \cite{YWS15} to the pair $(\hat{\bSigma}, \Bb \Bb^\T)$ gives that, for every $1\leq \ell \leq K$,
\begin{align}
	  \| \hat{\bv}_\ell - \overline{\bv}_\ell \|_2 \leq  \frac{2^{3/2} \|  (\hat{\bSigma} - \bSigma) + \bSigma_\varepsilon \|_2 }{\min( \overline \lambda_{\ell-1}  -  \overline \lambda_\ell   ,   \overline \lambda_{\ell }   -   \overline \lambda_{\ell+1}  )  } , \nn
\end{align}
where we put $\overline \lambda_0 = \infty$ and $\overline \lambda_{K + 1}=0$. Under Assumption~2, this proves (21). \qed

\subsection{Proof of Proposition~3}

To begin with, we introduce the following notation. Define the loss function $L_\gamma(\bw) = p^{-1} \sum_{j=1}^p \ell_\gamma( \bar{X}_j -  {\bb}^\T_j \bw)$ for $\bw \in \RR^K$, $\bw^* = \bar{\bbf}$ and $\hat{\bw} = \argmin_{\bw \in \RR^K} L_\gamma(\bw)$. Without loss of generality, we assume $\| \Bb \|_{\max} \leq 1$ for simplicity.

Define an intermediate estimator $\hat{\bw}_\eta =\bw^*+ \eta ( \hat{\bw} - \bw^*)$ such that $\|\hat \bw_\eta - \bw^*\|_2\leq r$ for some $r>0$ to be specified below \eqref{L2.score.bound}. We take $\eta =1$ if $\|\hat \bw - \bw^*\|_2\leq r$; otherwise, we choose $\eta \in(0,1)$ so that  $\|\hat \bw_\eta  -\bw^*\|_2=r$. Then, it follows from Lemma~A.1 in \cite{sun2016adaptive} that
\begin{align}
 \langle \nabla L_\gamma(\hat \bw_\eta)-\nabla L_\gamma( \bw^*), \hat \bw_\eta - \bw^* \rangle \leq \eta \langle \nabla L_\gamma(\hat \bw)-\nabla L_\gamma( \bw^*), \hat \bw - \bw^* \rangle ,
\end{align}
where $\nabla L_\gamma(\hat \bw) = \mathbf{0}$ according to the Karush-Kuhn-Tucker condition. By the mean value theorem for vector-valued functions, we have
$$
	\nabla L_\gamma(\hat\bw_\eta)-\nabla L_\gamma(\bw^*) = \int_0^1 \nabla^2 L_\gamma( (1-t) \bw^* + t \hat\bw_\eta) \, dt   \, (\hat\bw_\eta - \bw^*).
$$
If, there exists some constant $a_{\min}>0$ such that
\begin{align}
 \min_{\bw \in \RR^K : \| \bw - \bw^* \|_2 \leq r }\lambda_{\min} ( \nabla^2 L_\gamma( \bw ) ) \geq  a_{\min} , \label{Hessian.lbd}
\end{align}
then it follows $a_{\min}\|\hat\bw_\eta -\bw^*\|_2^2\leq -  \eta \langle \nabla L_\gamma (\bw^*) , \hat\bw - \bw^* \rangle\leq \| \nabla L_\gamma(\bw^*)\|_2\|\hat\bw_\eta -\bw^*\|_2$, or equivalently,
\begin{align}
	a_{\min}\|\hat\bw_\eta -\bw^*\|_2 \leq  \| \nabla L_\gamma(\bw^*)\|_2, \label{est.error.ubd1}
\end{align}
where $\nabla L_\gamma(\bw^*) = - p^{-1} \sum_{j=1}^p \psi_\gamma(   \mu_j + \bar{\varepsilon}_j )  {\bb}_j$.

First we verify \eqref{Hessian.lbd}. Write $\mathbf{S} = p^{-1} \Bb^\T \Bb$ and note that
$$
	\nabla^2 L_\gamma(\bw) = \frac{1}{p} \sum_{j=1}^p \bb_j \bb_j^\T  I(|  \bar{X}_j - \bb^\T_j \bw |\leq \gamma ) ,
$$
where $\bar{X}_j - \bb^\T_j \bw = \bb^\T_j(\bw^* - \bw) + \mu_j + \bar{ \varepsilon }_j$. Then,  for any $\bu \in \mathbb{S}^{K -1}$ and $\bw \in \RR^K$ satisfying $\| \bw - \bw^* \|_2\leq r$,
\begin{align}
	& \bu^\T  \nabla^2 L_\gamma(\bw) \bu   \nn \\
	& \geq    \bu^\T  \mathbf{S} \bu - \frac{1}{p} \sum_{j=1}^p  (\bb_j^\T \bu)^2 I( |\bar{\varepsilon}_j + \mu_j |>\gamma/2  ) - \frac{1}{p} \sum_{j=1}^p    (\bb_j^\T \bu)^2 I\{ | \bb_j^\T (\bw^* - \bw ) |  > \gamma/2 \} \nn \\
	& \geq   \bu^\T  \mathbf{S} \bu  - \max_{1\leq j\leq p} \| \bb_j \|_2^2 \, \bigg\{ \frac{1}{p} \sum_{j=1}^p   I( |\bar{\varepsilon}_j + \mu_j |>\gamma/2  ) +  \frac{4}{\gamma^2} \| \bw - \bw^* \|_2^2 \,  \bu^{{\rm T}}  \mathbf{S} \bu \bigg\} . \nn
\end{align}
By Assumption~3, $ \lambda_{\min}( \mathbf{S} )  \geq  c_l$ for some constant $c_l >0$ and $ \max_{1\leq j\leq p} \| \bb_j \|_2^2 \leq  K$. Therefore, as long as $\gamma > 2 r \sqrt{K}$ we have
\begin{align}
  \min_{\bw \in \RR^K: \| \bw - \bw^* \|_2\leq r } \lambda_{\min}(  \nabla^2 L_\gamma(\bw) )  \geq  (  1 - 4 \gamma^{-2} r^2 K ) c_l - \frac{K}{p}\sum_{j=1}^p   I( |\bar{\varepsilon}_j + \mu_j |>\gamma/2  ), \label{min.eigenvalue-1}
\end{align}
To bound the last term on the right-hand side of \eqref{min.eigenvalue-1}, it follows from Hoeffding's inequality that for any $t >0$,
\#
	\frac{1}{p} \sum_{j=1}^p  I ( |\bar{\varepsilon}_j + \mu_j |>\gamma/2 ) \leq \frac{1}{p} \sum_{j=1}^p \PP ( |\bar{\varepsilon}_j + \mu_j |>\gamma/2 ) + \sqrt{\frac{t}{2p}} \nn
\#
with probability at least $1- e^{-t}$.
This, together with \eqref{min.eigenvalue-1} and the inequality
\begin{align}
	\frac{1}{p} \sum_{j=1}^p \PP ( |\bar{\varepsilon}_j + \mu_j |>\gamma/2 ) \leq \frac{4}{\gamma^2 p } \sum_{j=1}^p  ( \mu_j^2 + \e  \bar{\varepsilon}_j^2 )  =  4 \gamma^{-2}  ( p^{-1} \| \bmu \|_2^2 + n^{-1} \overline{\sigma}_\varepsilon^2 )   \nn
\end{align}
implies that, with probability greater than $1- e^{-t}$,
\#
	  \min_{\bw \in \RR^K: \| \bw - \bw^* \|_2\leq r } \lambda_{\min}(  \nabla^2 L_\gamma(\bw) )  \geq   \frac{3}{4} c_l -  K \sqrt{\frac{t}{2p }} - \frac{4 K }{\gamma^2 }  \bigg( \frac{ \| \bmu \|_2^2}{p} + \frac{ \overline{\sigma}_\varepsilon^2 }{n} \bigg)    \label{min.eigenvalue-2}
\#
as long as $\gamma \geq  4 r \sqrt{K}$.

Next we bound $\| \nabla L_\gamma(\bw^*)\|_2$. For every $1\leq \ell \leq K$, we write $\Psi_\ell = p^{-1}  \sum_{j=1}^p \psi_{j \ell } :=  p^{-1} \sum_{j=1}^p \gamma^{-1} \psi_\gamma(  \mu_j + \bar{ \varepsilon }_j )  b_{j \ell} $, such that $\| \nabla L_\gamma(\bw^*)\|_2 \leq \sqrt{K}  \, \| \nabla L_\gamma(\bw^*)\|_\infty = \gamma \sqrt{K} \, \max_{1\leq \ell \leq d} |\Psi_\ell |$. Recall that, for any $u\in \RR$, $- \log( 1 - u +  u^2 ) \leq   \gamma^{-1} \psi_\gamma(\gamma u ) \leq \log( 1 + u + u^2 )$. After some simple algebra, we obtain that
\begin{align}
	 e^{\psi_{   j \ell }}  & \leq \{ 1+ \gamma^{-1}( \mu_j + \bar{\varepsilon}_j ) + \gamma^{-2}(\mu_j + \bar{\varepsilon}_j)^2 \}^{b_{j\ell} I(b_{j\ell} \geq  0)}  \nn \\
	 & \quad~ + \{ 1- \gamma^{-1} ( \mu_j + \bar{\varepsilon}_j ) + \gamma^{-2} (\mu_j + \bar{\varepsilon}_j)^2 \}^{-b_{j\ell} I(b_{j\ell} < 0)} \nn \\
&	\leq 	1 + \gamma^{-1} ( \mu_j + \bar{\varepsilon}_j )b_{j\ell} + \gamma^{-2 }   ( \mu_j + \bar{\varepsilon}_j )^2  .
	\nn
\end{align}
Taking expectation on both sides gives
\begin{align}
 \e ( e^{\psi_{j \ell}  } )     \leq 1 +  \gamma^{-1}  |\mu_j |   +  \gamma^{-2} ( \mu_j^2 + n^{-1}\sigma_{\varepsilon, jj} )  .  \nn
\end{align}
Moreover, by independence and the inequality $1+t \leq e^t$, we get
\#
 \e ( e^{ p \Psi_\ell } )  & = \prod_{j=1}^p \e ( e^{\psi_{j \ell}} )  \leq \exp \bigg\{  \frac{1}{\gamma}\sum_{j=1}^p  |\mu_j |   + \frac{1}{\gamma^2 }\sum_{j=1}^p   \bigg( \mu_j^2 +  \frac{ \sigma_{\varepsilon, jj} }{n} \bigg)  \bigg\}  \nn \\
 & \leq   \exp\bigg(   \frac{\| \bmu \|_1 }{\gamma} + \frac{\| \bmu \|_2^2 }{\gamma^2} + \frac{ \overline{\sigma}_\varepsilon^2 \, p }{ \gamma^2 n  } \bigg) . \nn
\#
For any $t >0$, it follows from Markov's inequality that
\$
	\PP ( p \Psi_{j  } \geq     2t ) \leq e^{-   2t }  \e ( e^{   p \Psi_\ell } )  \leq  \exp \bigg\{   \frac{\| \bmu \|_1 }{ \gamma} + \frac{\| \bmu \|_2^2 }{ \gamma^2} +   \frac{   \overline{\sigma}_\varepsilon^2 \, p  }{ \gamma^2 n }  - 2t   \bigg\} \leq \exp(1- t)
\$
provided
\#
	\gamma \geq  \max\Bigg\{ \| \bmu \|_1  ,    \overline{\sigma}_\varepsilon \sqrt{ \frac{    \| \bmu \|_2^2 / \overline{\sigma}_\varepsilon^2 +  p/n }{t} }   \Bigg\} .   \label{gamma.lower.bound.1}
\#	
Under the constraint \eqref{gamma.lower.bound.1}, it can be similarly shown that $\PP ( - p \Psi_{j  } \geq    2t ) \leq e^{1 -t}$. Putting the above calculations together, we conclude that
\#
 & \PP\bigg\{  \| \nabla L_\gamma(\bw^*)\|_2 \geq   \sqrt{K} \,  \frac{2\gamma t  }{p} \bigg\} \nn \\
 &  \leq \PP\bigg\{ \| \nabla L_\gamma(\bw^*)\|_\infty \geq     \frac{  2\gamma  t}{p}  \bigg\}  \leq \sum_{\ell = 1}^K \PP (  | p \Psi_\ell | \geq   2 t ) \leq 2e K \exp( - t) . \label{L2.score.bound}
\#

With the above preparations, now we are ready to prove the final conclusion. It follows from \eqref{min.eigenvalue-2} that with probability greater than $1-  e^{-t}$, \eqref{Hessian.lbd} holds with $a_{\min} = c_l /4$, provided that $\gamma \geq  4  \sqrt{K} \max\{ r,  c_l^{-1/2}  (\| \bmu \|_2^2/p + \overline{\sigma}^2_{\varepsilon} /n )^{1/2}\}$
and $p\geq    8 c_l^{-2} K^2 t $. Hence, combining \eqref{est.error.ubd1} and \eqref{L2.score.bound} with $r = \frac{\gamma}{4\sqrt{K}}$ yields that, with probability at least $1- (1+2e K ) e^{-t}$, $\|  \hat{\bw }_\eta - \bw^* \|_2 \leq 8 c_l^{-1} \sqrt{K} \, p^{-1} \gamma t < r$ as long as
$p > 32 c_l^{-1} K t $.
By the definition of $\hat{\bw}_\eta$, we must have $\eta = 1$ and thus $\hat{\bw} = \hat{\bw}_\eta$. \qed

\subsection{Proof of Lemma~\ref{lem2.1}} \label{secC3}

Let $1\leq j\leq p$ be fixed and define the function $h(\theta)=\e \{ \ell_\tau(X_j -\theta)\}$, $\theta\in \RR$. By the optimality of $\mu_{j,\tau}$ and the mean value theorem, we have $h'(\mu_{j,\tau}) = 0$ and
\begin{align}
	  h''(\widetilde{\mu}_{j,\tau}) (\mu_j - \mu_{j,\tau}) =  h' (\mu_j) -  h'(\mu_{j,\tau})  = h'(\mu_j) = - \e \{ \psi_\tau( \xi_j ) \} ,   \label{lem2_p1}
\end{align}
where $\widetilde{\mu}_{j,\tau} =\lambda \mu_j + (1-\lambda)\mu_{j,\tau}$ for some $0\leq \lambda \leq 1$. Since $\e(\xi_j)=0$, we have $- \e \{  \psi_\tau( \xi_j) \} = \e\{ \xi_j I(| \xi_j | >\tau)  - \tau I( \xi_j > \tau) + \tau I( \xi_j < -\tau) \}$, which implies
\begin{align}
	|\e \{ \psi_\tau( \xi_j) \} | \leq  \tau^{1-\kappa}  \upsilon_{\kappa, j} .  \label{lem2_p2}
\end{align}

Next we consider $h''(\widetilde{\mu}_{j,\tau})  = \PP (|X_j -\widetilde \mu_{j,\tau} |\leq  \tau )$. Since $h$ is a convex function that is minimized at $\mu_{j,\tau}$, $h(\widetilde{\mu}_{j,\tau} )\leq \lambda h(\mu_j) + (1-\lambda) h(\mu_{j,\tau} )\leq h(\mu_j ) \leq   \sigma_{jj} /2$. On the other hand, note that $h(\theta) \geq  \e \{  ( \tau |X_j - \theta| -  \tau^2/2 )1(|X_j -\theta|>\tau )\}$ for all $\theta \in \RR$. Combining these upper and lower bounds on $h(\widetilde \mu_{j,\tau} )$ with Markov's inequality gives
\begin{align}
	& \tau \e \{ | X_j - \widetilde \mu_{j,\tau} |  I(| X_j -\widetilde \mu_{j,\tau} |>\tau ) \} \nn \\
	&  \leq \frac{1}{2} \tau^2 \PP (| X_j - \widetilde \mu_{j,\tau} |>\tau ) + \frac{1}{2} \sigma_{jj}  \leq  \frac{1}{2} \tau  \,\e \{ |X_j -\widetilde \mu_{j,\tau} | I(| X_j - \widetilde \mu_{j,\tau} |>\tau) \}  + \frac{1}{2} \sigma_{jj} , \nn
\end{align}
which further implies that for every $0\leq \lambda \leq 1$,
\begin{align}
	\PP ( |X_j - \widetilde \mu_{j,\tau} | >\tau ) \leq \tau^{-1} \e \{ |X_j - \widetilde \mu_{j,\tau} | 1 ( |X_j - \widetilde \mu_{j,\tau} |>\tau ) \} \leq   \sigma_{jj} \tau^{-2}. \nn
\end{align}
Together with \eqref{lem2_p1} and \eqref{lem2_p2}, this proves \eqref{lem2.1}.  \qed

\subsection{Proof of Lemma~\ref{lemBR}} \label{secC4}

Throughout the proof, we let $1\leq j \leq p$, $a \geq   \sigma_{jj}^{1/2}$, $t\geq  1$ be fixed and write $\tau = a(n/t)^{1/2}$ with $n\geq  8t$. The dependence of $\tau$ on $(a , n,t)$ will be assumed without displaying. First we introduce the following notations. Define functions $L(\theta) = - \sn \ell_\tau(X_{ij} - \theta)$, $\zeta(\theta) = L(\theta) - \e L(\theta)$ and $w^2(\theta) = - \frac{d^2}{d \theta^2} \e L(\theta)$, such that $\hat{\mu}_j = \argmax_{\theta\in \RR} L(\theta)$. Moreover, we write
\# \label{w0.def}
	w_0^2 :=   w^2(\mu_j)  =  \alpha_\tau n \ \ \mbox{ with } \ \  \alpha_\tau=\PP (| X_j - \mu_j | \leq \tau ).
\#

For every $r>0$, define the parameter set
\#
	\Theta_0(r) = \{  \theta \in \RR : | w_0 ( \theta - \mu_j ) | \leq r \} .
\#
Then, it follows from Lemma~\ref{lem3} that
\#  \label{concentration.MLE}
	\PP \{ \hat{\mu}_j \in  \Theta_0(r_0) \}  \geq  1-2\exp(-t) ,
\#
where $r_0 = 4 a (\alpha_\tau t)^{1/2}$. Based on this result, we only need to focus on the local neighborhood $\Theta_0(r_0)$ of $\mu_j$. The rest of the proof is based on Proposition~3.1 in \cite{S2013}. To this end, we need to check Conditions~($\mathcal{L}_0$) and ($ED_2$) there.

\noindent
{\sc Condition ($\mathcal{L}_0$):} Note that, for every $\theta\in \Theta_0(r)$,
\begin{align}
	  |  w_0^{-1} w^2(\theta) w_0^{-1}  -1  |   & = |   \alpha_\tau^{-1} -1   -   \alpha_\tau^{-1}  \PP ( |X_j -\theta | > \tau )    | \nn \\
	& \leq   \alpha_\tau^{-1}  \max [  1 - \alpha_\tau  ,    \{ \sigma_{jj} + (\theta - \mu_j)^2 \} \tau^{-2} ] . \nn
\end{align}
By Chebyshev's inequality, we have $1\geq  \alpha_\tau \geq  1-\sigma_{jj} \tau^{-2}  \geq  7/8$. Therefore,
$$
	| w_0^{-1} w^2(\theta) w_0^{-1}  -1 | \leq \alpha_\tau^{-1}  \{  \sigma_{jj} +   ( \alpha_\tau n)^{-1}  r^2 \}  \tau^{-2}.
$$
This verifies Condition~$(\mathcal{L}_0)$ by taking
\#
	\delta(r) =  \alpha_\tau^{-1} \sigma_{jj} \tau^{-2} +  \alpha_\tau^{-2}     \tau^{-2} n^{-1}   r^2  , \ \ r > 0. \nn
\#

\noindent
{\sc Condition ($ED_2$):} Note that $\zeta''(\theta) =- \sn \{ 1(|X_{ij} - \theta |\leq \tau) - \PP(|X_{ij} - \theta |\leq \tau)\}$. For every $\lambda \in \RR $ satisfying $|\lambda |\leq \alpha_\tau \sqrt{n}$, using the inequalities $1+ u \leq e^u$ and $e^u \leq 1 + u + u^2 e^{u \vee 0}/2$ we deduce that
\begin{align}
	  \e \exp \{  \lambda \sqrt{n}   \zeta''(\theta)  / w_0^2    \}
	& = \prod_{i=1}^n \e   \exp [ - \lambda w_0^{-2}\sqrt{n} \{  I(|X_{ij} - \theta |\leq \tau ) - \PP (|X_{ij} - \theta |\leq \tau ) \} ] \nn \\
	& \leq \prod_{i=1}^n  \{ 1 +  (1/2)  \lambda^2 w_0^{-4}  n    \exp ( |\lambda| w_0^{-2} \sqrt{n} )  \} \nn \\
	& \leq  \prod_{i=1}^n \{ 1 +  (e/2) \alpha_\tau^{-2} \lambda^2     n^{-1} \}  \leq \exp \{ (e/2) \alpha_\tau^{-2} \lambda^2 \} . \nn
\end{align}
This verifies Condition~($ED_2$) by taking $\omega = n^{-1/2}$, $\nu_0 = e^{1/2} \alpha_\tau^{-1}$ and $\mathbf{g}(r)  = \alpha_\tau \sqrt{n}$, $r>0$. Now, using Proposition~3.1 in \cite{S2013} we obtain that as long as $ \alpha_\tau^2 n \geq  4+2t$,
\begin{align}
	 & \sup_{\theta \in \Theta_0(r)} |   \alpha_\tau \sqrt{n} (\theta - \mu_j) +  n^{-1/2}\{  L'(\theta) - L'(\mu_j) \} |   \nn \\
	& \qquad \qquad \qquad  \leq  \alpha_\tau^{1/2} \delta(r) r  + 6 \alpha_\tau^{-1/2} e^{1/2} ( 2t + 4  )^{1/2}  n^{-1/2} r \nn
\end{align}
with probability greater than $1-e^{-t}$. Under the conditions that $n\geq  8t$ and $t\geq  1$, it is easy to see that $ \alpha_\tau^2 n  \geq  (7/8)^2 \cdot  8 t  \geq  6 t \geq  4 + 2t$. Moreover, observe that
$$
	  \sup_{\btheta \in \Theta_0(r)} | (\alpha_\tau - 1 )\sqrt{n} (\theta - \mu_j) |  \leq   \alpha_\tau^{-1/2}   \sigma_{jj} \tau^{-2} r  .
$$
The last two displays, together with \eqref{concentration.MLE} and the fact that $L'(\hat{\mu}_j)=0$ prove \eqref{Bahadur.representation} by taking $r=r_0$. The proof of Lemma~\ref{lemBR} is then complete.  \qed

\subsection{Proof of Lemma~\ref{app.meanvar}} \label{secC5}

Under model (1), we have $\xi_j = \bb_j^\T \bbf +  \varepsilon_j$, where $\e ( \varepsilon_j ) = 0$ and $\varepsilon_j$ and $\bbf$ are independent. Therefore,
\begin{align}
	&	\e_{\bbf} \psi_\tau(\xi_j)   - \bb_j^\T \bbf \nn \\
	& = - \e_{\bbf}  (\varepsilon_j + \bb_j^\T \bbf - \tau ) I( \varepsilon_j > \tau - \bb_j^\T \bbf ) + \e_{\bbf} (- \varepsilon_j - \bb_j^\T \bbf - \tau) I( \varepsilon_j < -\tau - \bb_j^\T \bbf  ). \nn
\end{align}
Therefore, as long as $\tau > |\bb_j^\T \bbf|$, we have for any $q \in [2 , \kappa]$ that
\begin{align}
	  | \e_{\bbf} \psi_\tau(\xi_j)   - \bb_j^\T \bbf | \leq \e_{\bbf} \{ | \varepsilon_j | I(| \varepsilon_j | > \tau - |\bb_j^\T \bbf|  )  \} \leq   (\tau - |\bb_j^\T \bbf|  )^{1-q} \,\e ( |\varepsilon_j|^q )   \nn
\end{align}
almost surely. This proves \eqref{approxi.mean} by taking $q$ to be 2 and $\kappa$.

For the conditional variance, observe that
\# \label{var.dec}
	\e_{\bbf}  \{ \psi_{\tau}(\xi_j) - \bb_j^\T \bbf \}^2 = \var_{\bbf}  \{ \psi_{\tau}(\xi_j) \} + \{ \e_{\bbf} \psi_{\tau}(\xi_j) - \bb_j^\T \bbf  \}^2
\#
and that $\psi_{\tau}(\xi_j) - \bb_j^\T \bbf$ can be written as
\begin{align}
   \varepsilon_j    I( | \bb_j^\T \bbf + \varepsilon_j | \leq \tau  ) + (\tau - \bb_j^\T \bbf) I(\bb_j^\T \bbf +  \varepsilon_j > \tau  ) - ( \tau + \bb^\T_j \bbf) I(\bb_j^\T \bbf +  \varepsilon_j<-\tau ), \nn
\end{align}
which further implies
\begin{align}
	&  \{  \psi_{\tau}(\xi_j) - \bb_j^\T \bbf  \}^2 \nn \\
	& =  \varepsilon_j^2  I (| \bb_j^\T \bbf +  \varepsilon_j | \leq \tau ) + (\tau - \bb_j^\T \bbf)^2 I(\bb_j^\T \bbf +  \varepsilon_j > \tau  ) + ( \tau + \bb^\T_j \bbf)^2 I(\bb_j^\T \bbf +  \varepsilon_j<-\tau ) . \nn
\end{align}
Taking conditional expectation on both sides yields
\begin{align}
	 & \e_{\bbf} \{ \psi_{\tau}(\xi_j) - \bb_j^\T \bbf \}^2  \nn \\
	 & = \e(  \varepsilon_j^2 ) - \e_{\bbf}  \{ \varepsilon_j^2 I(|\bb_j^\T \bbf +  \varepsilon_j| >\tau  ) \}  \nn \\
	 & \quad   + (\tau - \bb_j^\T \bbf)^2 \PP_{\bbf} (   \varepsilon_j > \tau - \bb_j^\T \bbf  ) + (\tau + \bb_j^\T \bbf)^2 \PP_{\bbf} (  \varepsilon_j < - \tau - \bb_j^\T \bbf ).  \nn
\end{align}
Using the equality $u^2 = 2 \int_0^u t\, dt$ for $u>0$ we deduce that as long as $\tau > |\bb_j^\T \bbf|$,
\begin{align}
	 &  \e_{\bbf}  \varepsilon_j^2  I( \bb_j^{{\rm T}} \bbf + \varepsilon_j >\tau ) \nn \\
	&  = 2 \e_{\bbf} \int_0^\infty I(  \varepsilon_j > t) I(  \varepsilon_j > \tau - \bb_j^\T \bbf  ) t \, dt \nn \\
	& = 2 \e_{\bbf} \int_0^{\tau - \bb_j^{{\rm T}} \bbf}  I(  \varepsilon_j > \tau - \bb_j^\T \bbf  )t \, dt + 2 \e_{\bbf} \int_{\tau - \bb_j^{{\rm T}} \bbf}^\infty I(  \varepsilon_j > t)  t \, dt \nn \\
	& =  (\tau - \bb_j^\T \bbf)^2  \PP_{\bbf}(  \varepsilon_j > \tau - \bb_j^\T \bbf )    + 2 \int_{\tau - \bb_j^{{\rm T}} \bbf}^\infty \PP(  \varepsilon_j > t)  t \, dt  . \nn
\end{align}
Analogously, it can be shown that
\begin{align}
	 \e_{\bbf}  \{ \varepsilon_j^2 I( \bb_j^\T \bbf + \varepsilon_j <-\tau )  \}  = (\tau + \bb_j^\T \bbf)^2 \PP_{\bbf} ( \varepsilon_j < - \tau - \bb_j^{{\rm T}} \bbf ) + 2 \int_{\tau + \bb_j^{{\rm T}} \bbf}^\infty \PP(- \varepsilon_j >t )  t \, dt  . \nn
\end{align}
Together, the last three displays imply
\begin{align}
	  0  & \geq   \e_{\bbf} \{  \psi_{\tau}(\xi_j) - \bb_j^\T \bbf  \}^2 - \e ( \varepsilon_j^2) \nn \\
	& \geq  - 2 \int_{\tau - |\bb_j^{{\rm T}} \bbf|}^{\infty} \PP ( |\varepsilon_j | > t  ) t \, dt   \geq  - 2 \e ( | \varepsilon_j|^\kappa ) \int_{\tau - |\bb_j^{{\rm T}} \bbf|}^{\infty}  t^{1-\kappa} \, dt   = - \frac{2}{\kappa-2} \frac{ \e ( | \varepsilon_j|^\kappa ) }{( \tau - |\bb_j^{{\rm T}} \bbf | )^{ \kappa-2}}   . \nn
\end{align}
Combining this with \eqref{var.dec} and \eqref{approxi.mean} proves \eqref{approxi.var}.

Finally, we study the covariance $\cov_{\bbf} (\psi_\tau(\xi_j), \psi_\tau(\xi_k) ) $ for $j\neq k$. By definition,
\begin{align}
	&  \cov_{\bbf}  (  \psi_\tau(\xi_j),\psi_\tau(\xi_k) )  \nn \\
	& = \e_{\bbf} \{   \psi_\tau(\xi_j) -  \bb_j^\T \bbf  + \bb_j^\T \bbf - \e_{\bbf}  \psi_\tau(\xi_j)  \}    \{  \psi_\tau( \xi_k) -  \bb_k^\T \bbf  + \bb_k^\T \bbf - \e_{\bbf}  \psi_\tau( \xi_k) \} \nn \\
	& =   \underbrace{ \e_{\bbf}  \{ \psi_\tau( \xi_j) -  \bb_j^{{\rm T}} \bbf  \} \{ \psi_\tau( \xi_k) -  \bb_k^\T \bbf \} }_{\Pi_1}    -   \underbrace{ \{  \e_{\bbf}  \psi_\tau( \xi_j) -  \bb_j^{{\rm T}} \bbf  \}   \{ \e_{\bbf}  \psi_\tau( \xi_k)  -  \bb_k^\T \bbf     \} }_{\Pi_2} . \nn
\end{align}
Recall that $\psi_\tau( \xi_j) - \bb_j^\T \bbf=  \varepsilon_j I(| \xi_j| \leq \tau ) + (\tau - \bb_j^\T \bbf) I(\xi_j >\tau) - (\tau+ \bb_j^\T \bbf) I( \xi_j < -\tau)$. Hence,
\begin{align}
	\Pi_1  & = \e_{\bbf}  \varepsilon_j \varepsilon_k  I( |\xi_j| \leq \tau, |\xi_k |\leq \tau ) +(\tau - \bb_k^\T \bbf)  \e_{\bbf} \varepsilon_j  I( |\xi_j | \leq \tau , \xi_k >\tau ) \nn \\
	& \quad - (\tau + \bb_k^\T \bbf) \e_{\bbf}  \varepsilon_j I( |\xi_j | \leq \tau, \xi_k < -\tau ) + (\tau - \bb_j^\T \bbf) \e_{\bbf} \varepsilon_k I( \xi_j > \tau, |\xi_k | \leq \tau ) \nn \\
	& \quad + (\tau - \bb_j^\T \bbf)(\tau - \bb_k^\T \bbf) \e_{\bbf}	I(  \xi_j   > \tau, \xi_k > \tau ) - ( \tau - \bb_j^\T \bbf )(\tau  + \bb_k^\T \bbf)  \e_{\bbf} I(  \xi_j   > \tau, \xi_k < - \tau  ) \nn \\
	& \quad - (\tau  + \bb_j^\T \bbf )\e_{\bbf} \varepsilon_k  I(  \xi_j < -\tau,  |\xi_k | \leq \tau ) - (\tau+ \bb_j^\T \bbf)( \tau - \bb_k^\T \bbf ) \e_{\bbf}  I( \xi_j < -\tau, \xi_k >\tau ) \nn \\
	& \quad + (\tau + \bb_j^\T \bbf)(\tau+ \bb_k^\T \bbf) \e_{\bbf} I(  \xi_j <-\tau, \xi_k < -\tau ).  \label{cov.I.dec}
\end{align}
Note that the first term on the right-hand side of \eqref{cov.I.dec} can be written as
\begin{align}
	 & \e_{\bbf} \varepsilon_j \varepsilon_k  I( |\xi_j| \leq \tau, |\xi_k |\leq \tau )    \nn \\
	 & = \cov( \varepsilon_j, \varepsilon_k ) -  \e_{\bbf} \varepsilon_j \varepsilon_k I( |\xi_j| > \tau  ) - \e_{\bbf} \varepsilon_j \varepsilon_k  I(  |\xi_k | > \tau ) + \e_{\bbf} \varepsilon_j \varepsilon_k  I( |\xi_j| > \tau, |\xi_k | > \tau ) , \nn
\end{align}
where
\begin{align}
	|  \e_{\bbf} \varepsilon_j \varepsilon_k I( |\xi_j| > \tau )  |  \leq  | \tau - \bb_j^\T \bbf |^{2-\kappa} \e (  | \varepsilon_j |^{\kappa-1} | \varepsilon_k| ) \leq  2^{\kappa-2} \tau^{2-\kappa}  ( \e | \varepsilon_j|^\kappa  )^{(\kappa-1)/\kappa}  ( \e | \varepsilon_k|^\kappa  )^{1/\kappa}  \nn
\end{align}
and
\begin{align}	
	& | \e_{\bbf} \varepsilon_j \varepsilon_k  I( |\xi_j| > \tau, |\xi_k | > \tau ) | \nn \\
	&  \leq | \tau - \bb_j^\T \bbf |^{2-\kappa} \e  ( |\varepsilon_j|^{\kappa/2} |\varepsilon_k|^{\kappa/2}   ) \leq  2^{\kappa-2} \tau^{2-\kappa}   ( \e | \varepsilon_j |^\kappa  )^{1/2}  ( \e | \varepsilon_k |^\kappa  )^{1/2} \nn
\end{align}
almost surely on $\mathcal{G}_{jk}$. The previous three displays together imply
\begin{align}
 | \e_{\bbf} \varepsilon_j \varepsilon_k I( |\xi_j| \leq \tau, |\xi_k |\leq \tau  )  - \cov(\varepsilon_j, \varepsilon_k )  | \lesssim \tau^{2-\kappa}  \nn
\end{align}
almost surely on $ \mathcal{G}_{jk}$. For the remaining terms on the right-hand side of \eqref{cov.I.dec}, it can be similarly obtained that, almost surely on $\mathcal{G}_{jk}$,
\begin{align}
	  |  \e_{\bbf} \varepsilon_j  I( |\xi_j | \leq \tau , \xi_k >\tau  )   | \leq |  \tau - \bb_k^\T \bbf |^{1-\kappa } \e ( | \varepsilon_j|  | \varepsilon_k|^{\kappa-1}  ) , \nn \\
 |  \e_{\bbf} \varepsilon_j I( |\xi_j | \leq \tau , \xi_k < - \tau ) |  \leq  |  \tau +  \bb_k^\T \bbf |^{1-\kappa} \e ( | \varepsilon_j| | \varepsilon_k |^{\kappa-1}  ),  \nn \\
	\mbox{ and }~\e_{\bbf}  I( \xi_j > \tau, \xi_k < -\tau  ) \leq  | \tau - \bb_j^\T \bbf  |^{-\kappa/2}   | \tau  + \bb_k^\T \bbf  |^{-\kappa/2}   \e ( | \varepsilon_j \varepsilon_k|^{\kappa/2}  ) . \nn
\end{align}

Putting together the pieces, we get $| \Pi_1 -\cov(\varepsilon_j, \varepsilon_k )  | \lesssim \upsilon_{jk}\tau^{2-\kappa}$ almost surely on $\mathcal{G}_{jk}$. For $\Pi_2$, it follows directly from \eqref{approxi.mean} that $|\Pi_2| \lesssim \upsilon_{jk}^2\tau^{ 2 -2 \kappa } $ almost surely on $\mathcal{G}_{jk}$. These bounds, combined with the fact that $ \cov_{\bbf} ( \psi_\tau(\xi_j), \psi_\tau(\xi_k) ) =\Pi_1 - \Pi_2$, yield \eqref{approxi.cov}. \qed

\subsection{Proof of Lemma~\ref{factor.concentration}} \label{secC6}

For any $\bu \in \RR^K$, by independence we have
\begin{align}
	  \e \exp(\bu^\T \bbf_i ) \leq \exp(C_1 \| \bbf\|_{\psi_2}^2 \| \bu \|_2^2 ) ~\mbox{ for all } i=1,\ldots, n, \\ \nn
	 \mbox{and }~ \e \exp( \sqrt{n}\,\bu^\T  \bar{\bbf} ) = \prod_{i=1}^n \e \exp  (\bu^\T \bbf_i / \sqrt{n}  )  \leq  \exp ( C_1 \|\bbf\|_{\psi_2}^2 \| \bu \|_2^2  ), \nn
\end{align}
where $C_1>0$ is an absolute constant. From Theorem~2.1 in \cite{HKZ2012} we see that, for any $t>0$,
\begin{align}
	 \PP \{ \|  \sqrt{n} \bar{\bbf} \|_2^2 >  2C_1\| \bbf \|_{\psi_2}^2 (  K + 2\sqrt{ K  t} + 2t  )  \} \leq e^{-t} \nn \\
	 \mbox{and }~ \PP\{ \| \bbf_i \|_2^2 > 2C_1 \| \bbf \|_{\psi_2}^2(K+2\sqrt{Kt} + 2t) \} \leq e^{-t}, \ \ i=1,\ldots, n.\nn
\end{align}
This proves \eqref{factor.concentration.ineq} and \eqref{factor.l2.ineq} by the union bound.

For $\hat{\bSigma}_f$, applying Theorem~5.39 in \cite{V2012} yields that, with probability at least $1-2 e^{-t}$, $\| \hat{\bSigma}_f - \Ib_K \| \leq \max(\delta, \delta^2)$, where $\delta = C_2 \| \bbf\|^2_{\psi_2} n^{-1/2}(K+t)^{1/2}$ and $C_2 > 0$ is an absolute constant. Conclusion \eqref{factor.cov.concentration} then follows immediately. \qed

\subsection{Proof of Lemma~\ref{infinity.perturbation}} \label{secC7}

For each $1\leq \ell \leq K$, as $\overline{\lambda}_\ell >0$ and by Weyl's inequality, we have $| \wt \lambda_\ell - \overline \lambda_\ell | \leq | \lambda_\ell( \hat{\bSigma}_{{\rm H}} ) - \overline{\lambda}_\ell | \leq \|  \hat{\bSigma}_{{\rm H}} - \bSigma \| + \| \bSigma_\varepsilon \|$. Moreover, note that for any matrix $\Eb \in \RR^{d_1 \times d_2}$,
$$
	\| \Eb \|^2  \leq \| \Eb \|_{1}  \vee \| \Eb \|_{ \infty}  \leq  (d_1 \vee d_2 )\| \Eb \|_{\max}.
$$
Putting the above calculations together proves \eqref{eigenvalue.perturb}.

Next, note that
$$
 \hat{\bSigma}_{{\rm H}}  =  \hat{\bSigma}_{{\rm H}}  - \bSigma +  \Bb \Bb^\T +  \bSigma_{\varepsilon} =  \sum_{\ell=1}^K \overline{\lambda}_\ell \overline{\bv}_\ell\overline{\bv}_\ell^\T +  \hat{\bSigma}_{{\rm H}}  - \bSigma +  \bSigma_{\varepsilon}  .
$$
Under Assumption~2, it follows from Theorem~3 and Proposition~3 in \cite{FWZ2016} that
$$
	\max_{1\leq \ell \leq K} \| \hat{\bv}_\ell - \overline{\bv}_\ell \|_\infty \leq  \frac{C}{ p^{3/2}}   ( \|   \hat{\bSigma}_{{\rm H}}  - \bSigma\|_\infty + \| \bSigma_\varepsilon \|_\infty  ) \leq  C (  p^{-1/2}  \| \hat{\bSigma}_{{\rm H}} - \bSigma\|_{\max}  + p^{-1}  \| \bSigma_\varepsilon \|   ),
$$
where we use the inequalities $\|   \hat{\bSigma}_{{\rm H}}  - \bSigma\|_\infty \leq p  \|   \hat{\bSigma}_{{\rm H}}  - \bSigma\|_{\max}$ and $\| \bSigma_\varepsilon \|_\infty \leq p^{1/2} \| \bSigma_\varepsilon \|$ in the last step and $C >0$ is a constant independent of $(n,p)$. This proves \eqref{eigenvector.perturb} . \qed

\section{Additional numerical results on FDP/FDR control}
\label{sec:B}

In the end, we present some additional simulation results that complement Section 4.5.
Under Models 2 and 3 defined in Section 4.2,  we compare the numerical performance of the five tests regarding FDP/FDR control.
We take $\alpha=0.05$, $p=500$ and let $n$ gradually increase from 100 to 200.
The empirical FDP is defined as the average false discovery proportion based on 200 simulations. The simulation results are presented in Figures \ref{Sim_fig_FDP_2} and \ref{Sim_fig_FDP_3}, respectively.

\begin{figure}[htbp]
 \centering
 \includegraphics[scale=0.4]{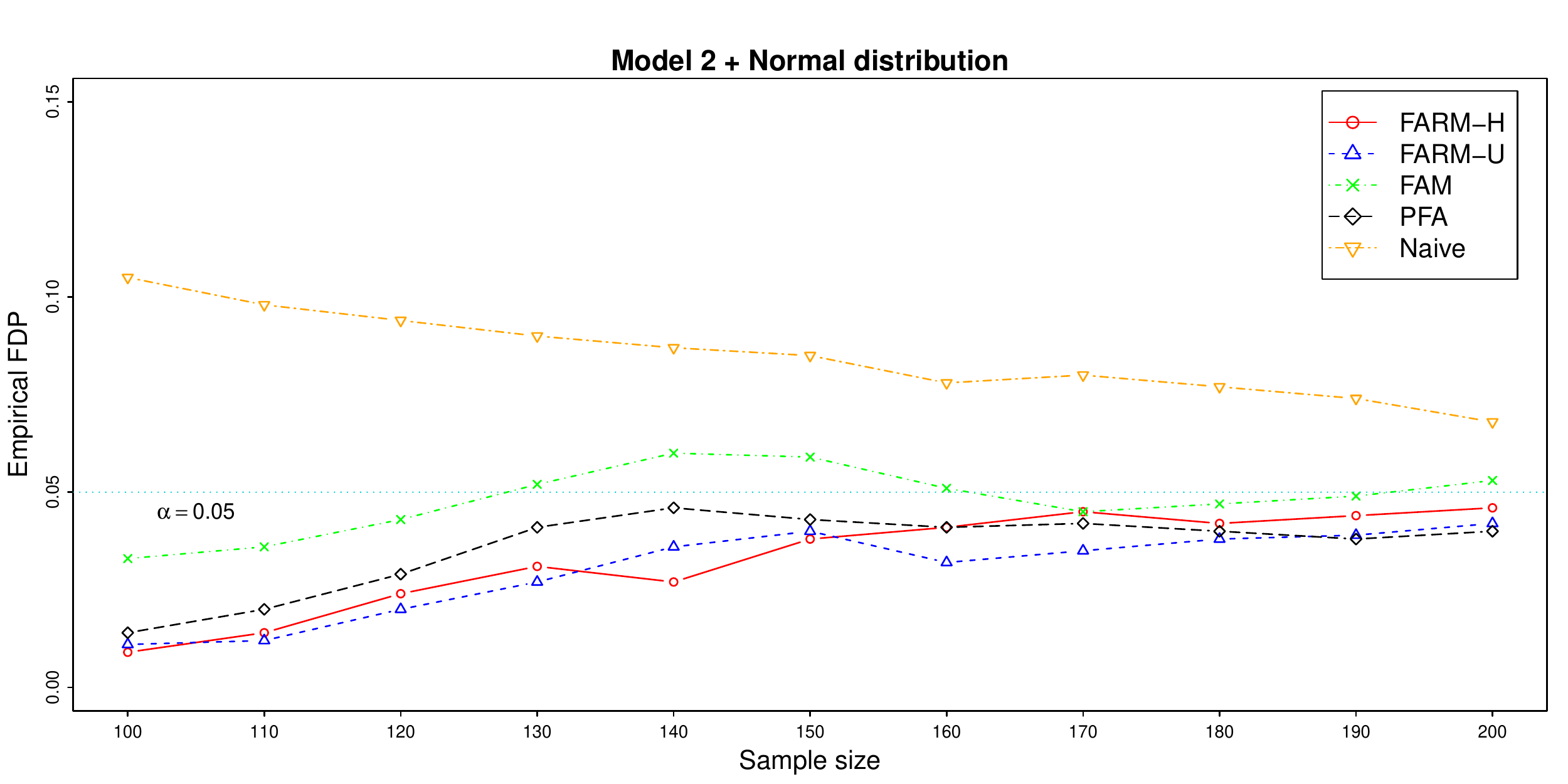}
 \includegraphics[scale=0.4]{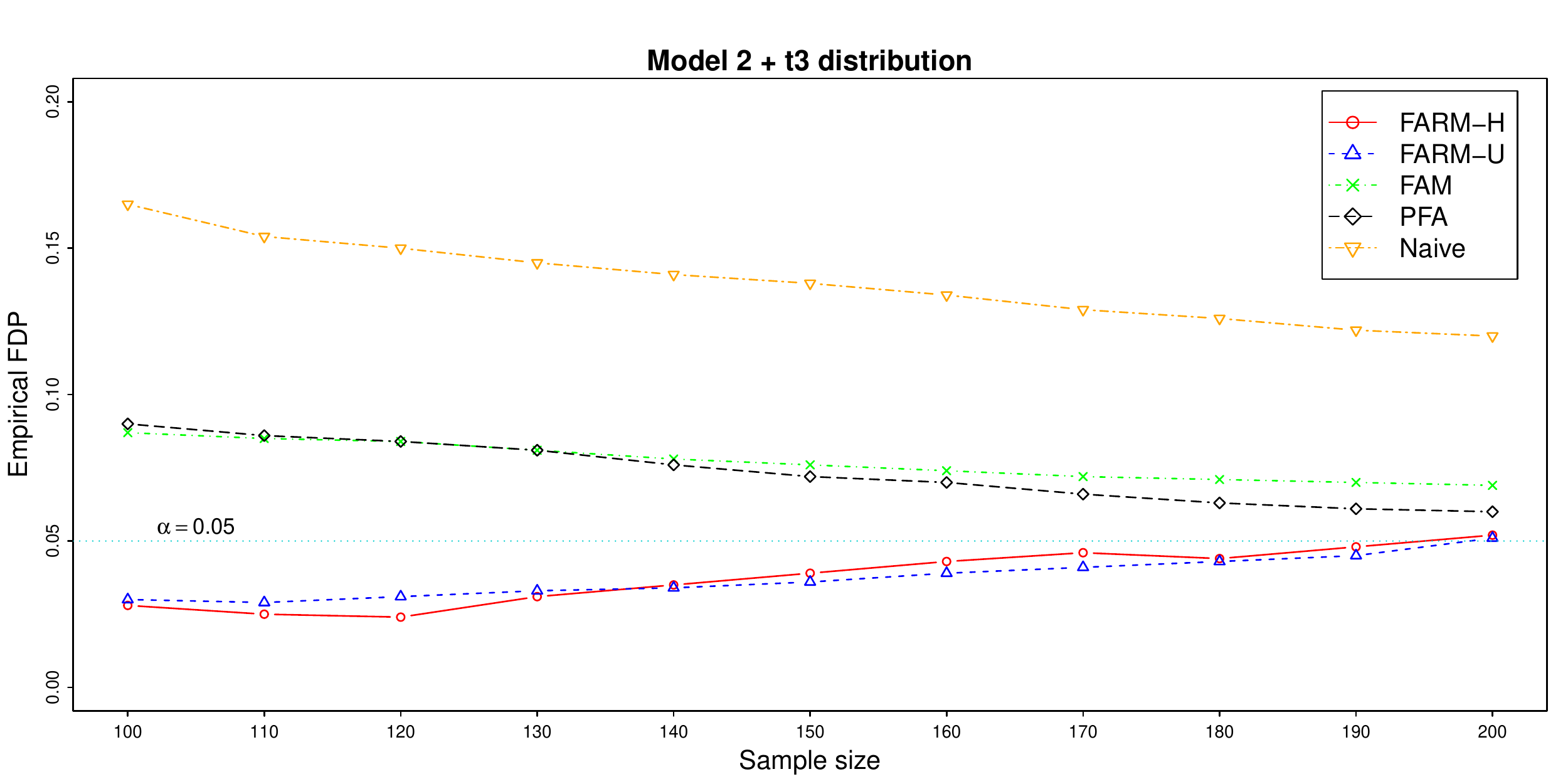}
 \includegraphics[scale=0.4]{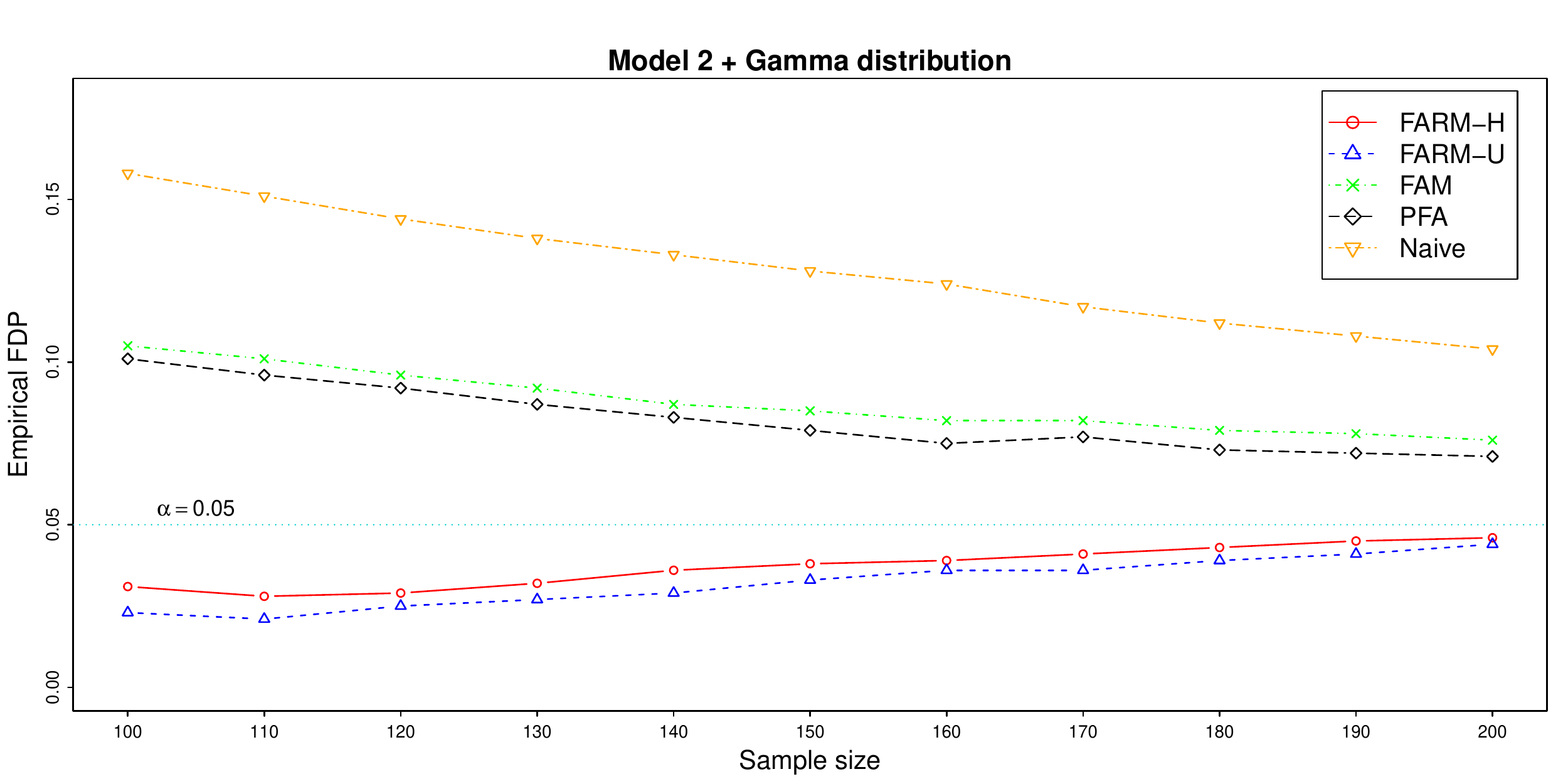}
 \includegraphics[scale=0.4]{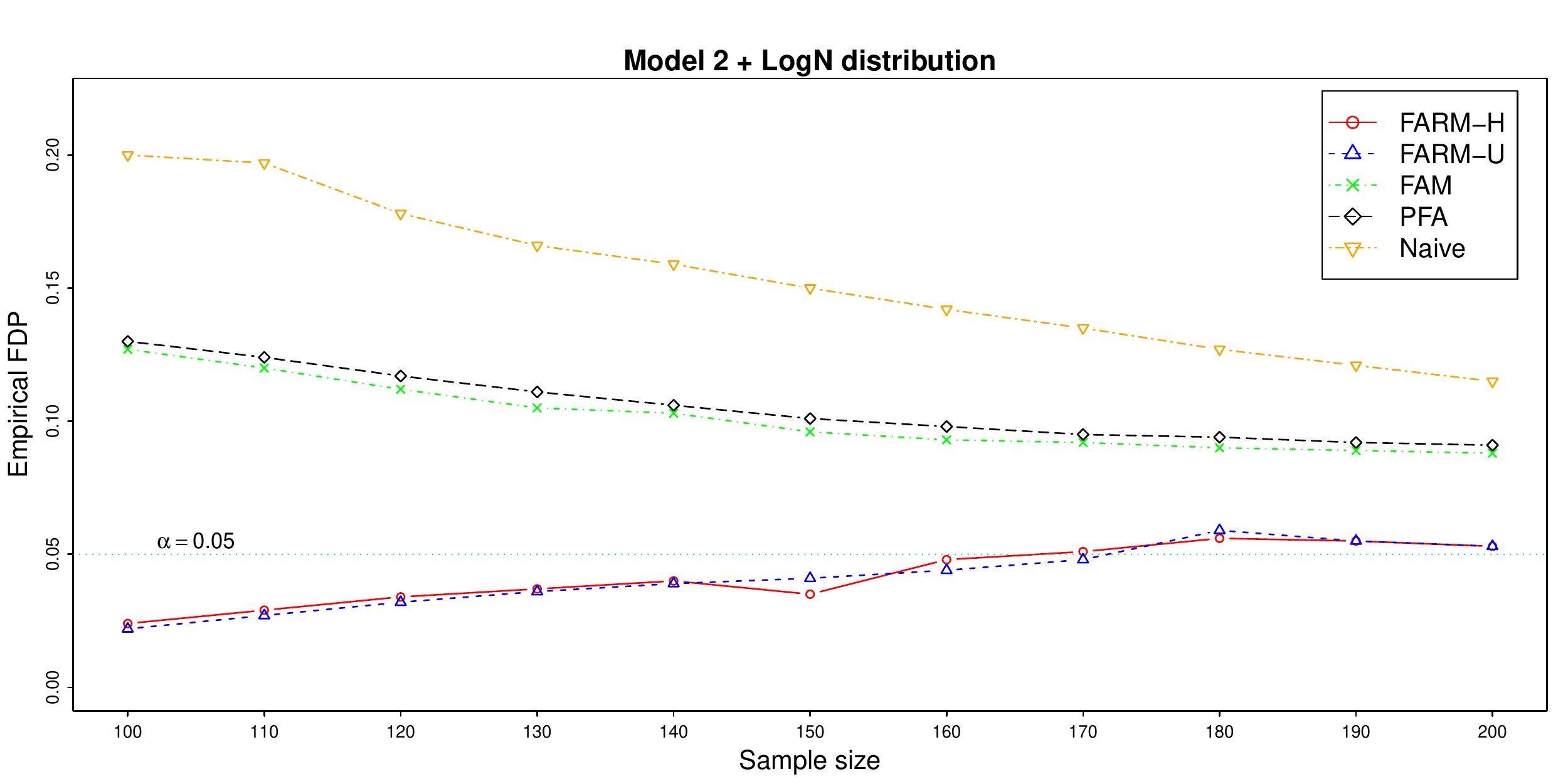}
  \begin{singlespace}
  \caption{Empirical FDP versus sample size for the five tests at level $\alpha=0.05$. The data are generated from Model 2 with $p=500$ and sample size $n$ ranging from 100 to 200 with a step size of 10. The panels from top to bottom correspond to the four error distributions in Section 4.2. } \label{Sim_fig_FDP_2}
  \end{singlespace}
\end{figure}

\begin{figure}[htbp]
 \centering
 \includegraphics[scale=0.4]{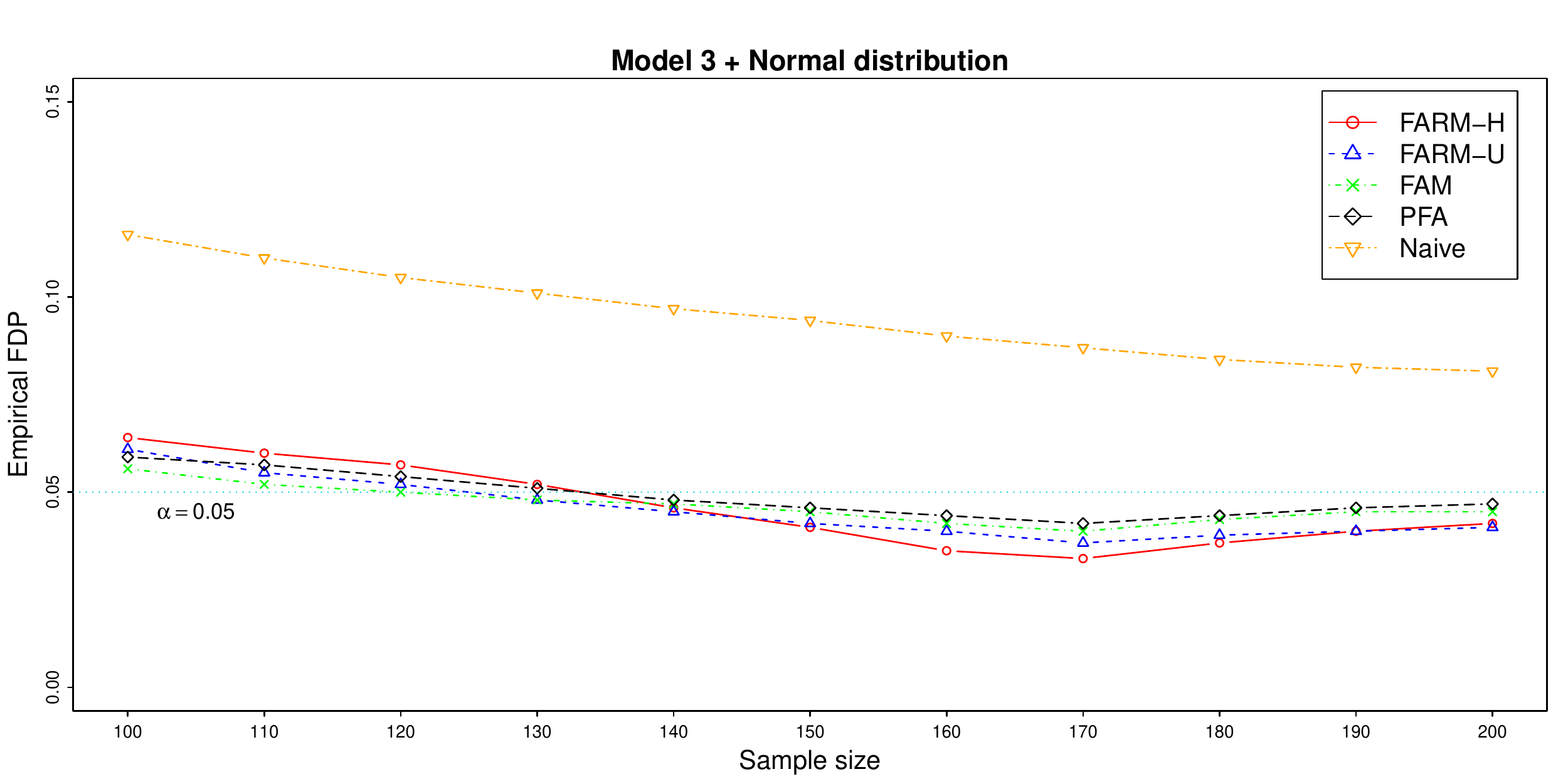}
 \includegraphics[scale=0.4]{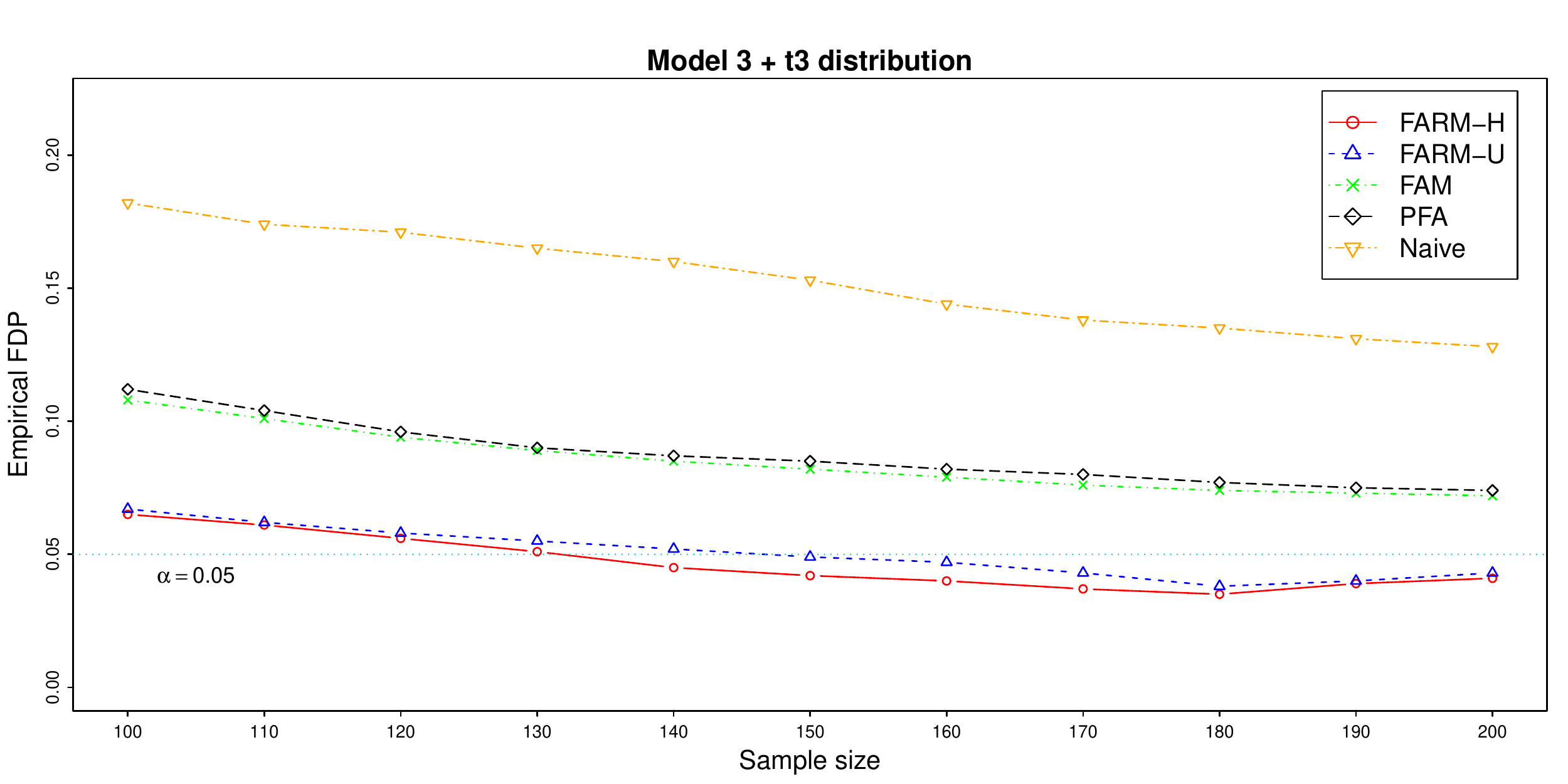}
 \includegraphics[scale=0.4]{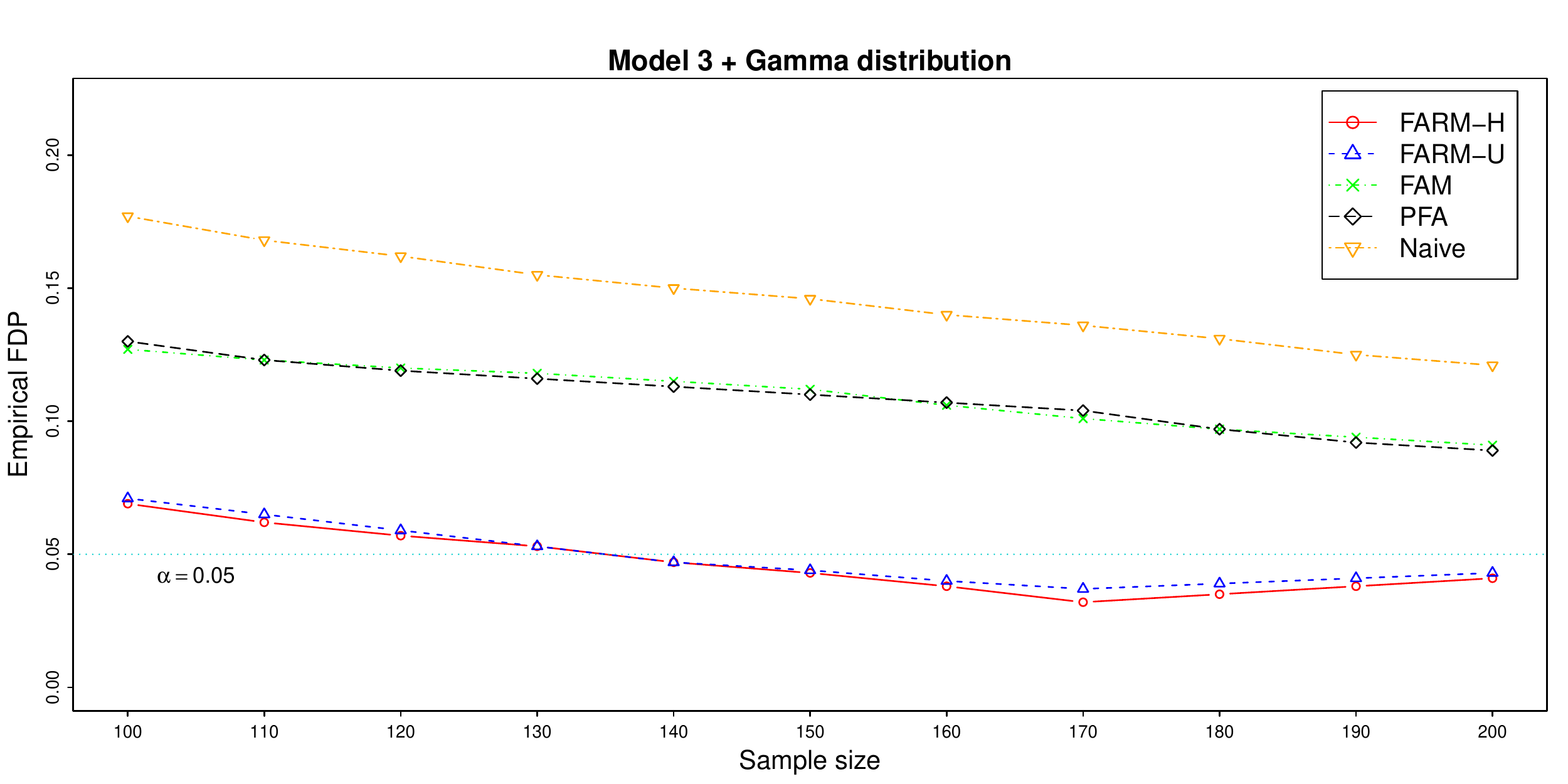}
 \includegraphics[scale=0.4]{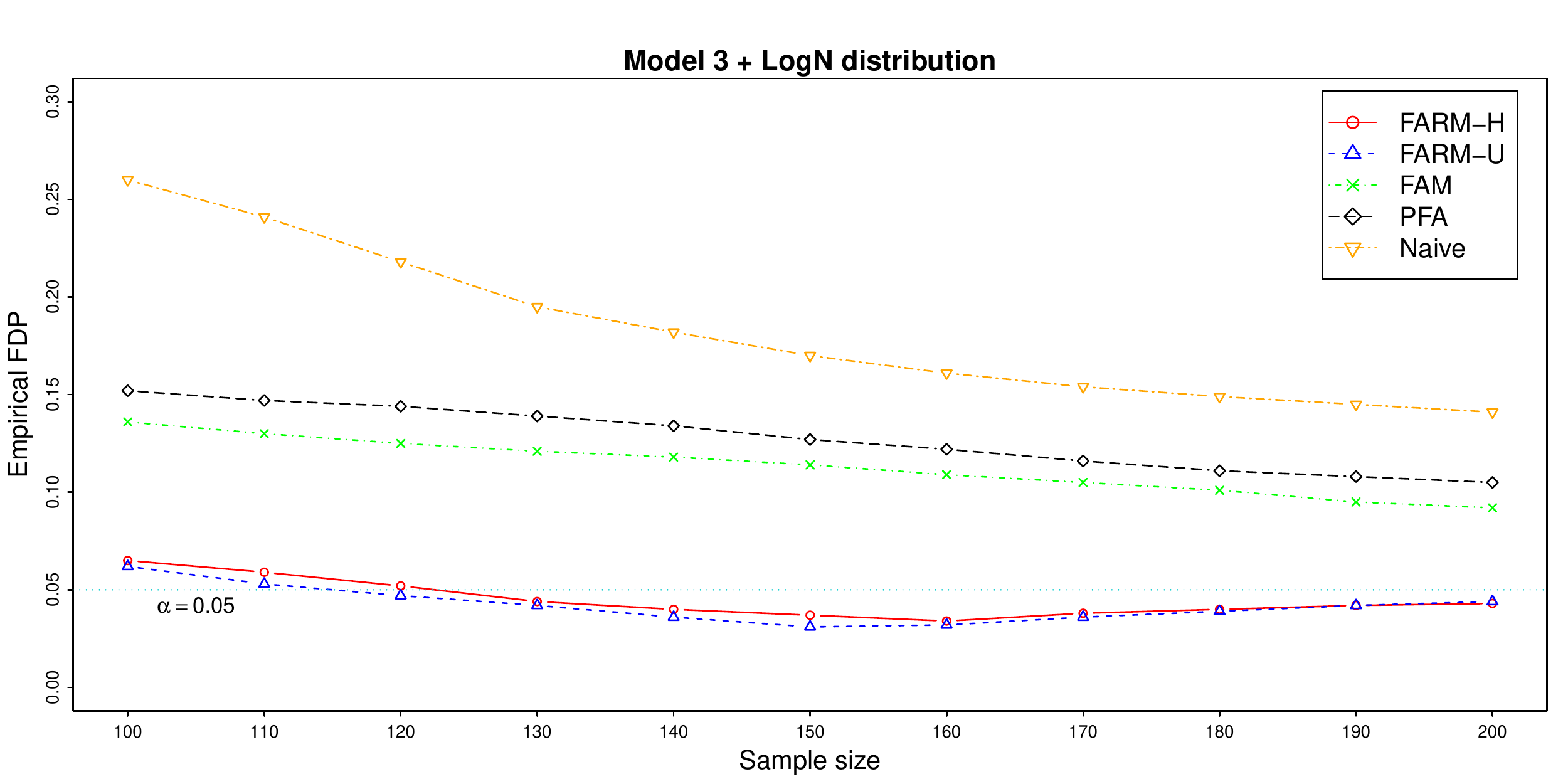}
  \begin{singlespace}
  \caption{Empirical FDP versus sample size for the five tests at level $\alpha=0.05$. The data are generated from Model 3 with $p=500$ and sample size $n$ ranging from 100 to 200 with a step size of 10. The panels from top to bottom correspond to the four error distributions in Section 4.2. } \label{Sim_fig_FDP_3}
  \end{singlespace}
\end{figure}

\newpage

\end{document}